\documentclass[11pt,letterpaper]{article}

\usepackage{style}

\usepackage{bbm}
\usepackage{algpseudocode}
\usepackage{algorithm}
\usepackage{dsfont}
\usepackage{bbold}
\usepackage{textcase}

\newcommand{\FF}{\mathcal{F}}
\newcommand{\EE}{\mathcal{E}}
\newcommand{\MM}{\mathcal{M}}

\let\Var\undefined\DeclareMathOperator{\Var}{Var}
\newcommand{\FVC}{\mathcal{FC}}

\newcommand{\congest}{\textsc{Congest}\xspace}

\newcommand{\alg}{\textsf{ALG}\xspace}
\newcommand{\supported}{\textsc{Supported}\xspace}

\newcommand{\maxExpDeg}{\ensuremath{\bar\Delta}}
\newcommand{\Ugly}{C}

\crefname{observation}{Observation}{Observations}

\AddCommenter{g}{yellow}
\AddCommenter{d}{white!50!green}
\AddCommenter{k}{white!50!purple}
\AddCommenter{a}{white!90!blue}
\AddCommenter{t}{white!75!orange}

\title{Distributed Stochastic Graph Algorithms}
\date{}
\author{Keren Censor-Hillel\\
	\small Technion, Israel \\
	\small ckeren@cs.technion.ac.il \\
	\and
    Aditi Dudeja\\
    \small The Chinese University of Hong Kong, Shenzhen\\
	\small aditidudeja@cuhk.edu.cn
    \and 
    George Giakkoupis\\
    \small  Inria, France\\
	\small george.giakkoupis@inria.fr
}

\begin{document}
\maketitle
\begin{abstract}
We study stochastic graph optimization problems in a novel distributed setting. As in the standard centralized setting, a random subgraph $G^\ast$ of a known base graph $G$ is realized by including each edge $e$ independently with a known probability $p_e$, and we must solve an optimization problem on $G^\ast$ despite uncertainty about its edges. In the standard setting, to cope with this uncertainty, the algorithm can query any edge of $G$ to learn if the edge exists in $G^\ast$, and its complexity is the number of queried edges. The distributed setting incorporates uncertainty in a natural manner, by having each vertex know only about its own edges in $G^\ast$ (and only communicate over them), and the complexity is measured by the number of synchronous communication rounds.

We establish that distributed stochastic algorithms can be drastically faster than their non-stochastic counterparts and overcome known lower bounds, by showing fast distributed approximation algorithms for maximum  matching, minimum vertex cover, and minimum dominating set. Specifically:

\begin{itemize}
    \item Within two rounds of single-bit messages we obtain a constant approximation for maximum  matching, and a single round suffices for an $O(\log \bar\Delta)$-approximation algorithm for minimum dominating set, where $\bar\Delta$ is the maximum expected degree of $G^\ast$. Our most novel technical contribution is a constant approximation algorithm for minimum vertex cover which uses no communication at all.

    \item Allowing more, but still a constant number of rounds enables better approximations:
    For maximum matching we provide a $(0.68-\epsilon)$-approximation and a $(0.73-\epsilon)$-approximation algorithms for general and bipartite graphs,  respectively, running in $\poly(1/\epsilon)$ rounds; and for minimum vertex cover we give a $(2+\epsilon)$-approximation algorithm running in $\poly(1/\epsilon)$ rounds.
\end{itemize}
Some of these results rely on previous works on the standard stochastic setting, namely, [Behnez\-had et al., SODA 2022] and [Derakhshan and Saneian, ICALP 2025], while others require new algorithmic ideas.

\end{abstract}

\section{Introduction}

Stochastic algorithms address a setting in which a problem needs to be solved over some random sample of a given data set. In particular, there has been extensive research on stochastic graph algorithms \cite{BlumDHPSS15,AssadiKL17,BehnezhadR18,AssadiB19,AssadiKL19,BehnezhadD0HR19,BehnezhadFHR19,BehnezhadD20,BehnezhadDH20,BehnezhadBD22,DerakhshanF23,DughmiKP23,AzarmehrBGR25,DerakhshanS25,BehnezhadBD22,DerakhshanDH23,DerakhshanSX25}, where each edge $e$ of a known base graph $G$ has some known probability $p_e$ of being present in a sampled \emph{realization} $G^\ast$ of $G$, and the graph problem needs to be solved over $G^\ast$. The challenge for a stochastic algorithm is the uncertainty about which edges of $G$ actually get realized in $G^\ast$. To this end, the algorithm can query (non-adaptively) 
any set of edges to learn this information, and the complexity of a stochastic graph algorithm is typically measured by the number of edges it queries.

In this paper, we introduce and study a distributed variant of the centralized stochastic setting: 
Nodes know the base graph $G$ and the edge realization probabilities $p_e$, as in the centralized setting.
In addition, each node knows its realized edges and can exchange messages (only) through these edges,
and the complexity is measured by the number of communication rounds.

This model is motivated by the observation that the traditional assumption in distributed graph algorithms that nodes do not have any information about the graph (except possibly for their set of direct neighbors), does not reflect modern networked systems, such as large-scale data centers or content delivery networks, where the underlying physical infrastructure -- the base graph $G$ -- is often static and known to all participants, and the  challenge is that link availability and node congestion fluctuate according to unpredictable, stochastic processes. 
The model we propose is perhaps one of the simplest and most natural abstractions for leveraging the history of a network with the goal of improving algorithmic complexity. 
In practice, realization probabilities $p_e$ can be viewed as predictions generated by machine learning models trained on historical traffic patterns.
Next we describe the model in detail.

\paragraph{Distributed Stochastic Model.}
In the classic distributed models, $n$ vertices in a network communicate in synchronous rounds, in each of which every vertex can send a message to each of its neighbors. When considering distributed algorithms for a stochastic setting, we follow the paradigm of the centralized stochastic setting, which allows the algorithm to preprocess the base graph.
Thus, in the distributed stochastic model, the base graph $G=(V,E)$ (i.e., the network) is known to all vertices. For a graph problem $P$, its stochastic version is the problem $P$ over a random realization $G^\ast$, which is a subgraph of $G$ obtained by including every edge $e\in E(G)$ in it independently with probability $p_e$, where the values $p_e$ are known to all vertices. The uncertainty about which edges actually appear in $G^\ast$ is incorporated into the distributed model in a natural manner: each vertex $v$ knows only the set $N_{G^\ast}(v)$ of its neighbors in $G^\ast$, and can communicate only over these edges. A distributed stochastic algorithm may preprocess the base network graph $G$, and its complexity is its number of communication rounds after $G^*$ is realized.\footnote{More generally, instead of knowing the complete base graph $G$, one may allow each vertex $v$ to know (and preprocess) only the subgraph of $G$ induced by $v$'s $h$-hop neighborhood in $G$. We believe that some of our algorithms can be adapted to this more refined model, for $h = \poly \log(n)$.}

We demonstrate that the distributed stochastic model is remarkably powerful, allowing us to bypass lower bounds that apply to classical distributed algorithms. 
Namely, we provide distributed approximation algorithms for maximum matching, minimum vertex cover, and minimum dominating set, that require a constant number of  communication rounds (and sometimes no communication at all). 
Achieving comparable approximations by classical distributed algorithms requires a super-constant number of communication rounds.

\subsection{Our Contribution}

For stochastic algorithms, an $\alpha$-approximation refers to the ratio between the expected size of the solution and the expected size of an optimal solution (both for $G^\ast$).

The round complexity of our algorithms is either a small constant, or a constant that depends on an approximation constant $\epsilon$.
Notably, the round complexity does not depend on $n$ nor on the realization probabilities $p_e$.
In contrast, in the centralized setting it is known that that the total number of queries must depend on both (as well as on $\epsilon$) \cite{AssadiKL19}.
 All of our algorithms use messages of size at most $O(\log{n})$ bits as in the distributed \congest model, and many of them only need single-bit messages.
 
For all three optimization problems, our results can be contrasted with the non-stochastic distributed setting, in which any polylogarithmic approximation needs a number of rounds that is $\Omega(\log\Delta/\log\log\Delta)$ or $\Omega(\sqrt{\log{n}/\log\log{n}})$ \cite{KuhnMW16}.

\paragraph{Minimum Vertex Cover.} Our most novel technical contribution is in a constant approximation for minimum vertex cover, which can be obtained without communication.

\begin{restatable}{theorem}{VCnoRounds}
\label{thm:VCnoRounds}
    There is a deterministic algorithm for the distributed stochastic minimum vertex cover problem that outputs a $3.44$-approximation and requires no communication between vertices.
    
\end{restatable}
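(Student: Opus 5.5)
We will look for an algorithm in which every vertex $v$ computes, in the preprocessing phase, a set $A_v\subseteq N_G(v)$ of base-graph edges it is ``responsible'' for. Every edge $uv\in E(G)$ is assigned to at least one endpoint. After realization, $v$ joins the cover iff some edge of $A_v$ is realized in $G^\ast$. Since $v$ knows $N_{G^\ast}(v)$, this needs no communication. Feasibility is immediate: a realized edge $uv$ lies in $A_u$ or $A_v$, so the responsible endpoint joins. The rule is deterministic once the assignment is fixed, and the assignment is a deterministic function of $G$ and $(p_e)$. The expected cost is $\sum_v \bigl(1-\prod_{e\in A_v}(1-p_e)\bigr)\le \sum_v \min\{1,\sum_{e\in A_v}p_e\}$. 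The case $A_v=N_G(v)$ means $v$ joins whenever it has any realized edge, which is essentially ``always join''.

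The plan has three steps. First, we write an LP relaxation of the problem of choosing the assignment. Variables $y_v\in[0,1]$ indicate that $v$ is ``heavy'' (it takes all its edges). Variables $z_{e,v}\ge 0$ indicate that a light $v$ takes $e$, with $z_{e,u}+z_{e,v}+y_u+y_v\ge 1$. The objective is $\sum_v\bigl(y_v+\sum_e p_e z_{e,v}\bigr)$, which upper bounds the cost proxy $\sum_v\min\{1,\cdot\}$. Second, we round the optimal LP solution deterministically by thresholding. With a threshold $\theta$, we make $v$ heavy if $y_v\ge\theta$, and otherwise give each edge to an endpoint with $z_{e,\cdot}$ or $y_\cdot$ large. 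Each edge then loses at most a factor depending on $\theta$. We will optimize $\theta$ together with the constants in the third step, and we expect $3.44$ to come out of this balancing. A candidate source of the constant is a factor like $1-1/e$ from the $1-\prod(1-p_e)$ estimate combined with a threshold loss.

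Third, and this is the main obstacle, we must lower bound $\mathbb{E}[\mathrm{OPT}(G^\ast)]$ by a constant fraction of the LP value. We will use LP duality. The dual is a fractional packing $x_e\ge 0$ with $x_e\le p_e$ (from the $z$ constraints) and $\sum_{e\ni v}x_e\le 1$ (from the $y$ constraints). This is a fractional matching dominated by the edge probabilities. We then need the statement $\mathbb{E}[\mathrm{OPT}(G^\ast)]\ge\mathbb{E}[\nu(G^\ast)]\ge c\sum_e x_e$ for such $x$. To prove it, we will construct a random matching in $G^\ast$ in which each edge $e$ is matched with probability at least $c\,x_e$. Concretely, every realized edge $e$ is independently activated with probability $x_e/p_e$, and activated edges isolated among activated edges are kept. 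This gives $\Pr[e \text{ kept}]\ge x_e\prod_{f\sim e}(1-x_f)\ge x_e e^{-2}$-type bounds.

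That crude bound is too lossy for $3.44$. We would first try to sharpen it with a contention-resolution scheme, which gives $c$ around $1/2$ or better on fractional matchings. We would also exploit that vertex cover, unlike matching, can charge heavy vertices directly: a vertex with $\sum_e x_e$ close to $1$ is likely to have a realized edge and is cheap to pay for. If a direct matching bound does not reach the constant, the fallback is to compare against the fractional vertex cover of $G^\ast$. We would argue via the same dual that $\mathbb{E}[\tau_f(G^\ast)]$ dominates the LP up to the contention-resolution constant, and then multiply by the rounding loss from the second step.
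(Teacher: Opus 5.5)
Your algorithmic template is exactly the paper's: each base edge gets a responsible endpoint, and $v$ joins iff one of its assigned edges is realized. Feasibility, determinism and zero communication are therefore fine. The gap is the constant. Nothing in your plan derives $3.44$, and the route you describe cannot reach it.

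You intend to show $\mathrm{cost}\le\rho\cdot\mathrm{LP}^\ast$ for the rounded assignment and $\Exp[\mathrm{OPT}(G^\ast)]\ge c\cdot\mathrm{LP}^\ast$, then multiply the two losses. Here, by duality, $\mathrm{LP}^\ast=\max\{\sum_e x_e : x \text{ a fractional matching of } G,\ x_e\le p_e\}$. Both losses are bounded away from what you need:
\begin{itemize}
\item \emph{Rounding loss.} Take $K_n$ with uniform $p$ and $np\to\infty$. Then $\mathrm{LP}^\ast=n/2$ (primal $y_v=1/2$, dual $x_e=1/(n-1)$). Yet every assignment has true cost $(1-o(1))n$: the vertices with fewer than $C/p$ assigned edges must cover all edges among themselves, so there are fewer than $2C/p+1$ of them, and every other vertex costs at least $1-e^{-C}$. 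Hence $\rho\ge 2$.
\item \emph{Lower-bound loss.} Take $K_n$ with $p=1/(n-1)$. Again $\mathrm{LP}^\ast=n/2$, but $G^\ast$ is essentially $G(n,1/n)$, for which $\nu\approx\tau\approx 0.272n$ (Karp--Sipser). So $c\le 0.544$, whichever lower bound on $\mathrm{OPT}$ you use: matchings, contention resolution, or $\tau_f$.
\end{itemize}
Any argument of the form you propose therefore gives at best $2/0.544\approx 3.68$.

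With the tools you actually have in hand it is worse:
\begin{itemize}
\item Thresholding $y_v$ at $1/4$ gives $\rho=4$. You only get $\rho=2$ if you threshold the side coverage $y_u+z_{e,u}$ at $1/2$.
\item Known contention-resolution schemes for matchings give $c\approx 0.47$.
\item Your per-edge bound $x_e\prod_{f\sim e}(1-x_f)\ge x_e e^{-2}$ is false when a neighbouring edge has $x_f$ close to $1-x_e$.
\end{itemize}
The hoped-for $1-1/e$ gain from comparing $1-\prod(1-p_e)$ with $\min\{1,\sum p_e\}$ is not uniform either. The two agree to first order when $\sum_{e\in A_v}p_e$ is small.

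The missing idea is to charge vertex by vertex against the optimum on $G^\ast$ itself, instead of against a relaxation whose dual sees only the marginals $p_e$. The paper proceeds as follows:
\begin{itemize}
\item It fixes a fractional cover $F$ of $G^\ast$ and sets $f_{vu}=\Exp[F_v\mid vu\in E^\ast]$. It assigns $uv$ to an endpoint with $f\ge 1/2$; one exists because $f_{vu}+f_{uv}\ge 1$.
\item For each $v$, it replaces the indicators $X_e$, $e\in E_v$, by Poisson variables $Y_e$ of mean $\ln\frac1{1-p_e}$. These satisfy $Y_e=0$ iff $X_e=0$, are mutually independent, and are conditionally independent of $F_v$ given $X_e$.
\item With $Y_v=\sum_{e\in E_v}Y_e$, this gives $\Pr(v\in C)=\Pr(Y_v\neq 0)$ and $\Exp[F_vY_v]\ge\frac12\Exp[Y_v]$.
\item An extremal argument follows: the minimum of $\Exp[F]$ over $F\in[0,1]$ with $\Exp[FY]\ge\frac12\Exp[Y]$ puts $F$ on the upper tail of $Y$. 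A numerical analysis over the mean $\lambda$, with worst case $\lambda\approx 1.678$, then gives $\Exp[F_v]\ge \Pr(Y_v\neq 0)/3.44$.
\end{itemize}
Summing over $v$ yields the theorem. No LP rounding or contention resolution is involved, and this per-vertex charging is exactly what avoids multiplying two separate worst-case losses.
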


Our algorithm, given in \Cref{sec:MVCnoCommunication}, \emph{associates} each edge $uv$ of $G$ with its endpoint, $u$ or $v$, that has the larger conditional probability of being in the optimal vertex cover of $G^\ast$, given that $uv$ is realized. (These probabilities can be computed from $G$ and the edge realization probabilities in a preprocessing phase.)
Note that the larger of these two probabilities must be at least $1/2$.
A vertex is then included in the vertex cover if and only if at least one of its associated edges is realized.
For intuition, suppose that all edges have the same realization probability $p$, and let $E_v$ denote the set of edges associated with vertex $v$.
The probability that $v$ is in the optimal vertex cover is bounded from below by the minimum probability with which $v$ must be in a vertex cover to ensure that 
for each of the (independently realized) edges $e\in E_v$, $v$ covers $e$ with probability at least $1/2$.

This minimum probability can be computed using properties of the binomial distribution.
It is at most a constant factor smaller than the probability $1-(1-p)^{|E_u|}$ that $v$ covers every realized edge $e\in E_v$ with probability $1$, which is the probability that $v$ is in the vertex cover output by our algorithm.
To handle the case of different edge realization probabilities, we use a refinement of this argument, which involves associating with each edge an independent Poisson random variable that is non-zero if and only if the edge is realized, and is used instead of a binary random variable.

We also show (in \Cref{sec:BetterApxMVC}) that we can obtain a $(2+\epsilon)$-approxi\-mation within $\poly(1/\epsilon)$ rounds.

\begin{restatable}{theorem}{VCInPolyRrounds}
\label{thm:VCInPolyRrounds}
There is a deterministic algorithm for the distributed stochastic minimum vertex cover problem that outputs a $(2+\epsilon)$-approximation in $\poly(1/\epsilon)$ rounds of single-bit messages.

\end{restatable}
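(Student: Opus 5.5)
The approach is to make the local-ratio / primal--dual $2$-approximation for vertex cover run in few rounds by using the known base graph: each vertex can precompute, on $G$, what an optimal solution of $G^\ast$ looks like in expectation. The benchmark is the fractional relaxation. For each edge $uv\in E(G)$ define
\[
  q_{uv}=\Pr\bigl[uv\in G^\ast \text{ and the optimal fractional matching of } G^\ast \text{ puts weight on } uv\bigr],
\]
or, more simply, $x_{uv}=\mathbb{E}[y_{uv}(G^\ast)]$, where $y(G^\ast)$ is a fixed optimal fractional matching of $G^\ast$. By LP duality and the factor-$2$ integrality gap,
\[
  \mathbb{E}[\mathrm{OPT}_{VC}(G^\ast)]\ \ge\ \mathbb{E}\Bigl[\sum_e y_e(G^\ast)\Bigr]=\sum_e x_e .
\]
The vector $x$ is a fractional matching of $G$ with $x_e\le p_e$. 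Also, conditioned on realization, $x_e/p_e$ is the expected weight that $e$ receives. All of $x$ is computed in preprocessing, with no communication.

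The algorithm is a stochastic analogue of Bar-Yehuda--Even run in parallel, as in the $\poly(1/\epsilon)$-round fractional-matching / vertex-cover algorithms. We maintain dual weights only on realized edges. In round $t$, each vertex $v$ knows its realized edges and its current load $\sum_{e\ni v} w_e$. It declares itself saturated once its load reaches $(1-\epsilon)$. An unsaturated edge raises its weight by a multiplicative factor $(1+\epsilon)$ per round, starting from the initial value $w_e=\epsilon\,x_e/p_e$ (suitably capped), which is the part that uses the stochastic prior. Each vertex broadcasts one bit per round, ``I am saturated''. Saturated vertices form the cover.

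The analysis has two parts. First, the cover is valid: we show that after $T=\poly(1/\epsilon)$ rounds every realized edge has a saturated endpoint. The reason is that weights start at $\epsilon x_e/p_e$ and grow geometrically. Second, the cost is at most $\frac{2}{1-\epsilon}\sum_e w_e$, since each saturated vertex charges its load and each edge is charged at most twice. To bound $\mathbb{E}[\sum_e w_e]$ we compare it with $\sum_e x_e$ by feasibility: the final $w$ is a feasible fractional matching of $G^\ast$ up to a factor $(1+\epsilon)$, so
\[
  \sum_e w_e\le (1+\epsilon)\,\mathrm{OPT}_{LP}(G^\ast)\le(1+\epsilon)\,\mathrm{OPT}_{VC}(G^\ast).
\]
This gives $2+O(\epsilon)$ after rescaling $\epsilon$.

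The main obstacle is the round count. Without the prior, edges with tiny initial weight need $\Theta(\log\Delta)$ rounds to grow, which is exactly the Kuhn--Moscibroda--Wattenhofer lower bound. I would try the following fix: edges whose weight would need more than $\poly(1/\epsilon)$ rounds to matter are those with $x_e/p_e<\epsilon^{c}$. For these, I argue that in expectation their endpoints are covered by vertices that are saturated already via the prior. Any such endpoint that is not saturated is simply added to the cover directly, as in \Cref{thm:VCnoRounds}'s association rule. This costs only $O(\epsilon)\cdot\mathbb{E}[\mathrm{OPT}]$, because the conditional probability that an unsaturated vertex has a realized low-weight edge can be charged to $x$ via a Poisson/binomial concentration argument like the one used for \Cref{thm:VCnoRounds}. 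Making this charging rigorous is where I expect the main difficulty.
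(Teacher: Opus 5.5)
Your accounting for the saturated vertices is fine. Charging each saturated vertex to its load and comparing $\sum_e w_e$ with $\mathrm{OPT}_{LP}(G^\ast)$ is the standard argument, and it does not even use the prior. But the step you flag as the main difficulty is a genuine gap, and it is essentially the entire content of the theorem.

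The claim that after $T=\poly(1/\epsilon)$ rounds every realized edge has a saturated endpoint ``because weights start at $\epsilon x_e/p_e$ and grow geometrically'' has no basis. Nothing bounds $x_e/p_e=\Exp[y_e(G^\ast)\mid e\in E^\ast]$ from below. It can be $0$, and then a multiplicative update never moves $w_e$. Moreover, a vertex's realized load is built from expectations, so it need not reflect whether that vertex is saturated in $y(G^\ast)$. Once the other endpoints of its heavier edges saturate through different edges, those edges freeze and only its low-weight edges keep growing. An edge of tiny weight between two such stalled vertices can then stay uncovered until its own weight $(1+\epsilon)^t\epsilon x_e/p_e$ becomes constant. (Under $(1+\epsilon)$-multiplicative growth the slow edges are those with $x_e/p_e\le e^{-\Omega(\epsilon T)}$, not those with $x_e/p_e<\epsilon^c$.)

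Your proposal gives no bound on how often this happens, and the suggested repair cannot supply an $O(\epsilon)$ bound. The argument of \cref{lem:vc-const-approx} only shows $\Pr(v\in C)\le 3.44\,\Exp[F_v]$ for each responsible vertex. Summed over vertices, that bounds the extra cost by $3.44\,\Exp[\mathrm{OPT}]$, which destroys the factor $2+\epsilon$. Getting $O(\epsilon)\cdot\Exp[\mathrm{OPT}]$ would require showing that the leftover edges carry only an $O(\epsilon)$ share of the fractional mass, which is exactly the missing claim.

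The idea you are missing is to choose the prior so that no leftover edges arise at all, rather than taking the expected optimum. In preprocessing, the paper runs water-filling on the base graph $G$, raising each $\phi_e$ at a rate proportional to $p_e$ and stopping at $\phi_e=\varepsilon_1 p_e$ (\cref{alg:waterfilling}). By \cref{lem:keren} this yields a dichotomy for every edge:
\begin{itemize}
\item either it touches the deterministic set $F=\{v:\phi_v=1\}$, which is put into the cover unconditionally and paid for by $\phi$ (since $\chi_e=\phi_e/p_e$ on realized edges satisfies $\Exp[\chi_v]=\phi_v$);
\item or its realized weight is exactly $\chi_e=\varepsilon_1$.
\end{itemize}
The paper also discards the overloaded set $B=\{v:\chi_v\ge\phi_v+\varepsilon_2\}$. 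Its expected size is at most $\varepsilon\,|\phi|$ by Chebyshev, precisely because each edge contributes at most $\varepsilon_1$ (\cref{lem:expB}). After this, the residual realized graph $Q^\ast$ has maximum degree at most $(1+\varepsilon_2)/\varepsilon_1$ (\cref{lem:mxdgrGQast}). So an additive water-filling with budgets $1-\phi_v$ covers every edge of $Q^\ast$ in $O(\varepsilon^{-4})$ rounds. The approximation ratio then follows by comparing $\Exp[|C|]\le 2|\phi|+2\Exp[|\psi|]+\Exp[|B|]$ with the fractional matching $(\chi+\psi)/(1+\varepsilon_4)$ restricted to edges avoiding $B$.

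Your expected optimal fractional matching $x$ has no such dichotomy. It therefore yields neither a degree bound nor a round bound.
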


To obtain this result, we follow an approach from \cite{BehnezhadBD22}.
At the core of this approach is a standard algorithm that computes a fractional matching by uniformly increasing the fractional weights of all edges;  
an edge becomes inactive once the total weight of the edges at one of its endpoints reaches 1.
This algorithm starts on the base graph $G$, and switches to the realized graph $G^\ast$ as soon as the maximum edge weight reaches a threshold.
Adapting the approach to the distributed setting presents some technical challenges.
In particular, uniformly increasing the edge weights in the first stage of the algorithm results in prohibitively large remaining vertex degrees in the second stage. 
We solve this problem by increasing the edge weights proportionally to the realization probabilities, rather than uniformly.
This ensures that vertices have constant expected remaining degrees in the realized graph, but we still need to handle separately vertices whose actual degree is super-constant.
Moreover, combining the two parts of the algorithm requires more care over the original approach in \cite{BehnezhadBD22}. 

We mention that paper \cite{DerakhshanDH23} gives an algorithm for centralized stochastic minimum vertex cover that obtains a $(3/2+\epsilon)$ approximation, which is better than ours. 
However, we are not aware of a way to adapt the approach given there in our setting.
The main obstacle is that the approach needs to compute a near-optimal vertex cover in a sparse subgraph of the base graph.
In our setting we can obtain the sparse subgraph, but do not know how to compute the near-optimal vertex cover efficiently.

\paragraph{Maximum Matching.}
For the maximum matching problem, we can get a constant approximation in only two rounds (see \Cref{sec:mmTwoRounds}).

\begin{restatable}{theorem}{constMatchingTwoRounds}
    \label{thm:constMatchingTwoRounds}
    There is a randomized algorithm for the distributed stochastic maximum matching problem that outputs a $0.39$-approximation in two rounds of single-bit messages.
    If the base graph is bipartite, an improved approximation ratio of $0.63$ is achieved. 
    
\end{restatable}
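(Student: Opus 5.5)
Our plan is to compute, in preprocessing, a near-optimal fractional matching $x$ of the base graph $G$ that is feasible in expectation for $G^\ast$, and then realize it with a two-round proposal scheme. Concretely, let $x$ be an optimal solution of the LP
\[
\max \sum_{e\in E} x_e \quad\text{s.t.}\quad \sum_{e\ni v} x_e \le 1 \ \ \forall v\in V,\qquad 0\le x_e\le p_e \ \ \forall e\in E .
\]
Setting $x_e=\Pr[e\in \mathrm{OPT}(G^\ast)]$ for a fixed maximum matching rule $\mathrm{OPT}$ on $G^\ast$ gives a feasible point, so $\sum_e x_e\ge \mathbb{E}[|\mathrm{OPT}(G^\ast)|]$. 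Every vertex can compute the same $x$ deterministically, since $G$ and the $p_e$ are known. Define $y_e=x_e/p_e\in[0,1]$; conditioned on $e$ being realized, $e$ will be ``selected'' with probability $y_e$. The key quantity is $\sum_{e\ni v} p_e y_e\le 1$ for every $v$.

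\textbf{Bipartite case ($0.63$).} Let the two sides be $L$ and $R$. In round one, every $u\in L$ picks at most one realized incident edge: it orders its realized edges and proposes along $e$ with conditional probabilities chosen so that the unconditional probability of proposing along $e$ is exactly $x_e$. This is possible because $\sum_{e\ni u} x_e\le 1$ and $x_e\le p_e$. A standard construction works: process the edges in a fixed order and propose on realized $e$ with probability $x_e/\bigl(p_e\Pr[\text{no earlier proposal}]\bigr)$, which is at most $1$ by feasibility. The proposal is a single bit on the edge. In round two, each $v\in R$ accepts one arbitrary incoming proposal and sends an acceptance bit. The proposals of different $u\in L$ are independent, since they depend on disjoint edge sets and independent coins. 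Hence
\[
\Pr[v\text{ matched}]=1-\prod_{e\ni v}(1-x_e)\ge 1-e^{-\sum_{e\ni v}x_e}\ge (1-1/e)\sum_{e\ni v}x_e,
\]
and summing over $R$ gives $(1-1/e)\approx 0.632$ times $\sum_e x_e$.

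\textbf{General case ($0.39$).} We first take a random bipartition: each vertex flips a fair coin, using a shared seed or its own coin, announced in round one together with the proposal. We keep only edges going from a ``proposer'' to an ``acceptor''. To avoid an extra round, each vertex $u$ instead decides its side privately, and if it is a proposer it proposes along an edge $e=uv$ with probability $x_e$ as before. A proposal counts only if $v$ is an acceptor, which $v$ knows itself. Edge $e$ is then effectively proposed in the correct direction with probability $x_e/4$ per orientation, so the total is $x_e/2$. An acceptor $v$ is matched with probability at least
\[
\tfrac12\Bigl(1-e^{-\sum_{e\ni v}x_e/2}\Bigr)\ge \tfrac12\bigl(1-e^{-1/2}\bigr)\cdot 2\sum_{e\ni v} \tfrac{x_e}{2}\cdot\ldots
\]
The precise accounting is what we must get right. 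The concavity bound $1-e^{-s/2}\ge (1-e^{-1/2})s$ for $s\le 1$, summed over vertices with the $1/2$ acceptor probability, yields a factor of $1-e^{-1/2}\approx 0.393$ relative to $\sum_e x_e$.

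\textbf{Main obstacle.} The delicate step is ensuring independence across proposers while conditioning on realizations, in particular that the acceptor's side choice does not correlate with the incoming proposals, and verifying that the proposal probabilities are well defined with single-bit messages. We would check the constant $1-e^{-1/2}\ge 0.39$ carefully, summing per vertex over incident edge mass rather than per edge.
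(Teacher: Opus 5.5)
Your architecture is the paper's with $\alpha=1/2$: proposers and acceptors, proposals with prescribed marginals, the concavity bound, and the accounting that yields $1-e^{-1/2}$. However, two implementation steps fail, and both are exactly where the paper uses its hallucinated graphs $G^\ast_v$.

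First, the proposal scheme you describe need not exist. The constraints $x_e\le p_e$ and $\sum_{e\ni u}x_e\le 1$ do not keep your conditional probability $x_{e_j}/\bigl(p_{e_j}(1-\sum_{i<j}x_{e_i})\bigr)$ at most $1$. Take a star with center $u$ and two edges of probability $1/2$. The LP optimum is $x=(1/2,1/2)$, which would require $u$ to propose with probability $1$, yet $u$ has no realized edge with probability $1/4$. Even achievable marginals can break a fixed order. If $\MM$ takes $ua$ whenever it is realized and otherwise $ub$, then $x=(1/2,1/4)$, and processing $ub$ first gives conditional probability $4/3$ for $ua$. The fix is to drop the LP and set $x_e=\Pr(e\in\MM(G^\ast))$. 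Each proposer $u$ then runs $\MM$ on a graph that agrees with $G^\ast$ on $u$'s incident edges and is resampled independently elsewhere. That graph has the law of $G^\ast$, so $u$ proposes along $e$ with probability exactly $x_e$, and your bipartite computation then gives $1-1/e$.

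Second, with privately chosen sides, the independence you need in the general case fails. Two adjacent proposers $u_1,u_2$ both see the realization of $u_1u_2$. Suppose $u_1u_2$ precedes $u_1v$ and $u_2v$ in their respective orders. Then realizing $u_1u_2$ makes both less likely to propose to a common acceptor $v$. So the two non-proposal events are positively correlated, and the bound $\Pr(\text{no proposal to } v)\le\prod_u(1-x_{uv}/2)$ is unjustified. You cannot repair this inside your scheme, because a proposer does not know which of its neighbors are proposers.

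The missing observation is that sides can be fixed in preprocessing and announced over $G$, which costs no rounds after $G^\ast$ is realized. Each active $u$ then keeps the true realization only of edges to passive neighbors and resamples all other edges, including those to active neighbors. Different proposers therefore use disjoint randomness (\cref{lem:indep}). With that, your calculation becomes $2(1-\alpha)(1-e^{-\alpha})$. This equals $1-e^{-1/2}\approx 0.393$ at $\alpha=1/2$, and about $0.399$ at the optimal $\alpha\approx 0.443$.
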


The algorithm for bipartite graphs is identical to an algorithm given in \cite{BehnezhadBD22} for the centralized stochastic setting: 
Each vertex from the one side of the bipartite graph proposes to a neighbor on the other side, chosen independently at randomly according to the matching probabilities from the optimal solution; and each vertex from the receiving side that gets at least one proposal accepts one of them.
For non-bipartite graphs we essentially apply the same algorithm to a random bipartite subgraph, obtained by choosing the side of each vertex independently at random (with near uniform probability). 
Making sure that the proposal probabilities are independent, as in the bipartite case, requires some care in handling non-bipartite edges.

We show (in \Cref{sec:matching-const-approx-polyeps}) that we can achieve better constant approximations using $\poly(1/\epsilon)$ rounds.

\begin{restatable}{theorem}{MatchingInPolyRrounds}
\label{thm:MatchingInPolyRrounds}

    There is a randomized algorithm for the distributed stochastic maximum matching problem that outputs a $(0.68-\epsilon)$-approximation in $\poly(1/\epsilon)$ rounds of $O(\log n)$-bit messages.
    If the base graph is bipartite, an improved approximation ratio of $(0.73-\epsilon)$ is achieved. 
    Both results assume that all edges $e$ have the same realization probability $p_e=p$.\footnote{This is the only result in this paper in which we make this assumption.} 
\end{restatable}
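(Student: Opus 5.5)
The plan is to simulate, in the distributed stochastic model, the centralized stochastic matching algorithms of \cite{DerakhshanS25} (for $0.68-\epsilon$ in general graphs) and \cite{BehnezhadBD22} (for $0.73-\epsilon$ in bipartite graphs). Both are non-adaptive and share a two-stage structure.

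\textbf{Preprocessing.} Every vertex runs the same deterministic computation on the known base graph $G$ and the common probability $p$. Using a shared seed (or fixing the randomness in preprocessing), this yields a sparse subgraph $Q\subseteq G$ of maximum degree $\poly(1/\epsilon)/p$. In the uniform-$p$ setting, $Q$ can be taken to be a union of $\poly(1/\epsilon)/p$ matchings, each a maximum matching of an independent realization sampled during preprocessing. The centralized analysis shows that a near-maximum matching of $Q\cap G^\ast$, possibly augmented with a fixed "crucial" edge structure, has expected size at least $(0.68-\epsilon)\mathbb{E}[\mu(G^\ast)]$, respectively $(0.73-\epsilon)$ in the bipartite case. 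Since each vertex knows $Q$ and its own realized edges, it knows $N_{Q\cap G^\ast}(v)$ without any communication.

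\textbf{Distributed phase.} We then compute, distributedly in $Q\cap G^\ast$, a matching that is a $(1-\epsilon)$-approximation of $\mu(Q\cap G^\ast)$. The centralized guarantee uses $\mu(Q\cap G^\ast)$, and losing a $(1-\epsilon)$ factor only shifts $\epsilon$.

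\begin{enumerate}
\item \emph{Degree reduction.} $Q\cap G^\ast$ has expected degree $\poly(1/\epsilon)$ at each vertex, because each of the $\poly(1/\epsilon)/p$ edges at $v$ is realized with probability $p$. Vertices whose realized $Q$-degree exceeds a threshold $\Delta_0=\poly(1/\epsilon)$ drop all but an arbitrary set of $\Delta_0$ edges. By Chernoff, the expected number of affected vertices is an $\epsilon$-fraction of something comparable to $\mathbb{E}[\mu]$. This needs an argument relating vertex counts to matching size, e.g. charging to the matching edges in $Q$ they are incident to.
\item \emph{Approximate matching.} On the resulting graph of maximum degree $\Delta_0$, run a standard $(1-\epsilon)$-approximate maximum matching algorithm in \congest based on augmenting paths of length $O(1/\epsilon)$. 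Examples are Lotker--Patt-Shamir--Pettie style rounds or the bounded-degree local simulation of Hopcroft--Karp phases. These take $\poly(\Delta_0,1/\epsilon)=\poly(1/\epsilon)$ rounds with $O(\log n)$-bit messages and randomization, with no dependence on $n$ in expectation. Alternatively, simulate the local computation algorithm via Parnas--Ron style random greedy augmentation, whose locality depends only on $\Delta_0$ and $\epsilon$.
\end{enumerate}

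\textbf{Main obstacle.} I expect the hardest step to be ensuring that the approximation algorithm's round count is independent of $n$ while still giving an expected $(1-\epsilon)$ guarantee. Standard distributed $(1-\epsilon)$-matching algorithms succeed only w.h.p.\ with $\log n$ factors. Here we need an expectation bound in which each vertex fails with probability at most $\epsilon$ after $\poly(1/\epsilon)$ rounds. I would try to obtain this from Nguyen--Onak style local simulation of randomized greedy maximal matchings on the layered augmenting-path graphs. There the expected query complexity per vertex is bounded in $\Delta_0$ and $1/\epsilon$. Truncating at a Markov-based budget costs only an $\epsilon$-fraction of the matching in expectation.
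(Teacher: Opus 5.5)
Your plan has the same architecture as the paper's proof. You fix $Q$ from \cite{DerakhshanS25} in preprocessing, restrict to $Q^\ast$, remove everything above a realized-degree threshold $\theta=\poly(1/\epsilon)$ (the paper uses $\theta=\epsilon^{-10}$), and run a bounded-degree $(1-\epsilon/2)$-approximate matching algorithm. Keeping $\theta$ edges at an overloaded vertex instead of deleting it is fine, since the edges dropped at one vertex cost at most one matching edge. There are, however, two gaps, and the first is the step you explicitly defer.

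\textbf{Truncation loss.} This step is the technical content of the paper's proof (\Cref{lem:boundingbadvertices1,lem:lowerboundmatchingsize,lem:upperboundVbad}), and ``by Chernoff'' does not settle it. When $p$ is small, the number of vertices with $\deg_Q(v)\ge 1$ can far exceed $\Exp[|\MM(G^\ast)|]$. So the per-vertex tail bound must scale with $\Exp[X_v]$, where $X_v=\deg_{Q^\ast}(v)$, and it must beat Markov's bound $\Exp[X_v]/\theta$. The paper argues as follows.
\begin{itemize}
\item Chebyshev with $\Var[X_v]\le\Exp[X_v]\le\epsilon^2\theta$ gives $\Pr(X_v\ge\theta)\le\frac{16}{9}\Exp[X_v]/\theta^2$, hence $\Exp[|V_{\text{bad}}|]\le\frac{16}{9\theta^2}\sum_v\Exp[X_v]$.
\item A tail-sum computation shows that the edges at $V_{\text{bad}}$ are only an $8/\theta$ fraction of $\Exp[|Q^\ast|]$.
\item The remaining graph has maximum degree below $\theta$, so a greedy maximal matching in it has at least $|E|/(2\theta)$ edges. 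Therefore $\Exp[|\MM(Q^\ast)|]\ge(1-8/\theta)\sum_v\Exp[X_v]/(2\theta)$.
\end{itemize}
Combining these gives $\Exp[|V_{\text{bad}}|]=O(\Exp[|\MM(Q^\ast)|]/\theta)$. The comparison goes through the number of realized edges and this bounded-degree bound, not through ``matching edges in $Q$''.

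\textbf{Round complexity.} The paper handles your ``main obstacle'' with a black box. \Cref{lem:localapproxmatching} runs in $O(\poly(1/\delta)\log\Delta/\log\log\Delta)$ rounds with no dependence on $n$, which is $\poly(1/\epsilon)$ for $\Delta\le\epsilon^{-10}$ and $\delta=\epsilon/2$. The substitutes you propose do not deliver this.
\begin{itemize}
\item Lotker--Patt-Shamir--Pettie-type algorithms use Luby-style MIS steps costing $O(\log n)$ rounds. Your assertion that they take $\poly(\Delta_0,1/\epsilon)$ rounds independent of $n$ is unsupported.
\item Nguyen--Onak-type local simulations of $(1-\epsilon)$-approximate matching are also too expensive. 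Their known query bounds, and the dependency depth of random greedy on the augmenting-path conflict graph (whose degree is $\Delta_0^{O(1/\epsilon)}$), are superpolynomial in $1/\epsilon$ when $\Delta_0=\poly(1/\epsilon)$. Markov truncation then gives an $n$-independent round bound, but not a $\poly(1/\epsilon)$ one.
\end{itemize}
As written, your plan does not establish the claimed round bound.

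Minor point: the $0.73$ bipartite guarantee also comes from \cite{DerakhshanS25}; the bipartite algorithm of \cite{BehnezhadBD22} gives $1-1/e$. Nothing about the internal structure of $Q$ is needed beyond its maximum degree and the bound on $\Exp[|\MM(Q^\ast)|]$.
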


To prove the above, we rely on a result from \cite{DerakhshanS25} which states that there exists a subgraph $Q$ of $G$ of maximum degree $p^{-1}\cdot\poly(1/\epsilon)$, whose realization $Q^\ast$ has expected matching size close to the matching size of $G^\ast$.\footnote{In \cite{DerakhshanS25}, $p=\min_{e\in E} p_e$ is the minimum realization probability of any edge.} 
In our distributed setting, this subgraph $Q$ can be computed in the preprocessing stage. For our algorithm, we assume that all edges $e$ have the same realization probability $p_e=p$,
thus the expected degree of any vertex in $Q^\ast$ is $\poly(1/\epsilon)$.
If this were the case for the \emph{maximum} degree of $Q^\ast$, we could apply known distributed algorithms to obtain near optimal matching in $\poly(1/\epsilon)$ rounds. However, the maximum degree in $Q^\ast$ may be higher. We overcome this issue by proving that removing vertices of higher degree decreases the matching size only marginally, and we then apply a known distributed matching algorithm on the remaining graph.

\paragraph{Minimum Dominating Set.} 
We show (in \Cref{sec:mds}) that one round of communication allows an approximation that is logarithmic in the maximum expected degree of $G^\ast$, 
\[
    \maxExpDeg= \max_{v\in V}  \Exp[\deg_{G^\ast}(v)].
\]

\begin{restatable}{theorem}{MDSlogdelta}
\label{thm:MDSlogdelta}
    There is a deterministic algorithm for the distributed stochastic minimum dominating set problem that outputs a $O(\log{\maxExpDeg})$-approximation in a single round of single-bit messages.

\end{restatable}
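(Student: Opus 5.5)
The plan is a one-round threshold scheme. In preprocessing, each vertex deterministically computes statistics of a fixed optimal solution on $G^\ast$. After realization, every vertex decides from its own realized edges whether to join a set $S$. It then sends one bit (``I am in $S$'') over its realized edges. Any vertex that receives no bit and is not itself in $S$ adds itself. The output $S\cup U$, where $U$ is the set of such vertices, is a dominating set of $G^\ast$ by construction. The cost is $\Exp|S|+\Exp|U|$, and I will bound each term by $O(\log\maxExpDeg)\cdot\Exp|D^\ast|$.

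\textbf{Preprocessing quantities.} Fix a deterministic rule mapping each realization to a minimum dominating set $D^\ast$. Let $q_v=\Pr[v\in D^\ast]$, and for an edge $uw$ let $y_{w\to u}=\Pr[w\in D^\ast\mid uw\in G^\ast]$. All of these are computable from $G$ and the $p_e$ in preprocessing. Since $D^\ast$ dominates $u$ in every realization, a union bound and linearity give the covering inequality
\[
q_u+\sum_{w\in N_G(u)} p_{uw}\,y_{w\to u}\;\ge\;1 \qquad\text{for all } u .
\]
Fix a threshold $\theta=1/(c\ln(\maxExpDeg+2))$. Put $w$ into $S$ if $q_w\ge\theta$, or if some realized edge $uw$ has $y_{w\to u}\ge\theta$; call the latter the edges \emph{associated} with $w$, as in the vertex cover algorithm of \Cref{thm:VCnoRounds}.

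\textbf{Cost of $S$.} Vertices with $q_w\ge\theta$ contribute at most $\theta^{-1}\sum_w q_w=\theta^{-1}\Exp|D^\ast|$. For the remaining vertices I will reuse the Poisson/binomial argument sketched for \Cref{thm:VCnoRounds}, with the constant $1/2$ replaced by $\theta$. If each associated edge of $w$, when realized, forces $w\in D^\ast$ with conditional probability at least $\theta$, then
\[
\Pr[w\in D^\ast]\;\ge\;\Omega(\theta)\cdot\Pr[\text{some associated edge of } w \text{ is realized}].
\]
Summing over $w$ gives $\Exp|S|=O(\theta^{-1})\,\Exp|D^\ast|=O(\log\maxExpDeg)\,\Exp|D^\ast|$. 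The step I must re-verify is that the proof of this inequality did not use the specific value $1/2$ beyond a linear loss in it.

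\textbf{Cost of $U$ (expected main obstacle).} A vertex $u$ lands in $U$ only if $q_u<\theta$ and every realized neighbor $w$ of $u$ is outside $S$. In particular, every realized edge $uw$ then has $y_{w\to u}<\theta$. The covering inequality alone is not enough, because the low-$y$ edges can carry total mass up to $\theta\maxExpDeg$, which is not small. I would therefore charge $U$ differently. On the event $u\in U$, $u$ is still dominated by $D^\ast$, either by itself (probability $q_u<\theta$) or by a neighbor $w\in D^\ast$ with $w\notin S$. Assign $u$ to its dominator in $D^\ast$. A dominator $w\notin S$ can receive charge only through realized edges $uw$ with $y_{w\to u}<\theta$. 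I then want to show that the expected number of such charged $u$ per unit of $\Pr[w\in D^\ast]$ is $O(\log\maxExpDeg)$.

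The intended mechanism is a set-cover-style layering. Group the edges by $y_{w\to u}\in(2^{-j-1},2^{-j}]$ for $j$ up to $\log_2\maxExpDeg$. Edges with $y_{w\to u}\le 1/\maxExpDeg$ have total covering mass at most $1$ at each $u$. I would handle this by scaling the covering inequality by a constant, which forces most of each $u$'s mass onto the $O(\log\maxExpDeg)$ middle layers. Within one layer, conditioning on $uw$ being realized changes $\Pr[w\in D^\ast]$ by at most a factor proportional to the layer's value, which limits the expected charge per layer to $O(1)\cdot q_w$. Summing over the layers yields the $\log\maxExpDeg$ factor.

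If this layering does not close, my fallback is to strengthen the rule for joining $S$: a vertex also joins if its realized degree exceeds a constant multiple of its expected degree. I would then apply a Chernoff bound so that vertices with many low-$y$ realized edges rarely remain undominated.
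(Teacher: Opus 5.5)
\paragraph{The gap is in the algorithm itself.} The problem is not only the analysis of $U$. Thresholding $q_w$ and $y_{w\to u}$ at $\theta=\Theta(1/\log\maxExpDeg)$ needs an analogue of the vertex-cover fact $\max\{f_{vu},f_{uv}\}\ge 1/2$, and domination has none. The covering mass of a vertex can be spread over $\Theta(\maxExpDeg)$ candidate dominators, each with probability far below $\theta$, even though whichever one the optimum uses dominates almost everything.

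Here is a concrete counterexample. Let $G=K_k$ with all $p_e=p$, where $p^{k-1}=(\ln k)^2/k$. Then $1-p=O(\ln k/k)$ and $\maxExpDeg=(k-1)p=\Theta(k)$.
\begin{itemize}
\item Two fixed vertices, plus the vertices adjacent to neither, form a dominating set. Hence $\Exp|D^\ast|\le 2+k(1-p)^2=O(1)$.
\item Let $N$ be the number of vertices all of whose $k-1$ edges are realized. If $N\ge 1$, every minimum dominating set is a single such vertex. So for \emph{any} selection rule, $q_w\le p^{k-1}+\Pr[N=0]$.
\item Chebyshev gives $\Pr[N=0]\le \Var[N]/\Exp[N]^2\le 1/\Exp[N]+(1-p)/p=O(1/\ln^2 k)$.
\item Hence every $q_w$, and every $y_{w\to u}\le q_w/p$, is $O(1/\ln^2 k)<\theta$ for large $k$.
\end{itemize}
So $S=\emptyset$, no bit is ever sent, and all $k$ vertices join $U$. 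The approximation ratio is $\Omega(\maxExpDeg)$.

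Your fallback does not trigger here: realized degrees never exceed $k-1$, which is essentially the expected degree. The layering claim fails for the same reason. Conditioning on a single edge barely changes $\Pr[w\in D^\ast]$, yet when $w$ is chosen it absorbs the charge of $\Theta(\maxExpDeg)$ vertices. The charge per unit of $q_w$ is therefore $\Theta(\maxExpDeg)$, not $O(\log\maxExpDeg)$.

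\paragraph{The missing idea.} Vertices need a way to agree on a \emph{common} dominator with only one round, and local thresholding of marginal statistics cannot provide this. The paper gets it from a global priority order fixed in preprocessing. Vertices are ranked greedily by the \emph{expected} number $\tilde w_i(v)$ of vertices they newly cover, given the earlier-ranked ones. Every vertex then selects the minimum-rank vertex of $N_{G^\ast}(v)\cup\{v\}$, using one single-bit message.

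The analysis is the greedy set-cover charging argument, with Chernoff bounds controlling two kinds of exceptions:
\begin{itemize}
\item \emph{Bad} vertices, whose realized new coverage is far below expectation. Their expected number is at most $n/\maxExpDeg^2$, and they are charged directly.
\item \emph{Costly} vertices, whose realized coverage is far above expectation. Their total expected degree is at most $n/\maxExpDeg^3$.
\end{itemize}
These additive terms are absorbed because $\Exp|S^\ast|=\Omega(n/\maxExpDeg)$.

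\paragraph{A secondary point on the cost of $S$.} Your bound on $S$ also does not go through as stated, because the Poisson lemma does not lose only a linear factor in the threshold. Take $\lambda=1$ and $F=\mathds{1}_{Y\ge t}$. Then $\Exp[FY]=\Pr(Y\ge t-1)=:\theta$, but $\Exp[F]\approx\theta/t$ with $t\approx\ln(1/\theta)/\ln\ln(1/\theta)$, while $\Pr(Y\neq 0)=1-1/e$. So that route would add an extra factor of order $\log\log\maxExpDeg$.
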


Our approach is to mimic the greedy algorithm using the expected degrees in $G^\ast$ rather than the actual degrees. However, a technical challenge arises since actual degrees may differ from their expectation. It is relatively straightforward to show a bound of $O(\log n)$ on the approximation ratio.
To prove the logarithmic bound on $\bar \Delta$ we use a careful argument that accounts for the contribution of vertices whose actual degree exceeds their expectation or falls short of the expectation. 

\subsubsection{An Improved 3-Approximation for Vertex Cover without Communication}

A result from recent independent work \cite{Brand2026} immediately implies a $2c$-approximate distributed stochastic vertex cover algorithm without communication, where $c<8$.
Their result assumes $p_e=p$ for all edges $e$.
Precisely, \cite[Lemma 4.4]{Brand2026} shows that there is a set of vertices $S\subseteq V$ such that $|S| \leq c\cdot \Exp[|\mathrm{MVC}(G^\ast)|]$ and the number of edges in the remaining induced subgraph $G[V\setminus S]$ is at most $c\cdot \Exp[|\mathrm{MVC}(G^\ast)|]/p$, where $\mathrm{MVC}(H)$ denotes a minimum vertex cover of any graph $H$. 
Therefore, a set containing all vertices from $S$ plus one endpoint of each edge in $G^\ast[V\setminus S]$ forms a vertex cover of $G^\ast$ with an expected size of at most $2c \cdot \Exp[|\mathrm{MVC}(G^\ast)|]$.

Using the construction from \cite[Lemma 4.4]{Brand2026}, we can obtain an improved $3$-approximate distributed stochastic algorithm without communication.
By a greedy process, the lemma constructs an ordering $\pi = (v_1,\ldots,v_n)$ of the vertices that satisfies the following property:
Let $M$ be a random matching of $G^\ast$ obtained by a sequential process considering the vertices in the order of $\pi$. 
When vertex $v_i$ is considered, if it is not already matched (to a vertex in $\{v_{1},\ldots,v_{i-1}\}$) and has at least one unmatched neighbor in $\{v_{i+1},\ldots,v_n\}$, then $v_i$ is matched to a neighbor chosen uniformly at random from those options.
Let $q_i$ be the probability that $v_i$ is matched to a vertex in $\{v_{1},\ldots,v_{i-1}\}$.
The ordering $\pi$ ensures that the probability of $v_i$ being matched to a vertex in $\{v_{i+1},\ldots,v_{n}\}$ is at least $R_i (1-2q_i)$, where $R_i$ is the probability that $v_i$ has at least one neighbor in $\{v_{i+1},\ldots,v_{n}\}$ in $G^\ast$.

Consider now the (random) vertex set $C$ consisting of all vertices $v_i$ that have at least one neighbor in $\{v_{i+1},\ldots,v_{n}\}$ in $G^\ast$. 
Note that $C$ is a valid vertex cover of $G^\ast$ because for any realized edge $(v_i, v_j)$ with $i < j$, its endpoint $v_i$ is included in $C$. 
Moreover, $C$ can be constructed without communication given $\pi$.
Its expected size is $\Exp[|C|] = \sum_{i=1}^n R_i$.
We can bound the expected size of matching $M$ by counting its edges via their endpoints. Counting via the higher-indexed (right) endpoints yields $\Exp[|M|] = \sum_{i=1}^n q_i$, while counting via the lower-indexed (left) endpoints yields:
\[
    \Exp[|M|] 
    \geq 
    \sum_{i=1}^n R_i(1-2q_i)
    \geq
    \sum_{i=1}^n R_i 
    -
    \sum_{i=1}^n 2q_i
    =
    \Exp[|C|] - 2\Exp[|M|]
    .
\]
Rearranging yields $\Exp[|C|] \leq 3\Exp[|M|]$. Since the size of any matching is a lower bound for the size of a minimum vertex cover, it follows that $\Exp[|C|] \leq 3\Exp[|\mathrm{MVC}(G^\ast)|]$.
This result does not require the assumption that all edges have the same realization probability $p$.

\subsection{Related Work}
\label{sec:related}

We restrict our attention to three areas of related work: algorithms for centralized stochastic graph problems, distributed graph algorithms for the standard \congest and related models, and distributed graph algorithms with predictions and advice.   

\paragraph{Centralized Stochastic Graph Algorithms.}

As mentioned earlier, the limiting resource in these works is the number edges that the algorithm needs to query (non-adaptively), in order to solve an approximate optimization problem on the realized graph.   
Motivated by maximizing the expected number of successful transplants in a kidney exchange problem, the pioneering work of \cite{BlumDHPSS15} proposed a constant-approximate stochastic matching algorithm using a number of queries per vertex that depends only on the (minimum) edge  realization probability.
This has led to a major line of research about improving the approximation ratio and reducing the number of queries \cite{AssadiKL17,BehnezhadR18,AssadiB19,AssadiKL19,BehnezhadD0HR19,BehnezhadFHR19,BehnezhadD20,BehnezhadDH20,BehnezhadBD22,DerakhshanF23,DughmiKP23,AzarmehrBGR25,DerakhshanS25}.
More recently, stochastic minimum vertex cover has been studied \cite{BehnezhadBD22,DerakhshanDH23,DerakhshanSX25,Brand2026}.
Other problems that have been considered include 
stochastic minimum spanning trees \cite{GoemansV06}, stochastic packing problems \cite{MaeharaY20}, and stochastic shortest path problems \cite{vondrak2007}.

\paragraph{Related work in \textnormal{\congest} and \textnormal{\supported} Models.}


Exact computation of the minimum vertex cover or dominating set in the \congest model requires $\tilde{\Omega}(n^2)$ rounds in general graphs~\cite{AbboudCKP21,BachrachCDELP19}. Approximation algorithms have been extensively studied for 
maximum matching \cite{Luby85, IsraelI86, AlonBI86, WattenhoferW04,LotkerPR09, LotkerPP15, Bar-YehudaCGS17, AhmadiKO18, Ben-BasatKS19, Fischer20, FaourFK21, FischerMU22,ChangS22, KitamuraI22, HuangS23, IzumiKY24, MitrovicS25},
minimum vertex cover ~\cite{KhullerVY94,HanckowiakKP01,PanconesiR01,KuhnMW06,GrandoniKP08,KoufogiannakisY09,AstrandFPRSU09,PolishchukS09,AstrandS10,BarenboimEPS12,Fischer17,Bar-YehudaCS17,Ben-BasatEKS18,FaourK20,FaourFK21},  and minimum dominating set~\cite{JiaRS02, KuhnW05, KuhnMW06, KuhnMW16, GhaffariK18, DeurerKM19, Censor-HillelD21, MorganSW21, Faour0GKR23, DoryGI24}.
However, an $\Omega(\log\Delta/\log\log\Delta)$ or $\Omega(\sqrt{\log{n}/\log\log{n}})$ lower bound is known for polylogarithmic approximations~\cite{KuhnMW04}. For vertex cover, higher logarithmic lower bounds are known for $(1+\epsilon)$-approximations~\cite{GoosS14,FaourFK21}.


In the \supported model \cite{SchmidS13}, the topology of the communication graph is known in advance and a graph problem needs to be solved for an adversarially chosen subgraph \cite{KorhonenR17, FoersterKR019, FoersterH0S19, FaourFK21, FoersterKPR021, HaeuplerWZ21, KorhonenPRSS21, BalliuKKLOPPR0S23, AnagnostidesLHZG23, AgrawalAPR23, BalliuB0O24}. While this model also leverages processing, our setting differs in that (i) the input subgraph is random rather than adversarial, and (ii) the communication takes place only over edges of the subgraph rather than over the entire base graph (this is referred to as a passive \supported model in \cite{FoersterH0S19}).

\paragraph{Distributed Graph Algorithms with Predictions and Advice.}

Our model 
can be viewed as part of a general paradigm where nodes are given extra information about the graph. 
In distributed algorithms with predictions, each node is 
also
given some extra information that may (or may not) be correct, and the goal is for the algorithm to work faster when predictions are good and to not work much worse than without predictions when predictions are bad.
Distributed maximal independent set algorithms with predictions, where each node is given a prediction for its output, were recently studied in \cite{Boyar2025}.
In distributed algorithms with advice, each node is given some additional information, as well, but this information is always correct and is chosen by the algorithm designer. 
The goal is to measure the improvement possible for a given advice size, or compute the minimum advice size to significantly improve performance.
Distributed graph algorithms with advice have been proposed for coloring and minimum spanning trees \cite{FraigniaudGIP09,FraigniaudKL10,Balliu0K0ORS25}.

\section{Constant Approximation for Minimum Vertex Cover without Communication}

\label{sec:MVCnoCommunication}

We present an algorithm that computes a constant approximation of the minimum vertex cover without requiring any communication.

\VCnoRounds*

Let $\FVC$ be any fractional vertex cover algorithm, and let $\FVC(H)$ denote its output for input graph $H$.\footnote{A fractional vertex cover is an assignment of values in $[0,1]$ to the vertices such that for every edge, the sum of values of both endpoints is at least 1. The value of the cover is the sum of all vertex values. We assume a fractional vertex cover algorithm instead of an integral one, because an optimal fractional vertex cover is computable in polynomial time.}
We denote by $F_{v}$ the fractional weight assigned to vertex $v$ in fractional cover $\FVC(G^\ast)$ of the realized graph $G^\ast=(V,E^\ast)$.
For every edge $vu$ of $G$ we denote by $f_{vu}$ the conditional expectation of $F_v$ given that $vu$ is realized,
\[
    f_{vu} = \Exp[F_v \mid vu \in E^\ast]
    .
\]
Note that $f_{vu}+f_{uv} \geq 1$ and thus $\max\{f_{vu},f_{uv}\}\geq 1/2$.

\paragraph{Distributed Algorithm}

\begin{enumerate}
    \item (\textbf{Preprocessing stage})
    Each vertex $v\in V$ computes a subset $E_v$ of its incident edges in $G$, where
    \[
        E_v = \{vu \in E(G) \colon f_{vu} \geq f_{uv}\}
        .
    \]
    This is the set of edges that $v$ is `responsible' for covering if realized.
    
    \item (\textbf{Distributed stage}) At this stage, each $v\in V$ is made aware of the set of its realized incident edges in $G^\ast$.
    If 
    at least one edge $e\in E_v$ is realized, then $v$ is added to the output vertex cover set, denoted $C$.
\end{enumerate}

Correctness follows from the fact that for each edge $uv$,  $uv\in E_u$ or $uv\in E_v$, and if $uv$ is realized then the responsible vertex is included in the vertex cover.  
The approximation ratio of $3.44$ follows from the next lemma.

\begin{lemma}
    \label{lem:vc-const-approx}
    For every vertex $v\in V$,
    the probability that $v$ is contained in the vertex cover $C$ computed by the distributed algorithm is
    \[
        \Pr(v \in C) \leq 3.44 \cdot\Exp[F_v]
        .
    \]
\end{lemma}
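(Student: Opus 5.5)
The plan is to fix a vertex $v$ and reduce the lemma to a one-dimensional extremal problem about a $[0,1]$-valued function of independent edge indicators. Let $E_v=\{e_1,\dots,e_k\}$ with realization probabilities $p_1,\dots,p_k$, and let $X_i$ indicate that $e_i\in E^\ast$. The algorithm includes $v$ exactly when some $X_i=1$, so
\[
\Pr(v\in C)=1-\prod_i(1-p_i).
\]
By definition of $E_v$ we have $f_{e_i}\ge 1/2$, that is, $\Exp[F_v X_i]\ge p_i/2$ for every $i$. Set $g(X_1,\dots,X_k)=\Exp[F_v\mid X_1,\dots,X_k]$. This $g$ takes values in $[0,1]$, has the same mean as $F_v$, and satisfies the same constraints. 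It therefore suffices to show that every $g\colon\{0,1\}^k\to[0,1]$ with $\Exp[g X_i]\ge p_i/2$ for all $i$ has $\Exp[g]\ge (1-\prod_i(1-p_i))/3.44$.

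\textbf{Poissonization.} I would replace each $X_i$ by $\mathbb 1\{N_i\ge 1\}$, where the $N_i$ are independent Poisson variables with $\lambda_i=-\ln(1-p_i)$, so that the joint law of the $X_i$ is unchanged. Let $\Lambda=\sum_i\lambda_i$; then $\Pr(v\in C)=1-e^{-\Lambda}$. The idea is to trade the heterogeneous Bernoulli constraints for one constraint on the total count $N=\sum_i N_i\sim\mathrm{Poisson}(\Lambda)$. Given $N=m$, the vector $(N_i)$ is multinomial with cell probabilities $\lambda_i/\Lambda$. I would average $g$ over this conditional law to get a function $h(m)$. I would also aggregate the $k$ constraints, weighting constraint $i$ by roughly $\lambda_i/p_i$, so that the left-hand side becomes an expectation of $h(N)$ times a function of $N$ that is increasing in $N$ (something like $\Exp[h(N)\,\phi(N)]$). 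The right-hand side should become at least $\Lambda/2$ up to the correct normalization, using $p_i\le\lambda_i$ and concavity of $1-e^{-x}$. The worst case should be the limit of many tiny edges, where $\mathbb 1\{N_i\ge1\}\approx N_i$ and the constraint becomes exactly $\Exp[h(N)N]\ge\Lambda/2$.

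\textbf{Solving the extremal problem.} After this reduction, the problem is to minimize $\Exp[h(N)]$ over $h\colon\mathbb N\to[0,1]$ subject to $\Exp[h(N)N]\ge\Lambda/2$. This is a fractional knapsack, so the optimum is a threshold function: $h=1$ for $N>t$, $h=0$ for $N<t$, and a fractional value at $N=t$. We have $h(0)=0$ automatically, and the threshold $t$ is set by the Poisson tail identity $\Exp[N\mathbb 1\{N\ge s\}]=\Lambda\Pr(N\ge s-1)$. The constraint becomes a median-type condition on $\mathrm{Poisson}(\Lambda)$, and the resulting ratio $(1-e^{-\Lambda})/\min\Exp[h(N)]$ is an explicit function of $\Lambda$. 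I would finish by maximizing this ratio over $\Lambda>0$, checking small $\Lambda$ analytically (ratio near $2$) and large $\Lambda$ (ratio near $2$ since about half the mass suffices), and evaluating numerically in between, expecting the supremum to be about $3.44$.

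\textbf{Main obstacle.} The hard step is the aggregation in the Poissonization: showing rigorously that heterogeneous $p_i$ (and large individual $p_i$) are never worse than the many-small-edges limit. The constraint for edge $i$ involves $\mathbb 1\{N_i\ge1\}$, not $N_i$, and these indicators are not symmetric across edges. My first attempt would be a splitting argument: subdivide edge $i$ into $r$ independent Poisson pieces of rate $\lambda_i/r$, show the feasible region only grows (or is comparable) under splitting, then symmetrize over pieces by convexity of the LP. If that fails, I would instead bound $\Exp[g\mathbb 1\{N_i\ge1\}]$ via $\Exp[g N_i]$ conditioned on $N$, using that the conditional probability of $N_i\ge1$ given $N=m$ is concave in $m$.
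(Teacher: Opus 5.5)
Your reduction to $g(X_1,\dots,X_k)=\Exp[F_v\mid X_1,\dots,X_k]$ is sound, and your treatment of the final extremal problem matches the paper's \cref{lem:poisson-lemma}: the threshold optimum, the Poisson tail identity giving a median condition, and the case analysis in $\Lambda$. The gap is the step you flag as the main obstacle. You never establish the aggregated constraint $\Exp[h(N)\,N]\ge\Lambda/2$. Your first sketch cannot work with indicators. Even for a single coordinate, $\Exp[g X_i\mid N]\neq h(N)\Pr(X_i=1\mid N)$ in general: take $g=X_1$. So no weighted sum of the indicator constraints is an expression of the form $\Exp[h(N)\phi(N)]$. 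The naive bound $N_i\ge X_i$ only gives $\Exp[gN_i]\ge p_i/2$. That loses the factor $\lambda_i/p_i$, which is unbounded as $p_i\to1$. The splitting/symmetrization and concavity routes are left as speculation, and the claim that ``many tiny edges'' is the worst case is never proved.

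The missing idea is a one-line conditional-expectation identity, and it removes the obstacle with no limit argument. Composed with the indicators, $g$ depends on the $N_j$ only through the $X_j$, and the $N_j$ are independent. Hence $\Exp[N_i\mid X_1,\dots,X_k]=\Exp[N_i\mid X_i]=X_i\cdot\lambda_i/p_i$, because $\Exp[N_i\mid N_i\ge 1]=\lambda_i/(1-e^{-\lambda_i})$. Therefore
\[
\Exp[g\,N_i]=\frac{\lambda_i}{p_i}\,\Exp[g\,X_i]\ \ge\ \frac{\lambda_i}{2}
\]
holds exactly, for every $p_i$. Summing gives $\Exp[g\,N]\ge\Lambda/2$, where $N\sim\mathrm{Poisson}(\Lambda)$ and $\Pr(N\neq 0)=\Pr(v\in C)$.

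This is what the paper does. It draws $Y_e$ after $G^\ast$ and $\FVC(G^\ast)$ are fixed, so that $Y_e$ is conditionally independent of $F_v$ given $X_e$. That yields $\Exp[F_vY_e]=f_{vu}\Exp[Y_e]\ge\frac12\Exp[Y_e]$ directly.

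Your weighting by $\lambda_i/p_i$ was the right one. What you missed is that it converts the indicator constraints into count constraints exactly. The right-hand side is then exactly $\Lambda/2$, and no concavity of $1-e^{-x}$ is needed.

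With this in place your plan goes through. One caveat: ``evaluate numerically in between'' still needs an argument like the paper's to be a proof. For $\lambda>\lambda_4$ the paper uses $\Exp[F^\ast]\ge\frac12-\frac{i^ie^{-i}}{i!}$. On the first intervals it uses concavity, so the minimum lies at a breakpoint $\lambda_i$.
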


From \cref{lem:vc-const-approx} we get that the expected size of the vertex cover $C$ is at most $3.44$ times larger than the expected size of fractional cover $\FVC(G^\ast)$, i.e.,
\[
    \Exp[|C|] \leq 3.44\cdot \sum_{v\in V} \Exp[F_v]
    .
\]

\begin{proof}[Proof of \cref{lem:vc-const-approx}.]
Let $e=uv$. For every edge $e\in E_v$, let $X_e$ be the indicator random variable of the event $vu\in E^\ast$, that edge $vu$ is realized in $G^\ast$.
   
    We associate with each edge $e\in E_v$ a Poisson random variable $Y_e$ with mean 
    \[
        \lambda_e = \ln \frac1{1-p_e}
        ,
    \]
    such that:
    \begin{enumerate}
        \item $Y_e = 0$ if and only if $X_e =  0$,
        \item The random variables $Y_e$, for $e\in E_v$, are mutually independent, and
        \item $Y_e$ is conditionally independent of $F_v$ given $X_e$.
    \end{enumerate}
    The first point is feasible because
\begin{equation*}
        \Pr(Y_e = 0) =  e^{-\lambda_e} 
        = e^{-\ln \frac1{1-p_e}} 
        = 1-p_e = \Pr(X_e=0).
 \end{equation*}       
    All three points can be satisfied simply by drawing the values of the $Y_e$ \emph{after} the realized graph $G^\ast$ and the cover $\FVC(G^\ast)$ are determined, such that $Y_e=0$ if $X_e=0$, while if $X_e=1$, $Y_e$'s value is drawn independently from the conditional Poisson distribution of mean $\lambda_e$ given $Y_e\neq0$.
    Let
    \[
        Y_v = \sum_{e\in E_v} Y_e.
    \]
    Being the sum of independent Poisson random variables, $Y_v$ is a Poisson random variable as well, with mean
    \[
        \lambda_v 
        = 
        \sum_{e\in E_v} \lambda_e.
    \]
    We can express $\Pr(v\in C)$ in terms of $Y_v$, as
    \begin{equation}
        \label{eq:prvC}
        \Pr(v\in C)
        =
        \Pr\left(\sum_{e\in E_v} X_e \neq 0\right)
        =        
        \Pr\left(\sum_{e\in E_v} Y_e \neq 0\right)
        =
        \Pr(Y_v\neq 0).
    \end{equation}
where the second equation holds because $Y_e = 0$ if and only if $X_e =  0$. 
Next we show that $\Exp[F_v\cdot Y_v] \geq \frac12\cdot \Exp[Y_v]$. 
   
    For any edge $e = vu \in E_v$, 
    \begin{align*}
        \Exp[F_v \cdot Y_e]
        &=
        \Exp[F_v \cdot Y_e \mid Y_e \neq 0]\cdot \Pr(Y_e \neq 0)
        \\&
        =
        \Exp[F_v \cdot Y_e \mid X_e = 1]\cdot \Pr(Y_e \neq 0)
        \\&
        =
        \Exp[F_v \mid X_e = 1]\cdot \Exp[Y_e \mid X_e = 1]\cdot \Pr(Y_e \neq 0)
        \\&
        =
        f_{vu}\cdot \Exp[Y_e \mid Y_e \neq 0]\cdot \Pr(Y_e \neq 0),    
        \\&
        =
        f_{vu}\cdot \Exp[Y_e]    
        \\&
        \geq
        \frac12\cdot \Exp[Y_e],    
    \end{align*}
    where 
    in the third line we used the conditional independence of $Y_e$ and $F_v$ given $X_e$,
    and in the last line we used that $f_{vu}\geq 1/2$ because $f_{vu} \geq f_{uv}$ and $f_{vu} + f_{uv}\geq 1$.
    It follows
    \begin{equation}
        \label{eq:expFvYv}
        \Exp[F_v \cdot Y_v]
        =
        \sum_{e\in E_v}\Exp[F_v \cdot Y_e]
        \geq
        \frac12\cdot \sum_{e\in E_v}\Exp[Y_e]
        =
        \frac12\cdot \Exp[Y_v].
    \end{equation}
    Inequality \cref{eq:expFvYv} allows us to apply a  property of the Poisson distribution that we show in technical \cref{lem:poisson-lemma}, to obtain
    \[
        \Exp[F_v] \geq \frac{\Pr(Y_v \neq 0)}{3.44}
    \]
    Combining this and \cref{eq:prvC} completes the proof of the lemma.
\end{proof}

\begin{lemma}[A Property of Poisson Distribution]
    \label{lem:poisson-lemma}
    Let $Y$ be a Poisson random variable, and $F$ be a real random variable that takes values in the interval $[0,1]$.
    The two random variables are not necessarily independent.
    If $\,\Exp[F\cdot Y] \geq \frac12\cdot \Exp[Y]$ then 
    \[
        \Exp[F] \geq \frac{\Pr(Y \neq 0)}{3.44}
        .
    \]
\end{lemma}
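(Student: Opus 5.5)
The plan is to reduce to a one-dimensional linear program over the distribution of $Y$, whose optimizer is a threshold function of $Y$, and then to bound the optimal value as a function of the Poisson mean $\lambda$.

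\emph{Reduction.} Let $\lambda=\Exp[Y]$. Replace $F$ by $g(Y)=\Exp[F\mid Y]$. This function still takes values in $[0,1]$, and it leaves both $\Exp[F]$ and $\Exp[F\cdot Y]$ unchanged. So it suffices to prove the claim for deterministic functions $g\colon\mathbb{N}\to[0,1]$. Writing $P_k=\Pr(Y=k)=e^{-\lambda}\lambda^k/k!$, we must lower bound the value of
\[
\min \sum_{k\ge0} g(k)P_k \quad\text{subject to}\quad \sum_{k\ge0} k\,g(k)P_k\ \ge\ \lambda/2,\qquad 0\le g\le 1 .
\]
This is a fractional knapsack problem. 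Putting weight on $k$ costs $P_k$ and gains $kP_k$, so the gain per unit cost is $k$. Hence an optimal $g$ fills from the top: $g(k)=1$ for $k\ge t$, $g(t-1)=\theta\in[0,1)$, and $g(k)=0$ below. An exchange argument moving mass from smaller to larger $k$ makes this rigorous.

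\emph{Identifying the threshold.} The Poisson identity $kP_k=\lambda P_{k-1}$ gives $\Exp[Y\mathbb{1}\{Y\ge t\}]=\lambda\Pr(Y\ge t-1)$. The constraint then reads
\[
\Pr(Y\ge t-1)+\theta P_{t-2}\ \ge\ 1/2 .
\]
Thus $t$ is the largest integer with $\Pr(Y\ge t-1)\le 1/2$, roughly one more than the median. The optimal value is
\[
\Phi(\lambda)=\Pr(Y\ge t)+\theta P_{t-1},\qquad \theta=\frac{1/2-\Pr(Y\ge t-1)}{P_{t-2}} .
\]
It remains to show $\Phi(\lambda)\ge (1-e^{-\lambda})/3.44$ for all $\lambda>0$.

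\emph{Bounding $\Phi$.} I would split into regimes of $\lambda$.

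For small $\lambda$ (below $\ln 2$, where $\Pr(Y\ge1)<1/2$), we have $t=2$. The constraint gives $\theta=\bigl(1/2-(1-e^{-\lambda})\bigr)/e^{-\lambda}$, and
\[
\Phi=\Pr(Y\ge2)+\theta\lambda e^{-\lambda}=\Pr(Y\ge2)+\lambda\bigl(1/2-(1-e^{-\lambda})\bigr).
\]
As $\lambda\to0$ this is about $\lambda/2$, while $1-e^{-\lambda}\approx\lambda$, so the ratio tends to $1/2$ and is harmless.

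For large $\lambda$, concentration gives $\Phi\to 1/2$ while $1-e^{-\lambda}\to1$, so the ratio tends to about $2$. A crude bound works here, for instance that $\Pr(Y\ge \text{median}+1)$ is bounded below by a constant. Alternatively, one can use monotonicity of $\Phi$ on each interval of $\lambda$ where $t$ is fixed.

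The main obstacle is the intermediate range, roughly $\lambda\in[0.5,5]$, where $t$ jumps and the worst ratio $\approx 3.44$ is attained. There I would treat each fixed $t\in\{2,3,\dots\}$ as a separate piece on which $\Phi$ is an explicit elementary function of $\lambda$. On each piece I would verify $(1-e^{-\lambda})/\Phi(\lambda)\le 3.44$ by locating its maximum, checking endpoints and critical points numerically. For large $t$, a uniform bound would finish the argument, e.g.\ $\Pr(Y\ge t)\ge c$ once $\lambda\ge 5$, using known Poisson median bounds $\lambda-\ln2\le\text{median}<\lambda+1/3$.
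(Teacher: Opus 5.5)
Your route is the paper's. You reduce to a threshold function of $Y$, and your passage to the conditional expectation of $F$ given $Y$ plus fractional knapsack is a cleaner version of the paper's rearrangement step. You use $kP_k=\lambda P_{k-1}$ to get a closed form for the optimum $\Phi(\lambda)$, check the few pieces near the worst ratio numerically (it is $\approx 3.4307$, at $\lambda\approx1.678$ where $\Pr(Y\le 1)=1/2$), and close the tail with a uniform bound.

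\textbf{The tail step fails as written, because the threshold is misidentified.} $t$ is the \emph{smallest} integer with $\Pr(Y\ge t-1)\le 1/2$, i.e.\ $t=m+2$ with $m$ the median, not roughly $m+1$. Consequences:
\begin{itemize}
\item $\Phi=\Pr(Y\ge m+2)+\theta P_{m+1}\le \Pr(Y\ge m+1)$, so a lower bound on $\Pr(Y\ge m+1)$ says nothing about $\Phi$.
\item Your concrete bound $\Pr(Y\ge t)\ge 1/3.44$ for $\lambda\ge 5$ is false. At $\lambda=5$ one has $t=7$ and $\Pr(Y\ge 7)\approx 0.238$, while $(1-e^{-5})/3.44\approx 0.289$ is needed.
\end{itemize}

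\textbf{Repair 1 (the paper's).} Your two formulas combine to $\Phi=\frac12-\theta P_{m}-(1-\theta)P_{m+1}$. Since $m\ge\lambda-\ln 2$, we have $m+1>\lambda$, hence $P_{m+1}<P_m$. So $\Phi\ge\frac12-P_m\ge\frac12-\frac{m^m e^{-m}}{m!}$, which exceeds $0.304$ once $m\ge 4$. Numerics are then needed only for $m\le 3$.

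\textbf{Repair 2 (your monotonicity remark, made global).} Piecewise monotonicity alone does not suffice, since there are infinitely many pieces. Write $\theta P_{t-1}=\frac{\lambda}{t-1}\bigl(\frac12-\Pr(Y\ge t-1)\bigr)$ and use $\frac{d}{d\lambda}\Pr(Y\ge k)=P_{k-1}$. On each piece this gives $\Phi'(\lambda)=\frac{1}{t-1}\bigl(\frac12-\Pr(Y\ge t-1)\bigr)\ge 0$. Both one-sided limits at a breakpoint equal $\Pr(Y\ge t)$, so $\Phi$ is nondecreasing on all of $(0,\infty)$. Then $\Phi(\lambda)\ge\Phi(5)\approx 0.334>1/3.44\ge\Pr(Y\neq 0)/3.44$ settles every $\lambda\ge 5$ at once. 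This tail argument is slightly slicker than the paper's, and the numerics on $(0,5]$ are then exactly as you describe.
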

\begin{proof}
    Let $\lambda = \Exp[Y]$ be the mean of Poisson random variable $Y$, and for $k\geq 0$, let $q_k$ denote the probability that $Y=k$,
    \[
        q_k = \Pr(Y=k) = \frac{\lambda^ke^{-\lambda}}{k!}
        .
    \]
    Let $m \geq 1$ be an integer and $\beta\in (0,1]$ a real such that 
    \begin{equation}   
        \label{eq:zeta-beta}
        \sum_{k\leq m} k q_k
        + \beta (m+1) q_{m+1} = \frac\lambda 2
        .        
    \end{equation}
    (We will see that  $m$ is the median of $Y$.)    
    Let $F^\ast$ be a random variable such that
    \[
        F^\ast = 
        \begin{cases}
            0 & \text{if $0\leq Y\leq m$}\\ 
            1-\beta & \text{if $Y = m+1$}\\ 
            1 & \text{if $Y \geq m+2$}\\ 
        \end{cases}
    \]
    We have that $F^\ast$ is in $[0,1]$, and we will show that $\Exp[F^\ast\cdot Y] = \frac\lambda 2$. We then argue that among all random variables $F$ that take values on $[0,1]$ and satisfy $\Exp[F\cdot Y] \geq \frac \lambda 2$,
    random variable $F^\ast$ has the smallest mean. 
    I.e., $\Exp[F^\ast] \leq \Exp[F]$, for any random variable $F$ that takes values on $[0,1]$ and $\Exp[F\cdot Y] \geq \frac \lambda 2$. Thus, to prove the lemma it suffices to show 
    \[
        \Exp[F^\ast] 
        \geq 
        \frac{\Pr(Y\neq 0)}{3.44}
        .
    \]
    We begin by showing that $\Exp[F^\ast\cdot Y] = \frac\lambda 2$:
    \[
      \begin{aligned}
        \Exp[F^\ast\cdot Y] 
        &=
        (1-\beta) (m+1) q_{m+1}
        +
        \sum_{k\geq m+2} k q_k
        \\&
        =
        (1-\beta) (m+1) q_{m+1}
        +
        \Exp[Y] -
        \sum_{k\leq m+1} k\cdot q_k
        \\&
        =
        \Exp[Y] - \frac\lambda 2
        \\&
        =
        \frac\lambda 2
        ,
      \end{aligned}
    \]
    where the second-last line was obtained using \cref{eq:zeta-beta}.

    Next, we explain why $\Exp[F^\ast] \leq \Exp[F]$, for any $F$ that takes values on $[0,1]$ and satisfies that $\Exp[F\cdot Y] \geq \frac \lambda 2$:
    For $y\geq 0$, let $z_y = \Exp[F\cdot \mathds{1}_{Y=y}] \leq q_y$.
    The constraint gives us that $\sum_y y\cdot z_y \geq \lambda/2$, and the goal is to minimize $\sum_y z_y$.
    Since $y$ is monotonically increasing, by the rearrangement inequality, the minimum for $\sum_y z_y$ is attained by maximizing values of $z_y$ for larger $y$, namely, setting  $z_y = \Exp[F\cdot Y] = q_y$, or equivalently, setting $F=1$ if $y>y_0$, and setting $F=0$ if $y<y_0$, for some $y_0$.

   It remains to show
    \[
        \Exp[F^\ast]
        \geq 
        \frac{\Pr(Y\neq 0)} {3.44}
        .
    \]
    First, notice that
    \begin{equation}
        \label{eq:expfast}
        \Exp[F^\ast]
        =
        (1-\beta) q_{m+1} + \sum_{k\geq m+2} q_k
        =
        1- \sum_{k\leq m}q_k - \beta q_{m+1}
        .
    \end{equation}
    Substituting the definition of $q_k$ to \cref{eq:zeta-beta} and simplifying gives
    \[
        \sum_{1\leq k\leq m} \frac{\lambda^ke^{-\lambda}}{(k-1)!}
        + \beta \frac{\lambda^{m+1}e^{-\lambda}}{m!} = \frac\lambda 2
        ,        
    \]
    thus
    \begin{equation}
        \label{eq:zmedian}        
        \sum_{k\leq m-1} q_k
        +\beta q_{m} = \frac1 2
        .
    \end{equation}
    Since $\beta\in(0,1]$, the above implies that $m$ is the smallest integer such that $\sum_{k\leq m} q_k\geq \frac12$, i.e., $m$
    is the median of $Y$.
    For every $i \geq 0$, let $\lambda_{i+1}$ be the value of $\lambda$ that satisfies
    \[
        \sum_{k\leq i} q_k = \frac1 2;
    \]
    let also $\lambda_{0}=0$.
    Then, for $\lambda_{i} < \lambda \leq \lambda_{i+1}$ the median is $m=i$.

    We first use \cref{eq:zmedian} to compute a loose bound on $\Exp[F^\ast]$, which is however enough to prove the lemma for most values of $\lambda$. 
    Combining \cref{eq:expfast,eq:zmedian}, we obtain
    \[
        \Exp[F^\ast]
        =
        \frac12 - (1-\beta)q_{m} - \beta q_{m+1}
        =
        \frac12 - q_{m} +\beta(q_{m}-q_{m+1})
        \geq 
        \frac12 - q_{m},
    \]
    where the last inequality holds because 
    \[
        q_{m+1} = q_m\cdot\frac{\lambda}{m+1} < q_m,
    \]
    since $m \geq \lambda + \ln(2)$~\cite{Choi1994} and thus $m+1 > \lambda$.
    
    For any $i\geq1$ and $\lambda_{i} < \lambda \leq \lambda_{i+1}$, the above inequality gives
    \[
        \Exp[F^\ast]
        \geq
        \frac12 - q_i
        =
        \frac12 - \frac{\lambda^ie^{-\lambda}}{i!}
        \geq
        \frac12 - \frac{i^ie^{-i}}{i!}
        ,
    \]
    since $\lambda^ie^{-\lambda}$ is minimized for $\lambda=i$. 
    Note also that that term $\frac{i^ie^{-i}}{i!}$ decreases as $i$ increases.
    Then, for $i\geq 4$, 
    \[
        \Exp[F^\ast]
        \geq 
        \frac12 - \frac{4^4e^{-4}}{4!}
        > 0.304 >
        \frac1{3.3}.
    \]
    Therefore the lemma holds for $\lambda > \lambda_4$. 
    
    For the case of $\lambda\leq \lambda_4$, we analyze the precise formula for $\Exp[F^\ast]$.
   
    We substitute in \cref{eq:expfast} the value of $\beta$ obtained by solving \cref{eq:zeta-beta} for $\beta$, to get
    \[
        \Exp[F^\ast]
        =
        1- \sum_{k\leq m}q_k 
        -
        \frac\lambda{m+1}\left(\frac12 - \sum_{k\leq m-1} q_k \right)
        .
    \]
    For $\lambda \leq \lambda_1$ (and thus $m=0$), we have $\Exp[F^\ast]=1-\lambda/2$. This gives
    \[
        \frac{\Exp[F^\ast]}{\Pr(Y\neq 0)}
        =
        \frac{1-e^{-\lambda}-\lambda/2}{1-e^{-\lambda}},
    \]
    which is a decreasing function on $\lambda$, thus the minimum is at the right endpoint $\lambda_1$ of the interval.
    For $\lambda_i < \lambda \leq \lambda_{i+1}$, for each $i\in\{1,2,3\}$, we have verified that the function   
    \[
        \frac{\Exp[F^\ast]}{\Pr(Y\neq 0)}
        =
        \frac{1- \sum_{k\leq i}q_k 
        -
        \frac\lambda{i+1}\left(\frac12 - \sum_{k\leq i-1} q_k \right)}{1-q_0}
    \]
    is concave in each of the three intervals (the second derivative is negative), which implies the minimum lies in one of the points $\lambda_i$, for $i\in\{1,2,3,4\}$. 
    A plot of the function is shown in \cref{fig:plot}. 
    Recall that $\lambda_{i+1}$ is the value of $\lambda$ that satisfies
    $\sum_{k\leq i} q_k = \frac1 2$.
    Solving that and substituting above we obtain that the global minimum is at point 
    $\lambda_2 = 1.678347$ and the minimum value is $\frac1{3.43068}$.
\end{proof}

\begin{figure}[t]
    \centering
    \includegraphics[width=0.5\linewidth]{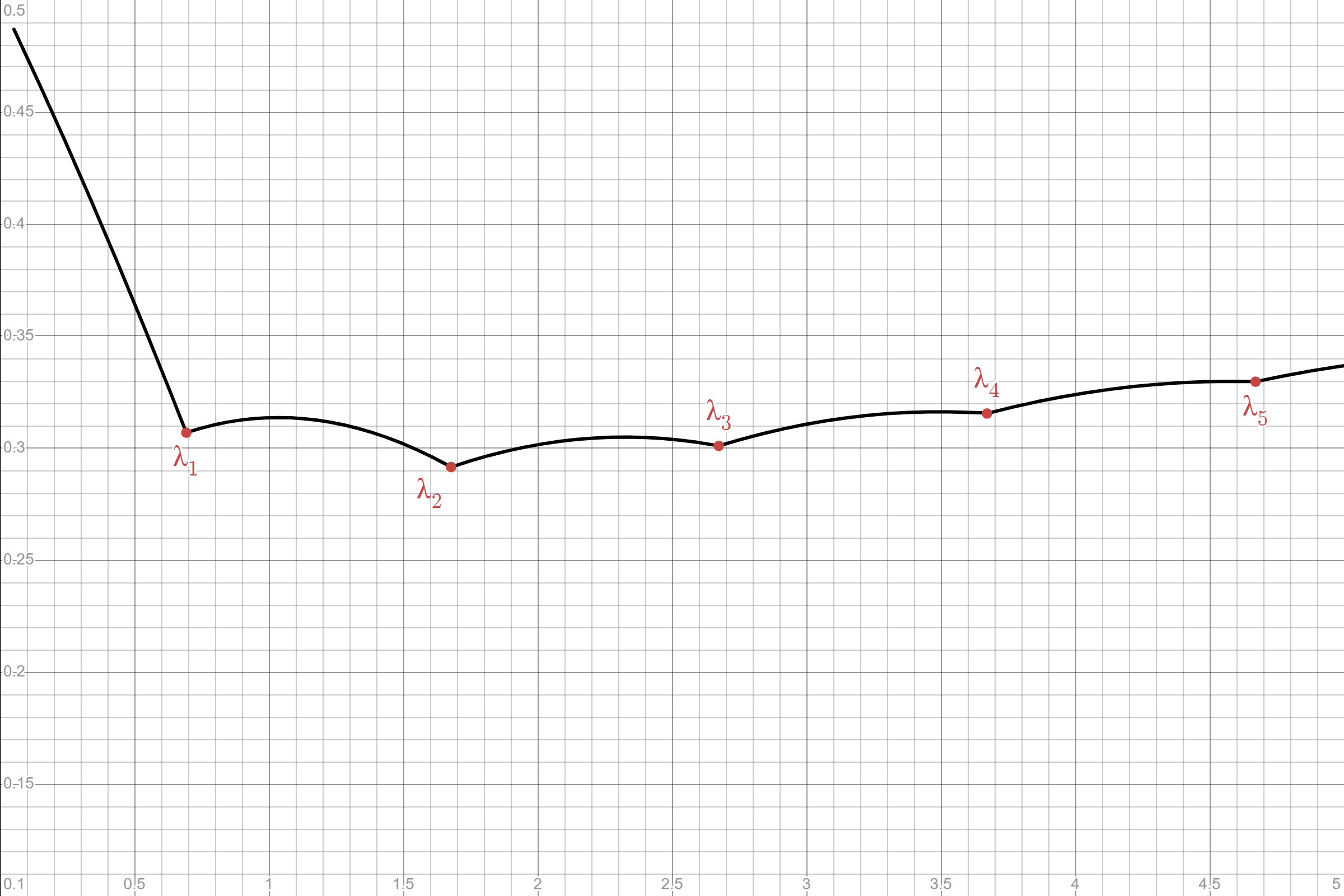}
    \caption{Plot of $\frac{\Exp[F^\ast]}{\Pr(Y\neq 0)}$ as a function of $\lambda$, used in the proof of \cref{lem:poisson-lemma}.}
    \label{fig:plot}
\end{figure}

\section{Improved Approximation for Minimum Vertex Cover in \texorpdfstring{$\poly(1/\eps)$}{poly(eps){-1}} Rounds}
\label{sec:BetterApxMVC}

In this section we show that we can improve the approximation factor to $2+\epsilon$ at the cost of not too many communication rounds, namely, $\poly(1/\eps)$. We give an algorithm that finds a vertex cover of $G^\ast$ with the following properties.

\VCInPolyRrounds*

Our approach is based on an algorithm from \cite
{BehnezhadBD22}, but requires several modifications, as mentioned earlier. 
Throughout the algorithm we will use the following constants which we state here for convenience. 
Let $0<\varepsilon \leq\frac{1}{4}$, and let 
\begin{gather*}
    \varepsilon_1 = \varepsilon^3
    \qquad
    \varepsilon_2 = \varepsilon + \varepsilon^3
    \qquad
    \varepsilon_3 = \varepsilon - \varepsilon^3
    \\
    \xi = \frac{1+\varepsilon_2}{\varepsilon_1} = \frac{1+\varepsilon+\varepsilon^3}{\varepsilon^3}
    \\
    \varepsilon_4 = \varepsilon_2 +\varepsilon_3
    = 2\varepsilon
    \qquad
    \varepsilon_5 = \frac{\varepsilon_1}{(\varepsilon_2 - \varepsilon_1)^2}
    = \varepsilon
    \\
    \epsilon 
    =
    \frac{(2 + \varepsilon_5)(1+\varepsilon_4)}{1 - \varepsilon_5} - 2
    =
    \frac{(2+\varepsilon)(1 + 2\varepsilon)}{1 - \varepsilon} - 2
    \leq 10\varepsilon
    ,
\end{gather*}
where for the last inequality we used that $\varepsilon\leq 1/4$.

\paragraph{Distributed Algorithm}

\begin{enumerate}
    \item {\bf(Preprocessing Stage)} 
    Run \Cref{alg:waterfilling} on $G=(V,E)$ to obtain a fractional matching $\phi=(\phi_e)_{e\in E}$ of $G$. 
    \Cref{alg:waterfilling} is similar to the algorithm in \cite{BehnezhadBD22}, except that the fractional edge weights change proportionally to their realization probability, instead of uniformly.
    For each vertex $v$, let $\phi_v = \sum_{e\ni v}\phi_e$, and let
    \[
        F = \{v\in V \colon \phi_v = 1\}.
    \]

    \item {\bf(Distributed Phase)} Let $G^\ast = (V,E^\ast)$ denote the realized graph. Define $\chi=(\chi_e)_{e\in E}$ as 
    \[
        \chi_e = 
        \begin{cases}
            \frac{\phi_e}{p_e}, & \text{if } e\in E^\ast\\
            0, & \text{if } e\in E\setminus E^\ast.    
        \end{cases}
    \]
    
    Let $\chi_v = \sum_{e\ni v}\chi_e$, for $v\in V$. Let the set of \emph{bad} vertices be 
    \[
        B = \left\{v\in V \colon  \chi_v \geq \phi_v + \varepsilon_2  \right\}.
    \]
    Let $Q$ be the set of all edges of $G$ with no endpoints in $F\cup B$,
    \[
        Q = \{e\in E \colon e \cap (F\cup B) =\emptyset\},
    \]
    and let $Q^\ast = Q\cap E^\ast$. We run \Cref{alg:distributedwaterfilling} in a distributed manner on graph $(V,Q^\ast)$ with $b=(\phi_v)_{v\in V}$ and obtain $\psi=(\psi_e)_{e\in Q^\ast}$. For convenience, we also define $\psi_e = 0$, for all $e\in E\setminus Q^\ast$.
    \Cref{alg:distributedwaterfilling} is a discretized version of the algorithm in \cite{BehnezhadBD22}. 
    The final output of the algorithm is the set 
    \[
        C=F\cup B\cup \set{v\in V\mid \phi_v+\psi_v\geq 1}.
    \]
\end{enumerate}

\begin{algorithm}[H]
\caption{\textsc{WaterFilling}($H = (V_H,E_H)$)}
\begin{algorithmic}[1]
\State For all edges $e$, $\phi_e\leftarrow 0$, initially
\State Call a vertex $v$ \emph{inactive} if $\phi_v\coloneqq \sum_{e\ni v}\phi_e=1$ and \emph{active} otherwise; similarly, an edge is {active} iff both its endpoints are active and inactive otherwise 
\While{{there exists an active edge} and $\phi_e<\varepsilon_1p_e$ for all edges $e$}
\State Pick minimum $\delta\in (0,1)$ such that $\phi_{e}\leftarrow \phi_e+\delta p_e$ for all $e$ active results in at least one new inactive vertex or $\phi_e=\varepsilon_1p_e$ for some edge $e$
\State $\phi_e\leftarrow \phi_e+\delta p_e$ for all active edges $e$
\EndWhile
\end{algorithmic}
\label{alg:waterfilling}
\end{algorithm}

\begin{algorithm}[H]
\caption{\textsc{DistributedWaterFilling}($H=(V_H,E_H), b=(b_v)_{v\in V_H}$)}
\begin{algorithmic}[1]
\State For all edges $e$, $\psi_e\leftarrow 0$, initially
\State Call a vertex $v$ \emph{inactive} if $\psi_v\coloneqq \sum_{e\ni v}\psi_e \geq 1- b_v$ 
and \emph{active} otherwise; similarly, an edge is {active} iff both its endpoints are active and inactive otherwise
\While{{there exists an active edge} }
\State $\psi_e\leftarrow \psi_e+\frac{\varepsilon_3}{\xi}$ for all active edges $e$ \label{line:increaseweight}
\EndWhile
\end{algorithmic}
\label{alg:distributedwaterfilling}
\end{algorithm}

We now proceed to the analysis of the algorithm.

\begin{lemma}
    $C$ is a valid vertex cover, and the number of rounds of the distributed phase is $O(\xi\cdot\varepsilon_3^{-1})=O(1/\varepsilon^4)$.
\end{lemma}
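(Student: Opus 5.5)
We treat validity and round complexity separately.

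\textbf{Validity.} Fix a realized edge $e=uv\in E^\ast$. If $u$ or $v$ lies in $F\cup B$, then $e$ is covered because $F\cup B\subseteq C$. Otherwise $e\in Q$ by definition of $Q$, and since $e$ is realized, $e\in Q^\ast$. We then use the termination condition of \Cref{alg:distributedwaterfilling}: the loop stops only when no edge of $(V,Q^\ast)$ is active. So at termination at least one endpoint of $e$, say $u$, is inactive, meaning $\psi_u\geq 1-b_u=1-\phi_u$. Hence $\phi_u+\psi_u\geq 1$ and $u\in C$. Every realized edge is therefore covered, and $C$ is a vertex cover of $G^\ast$. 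This argument presupposes that the algorithm terminates, which the round bound supplies.

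\textbf{Round complexity.} Each iteration of the while loop costs $O(1)$ rounds of single-bit messages. Every vertex knows $b_v=\phi_v$ from preprocessing. In an iteration, each vertex announces one bit, whether it is active, to its neighbors in $Q^\ast$. A vertex $v$ can tell which of its incident edges lie in $Q^\ast$ once it knows which neighbors are in $F\cup B$. Membership in $F$ is known from preprocessing. Membership in $B$ depends only on $v$'s own realized edges and the known values $\phi_e,p_e$, so one preliminary round of single-bit messages suffices for it. After that, each vertex updates $\psi_e$ locally by $\varepsilon_3/\xi$ on the edges it sees as active.

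It remains to bound the number of iterations. In every iteration in which an edge $e=uv$ is active, $\psi_u$ increases by at least $\varepsilon_3/\xi$. While $u$ is active, $\psi_u<1-\phi_u\leq 1$. Let $e$ be active in some iteration $t$. In every earlier iteration $u$ was active, since vertices never reactivate because $\psi$ only grows, and likewise $v$; so $e$ was active throughout. Hence $\psi_u\geq (t-1)\varepsilon_3/\xi$ at the start of iteration $t$. Combined with $\psi_u<1$, this gives $t\leq \xi/\varepsilon_3+1$. Thus after $O(\xi\varepsilon_3^{-1})$ iterations no active edge remains.

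Substituting $\xi=(1+\varepsilon+\varepsilon^3)/\varepsilon^3=O(\varepsilon^{-3})$ and $\varepsilon_3=\varepsilon-\varepsilon^3\geq \varepsilon/2$ for $\varepsilon\le 1/4$ yields $O(1/\varepsilon^4)$ rounds.

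The main subtlety is the local knowledge needed to simulate the algorithm, namely that each vertex can identify its $Q^\ast$ edges. This is settled by the one preliminary round announcing membership in $B$. The rest is routine.
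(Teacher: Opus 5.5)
Your proof is correct and takes the same approach as the paper. Validity comes from $F\cup B\subseteq C$ together with the termination condition of \Cref{alg:distributedwaterfilling} on edges of $Q^\ast$, and the round bound comes from the fixed per-iteration increment $\varepsilon_3/\xi$ against the cap $\psi_u<1-\phi_u\le 1$. You also spell out what the paper leaves implicit: the local simulation (one preliminary round to announce membership in $B$) and the monotonicity argument that turns the increment into an explicit bound on the number of iterations.
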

\begin{proof}
    Since in \Cref{alg:distributedwaterfilling} Line \ref{line:increaseweight}, we increase the weight on the edge by $\frac{\varepsilon_3}{\xi}$ in each round, the total number of rounds until a vertex becomes inactive is at most $O(\xi\cdot \varepsilon_3^{-1})$. To see that $C$ is a valid vertex cover, first observe that any edges with an end point in $F\cup B$ are automatically covered by $C$. For any $e=\{u,v\}\in Q^\ast$, \Cref{alg:distributedwaterfilling} ensures that either $\phi_v+\psi_v\geq1$ or $\phi_u+\psi_u\geq1$. Thus, by definition of $C$, at least one of $u$ or $v$ are included in $C$.
\end{proof}

In the remaining of this section we bound the approximation ratio of the algorithm.

\begin{lemma}
    \label{lem:keren}
    For all $e\in E$, $\phi_e \leq \varepsilon_1\cdot p_e$, and, in particular, $\phi_e = \varepsilon_1\cdot p_e$ if $e\cap F = \emptyset$ (i.e., $e$ is still active at the end of \Cref{alg:waterfilling}).   
\end{lemma}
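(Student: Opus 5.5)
The plan is to track an invariant through the iterations of \Cref{alg:waterfilling}. The key observation is that every edge that is ever increased is increased by the same multiple $\delta p_e$ of its own realization probability in each iteration. So we will maintain a global scalar $t$, the total $\sum \delta$ accumulated so far.

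\textbf{Invariant.} At all times, every edge $e$ that is currently active satisfies $\phi_e = t\cdot p_e$. Every edge $e$ that is inactive satisfies $\phi_e = t_e\cdot p_e$, where $t_e \le t$ is the value of $t$ at the moment $e$ became inactive.

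\textbf{Proof of the invariant.} We argue by induction over the iterations of the while loop.
\begin{itemize}
\item Initially all $\phi_e = 0$ and $t = 0$.
\item In an iteration, active edges grow by $\delta p_e$ and $t$ grows by $\delta$, so $\phi_e = t p_e$ is preserved for them.
\item Inactive edges stay frozen: once an endpoint has $\phi_v = 1$, the edge is never active again, since $\phi_v$ never decreases.
\end{itemize}

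\textbf{First claim: $\phi_e \le \varepsilon_1 p_e$.} It suffices to show $t \le \varepsilon_1$ throughout. The loop only runs while $\phi_e < \varepsilon_1 p_e$ for all edges. The choice of $\delta$ is the minimum step at which either some vertex saturates or some edge reaches $\varepsilon_1 p_e$. Because every active edge has the same ratio $\phi_e/p_e = t$, the edge-threshold event occurs exactly when $t$ hits $\varepsilon_1$, so $\delta \le \varepsilon_1 - t$.

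Here lies a subtlety. The loop condition says ``$\phi_e<\varepsilon_1 p_e$ for all edges'', yet inactive edges have ratio $t_e \le t$. So if some active edge exists, the binding constraint is the active ones, with ratio $t$. We need $t < \varepsilon_1$ at the start of every iteration. This holds because once $t = \varepsilon_1$, either an active edge exists (so $\phi_e = \varepsilon_1 p_e$ violates the loop condition) or no active edge exists; in both cases the loop terminates. Hence $t \le \varepsilon_1$ always, and so $\phi_e = t_e p_e \le t p_e \le \varepsilon_1 p_e$.

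One also checks that $\delta$ is well defined. The vertex constraints $\phi_v \le 1$ keep the increments bounded, and the step must respect $\phi_v \le 1$ by the minimality of $\delta$. We also need that the required $\delta$ lies in $(0,1)$; this is fine since $\varepsilon_1 < 1$ and $p_e \le 1$.

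\textbf{Second claim: edges with $e \cap F = \emptyset$.} Such an edge has both endpoints with $\phi_v < 1$ at the end, so it remains active at termination. Termination happens either because no active edge exists, which is impossible here, or because some edge reached $\phi_{e'} = \varepsilon_1 p_{e'}$. By the argument above, that edge must be active, so $t = \varepsilon_1$. Since $e$ is active, the invariant gives $\phi_e = t p_e = \varepsilon_1 p_e$.

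We must also rule out the loop running forever. Each iteration either makes a new vertex inactive, which can happen at most $|V|$ times, or drives $t$ to $\varepsilon_1$, which ends the loop. So the number of iterations is finite.

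\textbf{Main obstacle.} The delicate step is arguing that termination via the edge threshold forces $t = \varepsilon_1$ for all still-active edges simultaneously. This relies entirely on the common-ratio invariant, which is exactly where proportional increases (rather than uniform ones) are used.
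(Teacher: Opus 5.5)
Your proof is correct and takes the same route as the paper: your accumulated scalar $t$ is the paper's $\sum_i \delta_i$, active edges sit at $\phi_e = t\,p_e$ while all others lie below, and the exit condition forces $t=\varepsilon_1$ whenever an active edge survives (otherwise $t\le\varepsilon_1$). One small nit on your extra care about well-definedness and termination: the edge $e'$ reaching $\varepsilon_1 p_{e'}$ need not be active, since it may become inactive in that same final step, but $t_{e'}\le t\le\varepsilon_1$ still gives $t=\varepsilon_1$.
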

\begin{proof}
    Let $\delta_i$ denote the increment $\delta$ that the algorithm uses in the $i$th while-loop iteration, and let $t$ denote the total number of iterations.
    At the end of the loop, we have for every edge $e$,
    \[
        \phi_e \leq \sum_{i=1}^t \delta_i\cdot p_e.
    \]
    In particular, if $e$ is still active then the above inequality holds as equality.
    If at the end of the algorithm there is some active edge then by the exit condition of the while loop, and by the way $\delta$ is chosen it follows that there is some active edge $e$ satisfying
    \[
        \phi_e = \varepsilon_1\cdot p_e.
    \]
    It follows that $\sum_{i=1}^t \delta_i = \varepsilon_1$, in this case.
    If there is no active edge at the end of the loop, then we similarly conclude that $\sum_{i=1}^t \delta_i \leq \varepsilon_1$.
    In both cases the lemma follows.
\end{proof}

\begin{lemma}
    \label{lem:mxdgrGQast}
    The maximum degree of subgraph $(V,Q^\ast)$ is at most $\xi = \frac{1+\varepsilon_2}{\varepsilon_1}$.
\end{lemma}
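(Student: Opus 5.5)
The argument is a direct counting bound that combines \Cref{lem:keren} with the definition of the bad set $B$. Fix a vertex $v$ and count its incident edges in $Q^\ast$. If $v\in F\cup B$, then by the definition of $Q$ no edge of $Q$ is incident to $v$, so its degree in $(V,Q^\ast)$ is $0$. So I assume $v\notin F\cup B$.

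Take any edge $e=vu\in Q^\ast$. Then $e\in Q$, so $e\cap F=\emptyset$. By \Cref{lem:keren}, $\phi_e=\varepsilon_1 p_e$, and since $e$ is realized,
\[
    \chi_e=\frac{\phi_e}{p_e}=\varepsilon_1 .
\]
Thus every edge of $Q^\ast$ at $v$ contributes exactly $\varepsilon_1$ to $\chi_v$. All terms $\chi_e$ are nonnegative, so
\[
    \varepsilon_1\cdot\deg_{Q^\ast}(v)\;\le\;\sum_{e\ni v,\ e\in Q^\ast}\chi_e\;\le\;\chi_v .
\]

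Next I bound $\chi_v$. Since $v\notin B$, we have $\chi_v<\phi_v+\varepsilon_2$. Moreover $\phi_v\le 1$, because \Cref{alg:waterfilling} never raises a vertex's total above $1$: vertices become inactive exactly when they reach $1$, and $\delta$ is chosen minimal so there is no overshoot. Combining these,
\[
    \varepsilon_1\cdot\deg_{Q^\ast}(v)<1+\varepsilon_2 ,
\]
so $\deg_{Q^\ast}(v)<(1+\varepsilon_2)/\varepsilon_1=\xi$, which proves the lemma.

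The only delicate point is confirming that each edge of $Q^\ast$ really has $\chi_e=\varepsilon_1$, which relies on the equality case of \Cref{lem:keren} for edges avoiding $F$. If that equality were unavailable, the inequality $\phi_e\le\varepsilon_1 p_e$ would point the wrong way for a lower bound on $\chi_v$. The equality is exactly what \Cref{lem:keren} provides, so no real obstacle remains.
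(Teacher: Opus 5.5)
Your proof is correct and is essentially the paper's argument: edges of $Q^\ast$ avoid $F$, so \Cref{lem:keren} gives $\chi_e=\varepsilon_1$ for each of them, and summing yields $\varepsilon_1\deg_{Q^\ast}(v)\le\chi_v<\phi_v+\varepsilon_2\le 1+\varepsilon_2$. Your justification of $\phi_v\le 1$ via the minimal (no-overshoot) choice of $\delta$ just unpacks the paper's remark that $\phi$ is a valid fractional matching of $G$.
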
 
\begin{proof}
    If $v\in F\cup B$ then $v$'s degree in $(V,Q^\ast)$ is zero.
    Let $v\in V\setminus(F\cup B)$.
    Let $d$ be $v$'s degree in $(V,Q^\ast)$, and let $e_1,\ldots,e_d$ be its incident edges.
    For each $1\leq i\leq d$, we have $e_i \cap F = \emptyset$, thus \cref{lem:keren} implies
    \[
        \phi_{e_i} = \varepsilon_1\cdot p_{e_i}
        .
    \]
    Then
    \[
        \chi_v 
        = 
        \sum_{e \ni v}\chi_e
        \geq 
        \sum_{i=1}^d \chi_{e_i}
        =
        \sum_{i=1}^d \frac{\phi_{e_i}}{p_{e_i}}
        =
        d\cdot \varepsilon_1
        .
    \]
    Also, since $v\notin B$ we have
    \[
        \chi_v < \phi_v + \varepsilon_2 \leq 1 + \varepsilon_2,
    \]
    where the second inequality holds because $\phi$ is valid fractional matching of $G$.
    Combining the last two equation gives $d\leq \frac{1+\varepsilon_2}{\varepsilon_1}$.
\end{proof}

The statement of the next lemma is similar to \cite[Lemma 6.2]{BehnezhadBD22}, but several elements of the proof are different.

\begin{lemma}
    \label{lem:FMy}
    There is a fractional matching $y=(y_e)_{e\in E^\ast}$ such that 
    \begin{equation}
        \label{eq:expCub}
        \Exp[|C|] \leq (2+\epsilon)\Exp[|y|]
        ,
    \end{equation}
    where $\epsilon\leq 10\varepsilon$.
\end{lemma}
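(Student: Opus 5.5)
The plan is to construct $y$ explicitly from $\chi$ and $\psi$, and then to charge every vertex of $C$ to the weight of $y$.

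\textbf{Step 1: construct $y$.} Set
\[
y_e = \frac{\chi_e\cdot\mathbb{1}[e\cap B=\emptyset] + \psi_e}{1+\varepsilon_4}, \qquad e\in E^\ast.
\]
To check feasibility, fix a vertex $v\notin B$. Since $v\notin B$, we have $\chi_v<\phi_v+\varepsilon_2$. The edges in $Q^\ast$ at $v$ satisfy $\psi_v\le 1-\phi_v+\varepsilon_3/\xi$, because \Cref{alg:distributedwaterfilling} stops increasing the edges at $v$ once $\psi_v\ge 1-\phi_v$. The overshoot is at most one step per edge, and there are at most $\xi$ edges by \cref{lem:mxdgrGQast}, so the overshoot is at most $\varepsilon_3$. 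Hence the total load at $v$ is at most $1+\varepsilon_2+\varepsilon_3=1+\varepsilon_4$. For $v\in B$, edges touching $B$ get $\chi$-weight zero, and $\psi_v=0$ because such edges are not in $Q$. So $y$ is a valid fractional matching of $G^\ast$. Removing edges that touch $B$ costs $\Exp$-mass, and this is where $\varepsilon_5$ will enter.

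\textbf{Step 2: charge $C$ to $y$.} A vertex $v$ in $F$ or in $\{v:\phi_v+\psi_v\ge1\}$ has "total load" $\phi_v+\psi_v\ge 1$. Since $\Exp[\chi_v]=\phi_v$ by linearity, this gives
\[
\Exp\bigl|F\cup\{\phi_v+\psi_v\ge1\}\bigr|\le\sum_v \Exp[\phi_v+\psi_v]=\sum_v\Exp[\chi_v+\psi_v]=2\Exp\Bigl[\sum_e(\chi_e+\psi_e)\Bigr].
\]
The factor $2$ comes from each edge being counted at both endpoints. For bad vertices, I would use a second-moment (Chebyshev) bound. Here $\chi_v$ is a sum of independent terms, each bounded by $\varepsilon_1$ (since $\phi_e/p_e\le\varepsilon_1$ by \cref{lem:keren}), so $\Var[\chi_v]\le\varepsilon_1\phi_v$. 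This yields
\[
\Pr(v\in B)\le \frac{\varepsilon_1\phi_v}{\varepsilon_2^2}.
\]
The tighter form needed is one that gives $\Exp[|B|]$ together with the lost $\chi$-mass on edges touching $B$ at most an $\varepsilon_5=\varepsilon_1/(\varepsilon_2-\varepsilon_1)^2$ fraction of $\sum_v\phi_v$. I would condition on one edge being realized: with $\chi_e$ fixed, the remaining sum must deviate by at least $\varepsilon_2-\varepsilon_1$. Applying Chebyshev conditionally bounds $\Exp[\chi_v\mathbb{1}[v\in B]]\le\varepsilon_5\phi_v$, and likewise $\Pr(v\in B)\le\varepsilon_5\phi_v$.

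\textbf{Step 3: combine the bounds.} Let $S=\sum_e\Exp[\chi_e+\psi_e]$. Steps 1 and 2 give
\[
\Exp|C|\le 2S+\varepsilon_5\sum_v\phi_v
\qquad\text{and}\qquad
(1+\varepsilon_4)\Exp|y|\ge S-\varepsilon_5\sum_v\phi_v.
\]
Also $\sum_v\phi_v\le 2S$. Substituting and rearranging yields
\[
\Exp|C|\le\frac{(2+\varepsilon_5)(1+\varepsilon_4)}{1-\varepsilon_5}\Exp|y|,
\]
which matches the definition of $\epsilon$ (the exact bookkeeping may shift constants slightly).

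The main obstacle is Step 2's conditional Chebyshev bound. It must control the mass lost on edges incident to $B$, which requires the conditioning on $e\in E^\ast$ to produce exactly the $(\varepsilon_2-\varepsilon_1)^2$ denominator. It must also handle the correlation between $\psi$ and $\chi$ at vertices near $B$; I would handle that by noting that $\psi$ lives only on $Q^\ast$, whose edges avoid $B$ entirely.
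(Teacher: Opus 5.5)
Your proposal is correct and follows essentially the paper's route: the same $y$ (your $\chi_e\mathds{1}[e\cap B=\emptyset]+\psi_e$ coincides with the paper's because $\psi$ is supported on $Q^\ast$), the same charging $\Exp[|C|]\le 2(|\phi|+\Exp[|\psi|])+\Exp[|B|]$, and the same Chebyshev bound after conditioning on $e\in E^\ast$ with the threshold lowered by $\varepsilon_1$, which the paper packages as the auxiliary set $B^+$. The only slips are in the bookkeeping, and neither affects the argument. First, the overshoot bound should read $\psi_v\le 1-\phi_v+\varepsilon_3$, as you then say yourself. Second, since $\sum_v\phi_v=2|\phi|$, your bounds actually yield the ratio $\frac{(2+2\varepsilon_5)(1+\varepsilon_4)}{1-2\varepsilon_5}$ rather than the stated one; the paper's bound $\Exp[|B^+|]\le\varepsilon_5|\phi|$ drops the same factor $2$, and this only changes the constant in $\epsilon=O(\varepsilon)$.
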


We start with defining our \emph{candidate} fractional matching $y$.  Let $\varepsilon_4 = \varepsilon_2 + \varepsilon_3$. For each $e\in E^\ast$, 
    \[  
        y_e = \left(\chi_e + \psi_e\right) \cdot \frac1{1+\varepsilon_4} \cdot \mathds{1}_{e\cap B = \emptyset}.
    \]
That is, $y_e = 0$ if $e$ has an endpoint in $B$ and is $\frac{\chi_e+ \psi_e}{1+\varepsilon_4}$ otherwise. We also define $y_e = 0$ if $e\in E\setminus E^\ast$.

\begin{lemma}
    The candidate fractional matching $y$ is a valid fractional matching.
\end{lemma}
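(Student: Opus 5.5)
We need to show that for every vertex $v$, $y_v=\sum_{e\ni v} y_e\le 1$; nonnegativity is immediate, since $\chi_e,\psi_e\ge 0$. The plan is to bound $\chi_v$ and $\psi_v$ separately and then use the scaling factor $\frac{1}{1+\varepsilon_4}$ to absorb the two additive error terms $\varepsilon_2$ and $\varepsilon_3$. First we dispose of trivial vertices. If $v\in B$, then every edge at $v$ has an endpoint in $B$, so $y_e=0$ for all $e\ni v$ and $y_v=0$. Hence assume $v\notin B$. Then, restricting to edges with no endpoint in $B$,
\[
    y_v \le \frac{1}{1+\varepsilon_4}\Bigl(\sum_{e\ni v}\chi_e + \sum_{e\ni v}\psi_e\Bigr) = \frac{\chi_v+\psi_v}{1+\varepsilon_4},
\]
and $v\notin B$ gives $\chi_v<\phi_v+\varepsilon_2$.

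Next we bound $\psi_v$ using the structure of \Cref{alg:distributedwaterfilling} run on $(V,Q^\ast)$ with $b=\phi$. If $v\in F\cup B$, then $v$ has no edges in $Q^\ast$, so $\psi_v=0$. In that case (so $v\in F$) $\chi_v<1+\varepsilon_2$, and $y_v<\frac{1+\varepsilon_2}{1+\varepsilon_4}\le 1$.

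Otherwise, $v$ is active at the start of \Cref{alg:distributedwaterfilling} exactly when $\psi_v=0<1-\phi_v$. Its incident weights only increase while $v$ is active, that is, while $\psi_v<1-\phi_v$. In the last round in which $v$ is active, $\psi_v$ increases by at most $\deg_{Q^\ast}(v)\cdot\frac{\varepsilon_3}{\xi}$. By \cref{lem:mxdgrGQast}, $\deg_{Q^\ast}(v)\le\xi$, so this overshoot is at most $\varepsilon_3$. Therefore $\psi_v< 1-\phi_v+\varepsilon_3$. If $v$ is never active, because $\phi_v\ge 1$, then $\psi_v=0\le 1-\phi_v+\varepsilon_3$ trivially. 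We will state this overshoot argument carefully: the step before the final increment satisfies $\psi_v<1-\phi_v$, and one increment adds at most $\xi\cdot\varepsilon_3/\xi$. This is the main point needing care, since it is exactly where the degree bound of \cref{lem:mxdgrGQast} enters.

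Combining the two bounds gives
\[
    \chi_v+\psi_v<\phi_v+\varepsilon_2+1-\phi_v+\varepsilon_3=1+\varepsilon_4.
\]
Hence $y_v<1$, which completes the proof that $y$ is a valid fractional matching of $G^\ast$. It is supported on $E^\ast$ by definition, since $\chi_e=0$ and $\psi_e=0$ off $E^\ast$ (as $\psi$ is supported on $Q^\ast\subseteq E^\ast$).
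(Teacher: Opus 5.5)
Your proof is correct and follows essentially the same route as the paper: $v\notin B$ gives $\chi_v<\phi_v+\varepsilon_2$, and the single-increment overshoot combined with the degree bound of \cref{lem:mxdgrGQast} gives $\psi_v\le 1-\phi_v+\varepsilon_3$, so the factor $1/(1+\varepsilon_4)$ absorbs both errors. Your separate treatment of $v\in F$ is harmless but unnecessary, since the general bound $\psi_v\le 1-\phi_v+\varepsilon_3$ already covers it. Likewise, the never-active case is vacuous, because $v\notin F$ forces $\phi_v<1$.
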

\begin{proof}
    For all $v\in B$, we have $y_v=0$, thus satisfying the fractional matching constraints trivially. 
    Let $v\notin B$. Then
    \[
        y_v\leq 
        \sum_{e\ni v}
        \frac{\chi_e+\psi_e}{1+\varepsilon_4}
        \leq \frac{\chi_v+\psi_v}{1+\varepsilon_4}
        .
    \]
    From \cref{alg:distributedwaterfilling}, we have that $\psi_v \leq 1-\phi_v$ while $v$ is still active, and if $v$ becomes inactive, then in the iteration when $v$ first becomes inactive, $\psi_v$ increases by at most $\frac{\varepsilon_3}{\xi}$ times the degree of $v$ in $(V,Q^\ast)$, which is at most $\xi$, from \cref{lem:mxdgrGQast}.
    Thus, at the end of \cref{alg:distributedwaterfilling},
    \[
        \psi_v \leq 1-\phi_v + \frac{\varepsilon_3}{\xi} \cdot \xi
        =
        1-\phi_v+\varepsilon_3
        .
    \]
    Also $v\notin B$ implies 
    $
        \chi_v 
        \leq \phi_v+\varepsilon_2
    $.
    Substituting these two inequalities above gives
    \[
        y_v
        \leq 
        \frac{1+\varepsilon_2+\varepsilon_3}{1+\varepsilon_4}
        =
        1
        ,
    \]
    thus $v\in B$ satisfies the fractional matching constraint, as well.
\end{proof}

We define the following set $B^+\supseteq B$, which uses a slightly lower threshold value than $B$, namely $\phi_v + \varepsilon_2 - \varepsilon_1$ instead of $\phi_v + \varepsilon_2$,
    \[
        B^+ = \left\{v\in V \colon  \chi_v \geq \phi_v + \varepsilon_2 - \varepsilon_1  \right\}.
    \]
    
The next lemma establishes a lower bound on $\Exp[|y|]$.
\begin{lemma}
    \label{lem:expy}
    $
    (1+\varepsilon_2)\cdot \Exp[|y|]
    \geq
    |\phi|
    +
    \Exp[|\psi|]
    -
    \Exp[|B^+|]
    .
    $
\end{lemma}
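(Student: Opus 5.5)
The plan is to lower-bound $\Exp\bigl[\sum_{e\in E^\ast,\,e\cap B=\emptyset}(\chi_e+\psi_e)\bigr]$ by $|\phi|+\Exp[|\psi|]-\Exp[|B^+|]$, and then convert this into a bound on $|y|$ with the factor $1+\varepsilon_2$. There are two mass contributions, the $\chi$-part and the $\psi$-part. The $\psi$-part needs no work: $\psi$ is supported on $Q^\ast$, and every edge of $Q$ avoids $B$. So the indicator $\mathds{1}_{e\cap B=\emptyset}$ removes nothing from $\psi$, and its full expected mass $\Exp[|\psi|]$ survives.

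For the $\chi$-part, first observe that $\Exp[\chi_e]=p_e\cdot\phi_e/p_e=\phi_e$, so $\Exp[|\chi|]=|\phi|$. The mass lost to $B$ is at most $\sum_{v}\Exp[\chi_v\mathds{1}_{v\in B}]$, because every edge touching $B$ is charged to at least one endpoint in $B$. The key step is
\[
\Exp[\chi_v\mathds{1}_{v\in B}]\le \phi_v\Pr(v\in B^+)\le\Pr(v\in B^+).
\]
To prove it, write $\Exp[\chi_e\mathds{1}_{v\in B}]=\phi_e\Pr(v\in B\mid e\in E^\ast)$. By \cref{lem:keren}, $\chi_e=\phi_e/p_e\le\varepsilon_1$. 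Hence the event $v\in B$ together with $e\in E^\ast$ implies $\chi_v-\chi_e\ge\phi_v+\varepsilon_2-\varepsilon_1$. The quantity $\chi_v-\chi_e$ is independent of $X_e$, and it is at most $\chi_v$, so this event in turn implies $v\in B^+$. Therefore $\Pr(v\in B\mid e\in E^\ast)\le\Pr(\chi_v-\chi_e\ge\phi_v+\varepsilon_2-\varepsilon_1)\le\Pr(v\in B^+)$. Summing over $e\ni v$ and using $\phi_v\le1$ gives the displayed bound, and summing over $v$ yields the total loss bound $\Exp[|B^+|]$.

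Combining the two parts gives
\[
\Exp\Bigl[\textstyle\sum_{e\cap B=\emptyset}(\chi_e+\psi_e)\Bigr]\ge|\phi|+\Exp[|\psi|]-\Exp[|B^+|].
\]

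The main obstacle is the constant. By definition $y_e=(\chi_e+\psi_e)/(1+\varepsilon_4)$, so the display above directly gives only $(1+\varepsilon_4)\Exp[|y|]\ge\dots$. Reaching the stated factor $1+\varepsilon_2$ requires absorbing the extra $\varepsilon_3\Exp[|y|]$. My first attempt is a per-vertex charging argument. Split the total mass at each $v\notin B$ into its $\chi$-part, which is at most $\phi_v+\varepsilon_2$, and its $\psi$-part, which is at most $1-\phi_v+\varepsilon_3$. The $\varepsilon_3$ overshoot of $\psi_v$ beyond $1-\phi_v$ occurs only at vertices that become inactive in \Cref{alg:distributedwaterfilling}, and only in their final round. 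I would try to show that this overshoot is paid for, in expectation, by slack elsewhere, namely by the $\varepsilon_1$-gap between $B$ and $B^+$ or by edges whose other endpoint is still active.

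If this charging argument fails, the fallback is to derive the inequality with $1+\varepsilon_4$. I would then check whether the later use in \cref{lem:FMy} only needs a factor of the form $1+O(\varepsilon)$. Since $\varepsilon_4=2\varepsilon$, such a factor still yields $\epsilon\le10\varepsilon$, so the final guarantee would be unaffected.
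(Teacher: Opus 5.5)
Your core argument is the paper's proof. The paper also uses $\Exp[\chi_e]=\phi_e$, notes that the $\psi$-mass survives intact because $\psi$ is supported on $Q^\ast$ and $Q$ avoids $B$, and charges each lost edge to an endpoint in $B$. It then bounds that loss through $\Pr(v\in B\mid e\in E^\ast)\le\Pr(v\in B^+)$. Your justification of this last step is more careful than the paper's one-line remark: you use independence of $\chi_v-\chi_e$ from $X_e$ together with $\chi_e\le\varepsilon_1$.

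Like yours, the paper's proof only establishes the inequality with factor $1+\varepsilon_4$. Its first line is $(1+\varepsilon_4)\cdot\Exp[|y|]=\dots$. The later step in the proof of \cref{lem:FMy} also invokes it with $(1+\varepsilon_4)$, and $\epsilon$ is defined there in terms of $\varepsilon_4$. So the $\varepsilon_2$ in the statement is a typo, and your fallback is exactly the intended lemma.

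Drop the charging attempt, because the inequality with $1+\varepsilon_2$ is false. Take $G$ to be a single edge $e$ with $p_e=1$.
\begin{itemize}
\item Water-filling stops with $\phi_e=\varepsilon_1$, so $|\phi|=\varepsilon_1$ and $F=\emptyset$.
\item For both endpoints $\chi_v=\phi_v$, and $\varepsilon_2-\varepsilon_1=\varepsilon>0$, so $B=B^+=\emptyset$.
\item Hence $|y|=(\varepsilon_1+\psi_e)/(1+\varepsilon_4)$ deterministically.
\end{itemize}
This gives $(1+\varepsilon_2)\Exp[|y|]=\frac{1+\varepsilon_2}{1+\varepsilon_4}(\varepsilon_1+\psi_e)<\varepsilon_1+\psi_e=|\phi|+\Exp[|\psi|]-\Exp[|B^+|]$, since $\varepsilon_3=\varepsilon-\varepsilon^3>0$. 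No slack argument can absorb the $\varepsilon_3$ overshoot.
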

\begin{proof} 
    We have
    \begin{align*}
        (1+\varepsilon_4)\cdot \Exp[|y|]
        &=
        (1+\varepsilon_4) \sum_{e\in E}\Exp[y_e]
        \\&
        =
        \sum_{e\in E}\Exp\left[
        \left(\chi_e + \psi_e\right) \cdot \mathds{1}_{e\cap B = \emptyset}
        \cdot \mathds{1}_{e\in E^\ast}
        \right]
        \\&
        =
        \sum_{e\in E}\chi_e \cdot \Pr(
         e\cap B = \emptyset \,\land\, 
        e\in E^\ast
        )
        +
        \sum_{e\in E}\Exp\left[
        \psi_e  \cdot \mathds{1}_{e\cap B = \emptyset}
        \cdot \mathds{1}_{e\in E^\ast}
        \right]
        \\&
        =
        \sum_{e\in E}\chi_e\cdot p_e\cdot \Pr(
        e\cap B = \emptyset \mid e\in E^\ast
        )
        +
        \sum_{e\in E}\Exp\left[
        \psi_e  \right]
        \\&
        =
        \sum_{e\in E}\phi_e\cdot (1-\Pr(
        e\cap B \neq \emptyset \mid e\in E^\ast)
        )
        +
        \Exp[|
        \psi| ]
        \\&
        =
        |\phi| - \sum_{e\in E}\phi_e\cdot \Pr(
        e\cap B \neq \emptyset \mid e\in E^\ast)
        +
        \Exp[|
        \psi| ].
    \end{align*}
    Next we will use that if $e\ni v$ then $\Pr(v\in B \mid e\in E^\ast) \leq \Pr(v\in B^+)$, because by \cref{lem:keren}, a single edge $e$ can contribute at most $\varepsilon_1$ to $\chi_v$, 
    in the definition of $B$ and $B^+$.
    We bound the sum in the last line above as follows, 
    \begin{align*}
        \sum_{e\in E} \phi_e\cdot \Pr(
        e\cap B \neq \emptyset \mid e\in E^\ast)
        &\leq
        \sum_{v\in V}
        \sum_{e\ni v}
        \phi_e
        \cdot
        \Pr(v\in B \mid e\in E^\ast)
        \\&
        \leq
        \sum_{v\in V}
        \sum_{e\ni v}
        \phi_e
        \cdot
        \Pr(v\in B^+)
        \\&
        =
        \sum_{v\in V}
        \Pr(v\in B^+)
        \cdot \sum_{e\ni v}
        \phi_e
        \\&
        \leq
        \sum_{v\in V}
        \Pr(v\in B^+)
        \cdot 1
        \\&
        \leq
        \Exp[|B^+|]
        .
    \end{align*}
    Substituting this above completes the proof.
\end{proof}

We now give an upper bound on $\Exp[|C|]$.
\begin{lemma}
    \label{lem:expC}
    $
    \Exp[|C|]
    \leq
    2|\phi|
    +
    2\Exp[|\psi|]
    +
    \Exp[|B|]
    .
    $
\end{lemma}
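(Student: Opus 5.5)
By a union bound, $|C|\le |F|+|B|+|D|$, where $D=\{v\notin F\cup B : \phi_v+\psi_v\ge 1\}$. I will bound $|F|$ deterministically and $\Exp[|D|]$ in expectation, both through a charging argument. The basic tool is the double-counting identity $\sum_{v}\phi_v = 2|\phi|$, together with its analogue $\sum_v \psi_v = 2|\psi|$.

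First, $F$. Every $v\in F$ has $\phi_v=1$, so
\[
    |F| = \sum_{v\in F}\phi_v .
\]

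Second, $D$. Every $v\in D$ satisfies $1\le \phi_v+\psi_v$, so
\[
    |D| \le \sum_{v\in D}(\phi_v+\psi_v).
\]

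Combining the two bounds, and using that $F$ and $D$ are disjoint, gives
\[
    |F|+|D| \le \sum_{v\in F\cup D}\phi_v + \sum_{v\in D}\psi_v \le \sum_{v\in V}\phi_v + \sum_{v\in V}\psi_v = 2|\phi| + 2|\psi| .
\]
The middle inequality uses only that all $\phi_v,\psi_v$ are nonnegative, so we may extend both sums to all of $V$. The last equality is the double-counting identity: each edge contributes its weight to exactly two endpoints. This holds for $\phi$ on $G$ and for $\psi$ on $Q^\ast$, since $\psi_e=0$ outside $Q^\ast$.

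Adding $|B|$ gives the pointwise bound
\[
    |C| \le 2|\phi| + 2|\psi| + |B| .
\]
Taking expectations proves the lemma, because $\phi$ is deterministic (it is computed in preprocessing).

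There is one check to make. A vertex $v$ may lie in both $F$ and $D$. For such $v$, $\phi_v=1$, and every edge at $v$ lies outside $Q$, so $\psi_v=0$; the union bound handles this overlap harmlessly in any case. No real obstacle arises.
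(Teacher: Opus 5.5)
Your proof is correct and uses the same charging argument as the paper: pay for $F$ via $\phi_v=1$ and for the threshold vertices via $\phi_v+\psi_v\ge 1$, then apply $\sum_v\phi_v=2|\phi|$ and $\sum_v\psi_v=2|\psi|$. Your pointwise version, with $D$ explicitly excluding $B$, is slightly cleaner than the paper's, which bounds $\Exp[|C\setminus F|]$ by $\sum_{v\notin F}\Exp[\phi_v+\psi_v]$ even though $C\setminus F$ also contains $B\setminus F$. Your closing remark about $F\cap D$ is vacuous, since you defined $D$ to exclude $F$.
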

\begin{proof}
    Recall $C = F\cup B\cup \{v\in V \colon \phi_v + \psi_v \geq 1\}$.
    We have
    \begin{align*}
        \Exp[|C|]
        &=
        |F| + \Exp[|B\setminus F|] + \Exp[|C\setminus (F\cup B)|]
        \\&
        \leq
        |F| + \Exp[|B|] + \Exp[|C\setminus F|]
        .
    \end{align*}
    Also, from the definition of $C$ it follows
    \begin{align*}
        \Exp[|C\setminus F|]
        &\leq
        \sum_{v\in V\setminus F}
        \Exp[\phi_v + \psi_v ]
        \\&
        =
        \sum_{v\in V}
        \phi_v
        -
        \sum_{v\in F}
        \phi_v
        +
        \sum_{v\in V}
        \Exp[\psi_v ]
        \\&
        =
        2|\phi| - |F| + 2\Exp[|\psi|]
        .
    \end{align*}
    Substituting this above completes the proof.
\end{proof}

Finally, we show an upper bound on $\Exp[|B^+|]$.
Recall $\varepsilon_5 = \frac{\varepsilon_1}{(\varepsilon_2 - \varepsilon_1)^2}$.

\begin{lemma}
    \label{lem:expB}
    $\Exp[|B|] \leq \Exp[|B^+|] \leq \varepsilon_5 \cdot |\phi|$.
\end{lemma}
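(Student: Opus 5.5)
The first inequality is immediate: $B\subseteq B^+$ pointwise, because the threshold $\phi_v+\varepsilon_2-\varepsilon_1$ defining $B^+$ is lower than the threshold $\phi_v+\varepsilon_2$ defining $B$. For the second inequality I will bound $\Pr(v\in B^+)$ for each vertex $v$ separately by Chebyshev's inequality, and then sum over $v$.

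\textbf{Mean and variance of $\chi_v$.} Fix $v$ and write $\chi_v=\sum_{e\ni v}\chi_e$. The summands are independent, since the edges incident to $v$ are realized independently. Each $\chi_e$ takes the value $\phi_e/p_e$ with probability $p_e$ and $0$ otherwise, so $\Exp[\chi_e]=\phi_e$ and hence $\Exp[\chi_v]=\phi_v$. For the variance I use \cref{lem:keren}, which gives $\phi_e/p_e\le\varepsilon_1$:
\[
\Var(\chi_e)\le \Exp[\chi_e^2]=p_e\cdot\frac{\phi_e^2}{p_e^2}=\phi_e\cdot\frac{\phi_e}{p_e}\le\varepsilon_1\phi_e .
\]
Summing over the edges at $v$ and using independence gives $\Var(\chi_v)\le\varepsilon_1\phi_v$.

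\textbf{Chebyshev.} The event $v\in B^+$ is exactly $\chi_v-\Exp[\chi_v]\ge\varepsilon_2-\varepsilon_1$. Note that $\varepsilon_2-\varepsilon_1=\varepsilon>0$. Chebyshev's inequality (the one-sided Cantelli bound would also do) then gives
\[
\Pr(v\in B^+)\le\frac{\varepsilon_1\phi_v}{(\varepsilon_2-\varepsilon_1)^2}=\varepsilon_5\,\phi_v .
\]

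\textbf{Summing over vertices.} Summing over all $v$ yields $\Exp[|B^+|]\le\varepsilon_5\sum_v\phi_v$. Since each edge contributes $\phi_e$ to both of its endpoints, $\sum_v\phi_v=2|\phi|$, so this bound is off from the claim by a factor of $2$. This bookkeeping is the only delicate point.

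To remove the factor $2$, I would first try to bound $\Pr(v\in B^+)$ by $\varepsilon_5\phi_v/2$; one option is to use the fact that the deviations must exceed $\varepsilon$ in one direction only. I expect this to be the main obstacle: Cantelli gives $\sigma^2/(\sigma^2+t^2)$, which does not halve the bound. If a clean halving is not available, I would adopt the convention that $|\phi|$ denotes $\sum_v\phi_v$. This is consistent with how the proof of \cref{lem:expC} uses it, where $\sum_v\phi_v$ is written as $2|\phi|$; there the factor is absorbed either way. In that case I would accept $\Exp[|B^+|]\le 2\varepsilon_5|\phi|$ and note that this only changes the constant in $\epsilon=O(\varepsilon)$. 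The core argument, however, is the per-vertex Chebyshev bound above.
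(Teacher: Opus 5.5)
Your argument is the paper's proof step for step: $\Exp[\chi_v]=\phi_v$, then $\Var[\chi_v]\le\varepsilon_1\phi_v$ via \cref{lem:keren}, then Chebyshev with deviation $\varepsilon_2-\varepsilon_1$, then a sum over $v$. Your factor-$2$ bookkeeping is also correct, and it exposes a slip in the paper itself. The paper's final display writes $\sum_v \varepsilon_1\phi_v/(\varepsilon_2-\varepsilon_1)^2$ as $\varepsilon_1|\phi|/(\varepsilon_2-\varepsilon_1)^2$. Throughout the section, however, $|\phi|=\sum_e\phi_e$: \cref{lem:expy} rewrites $\sum_e\phi_e(1-\cdots)$ as $|\phi|-\cdots$, and \cref{lem:expC} uses $\sum_v\phi_v=2|\phi|$. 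So neither your argument nor the paper's proves the stated constant. What Chebyshev actually gives is $\Exp[|B|]\le\Exp[|B^+|]\le\varepsilon_5\sum_v\phi_v=2\varepsilon_5|\phi|$.

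The part of your proposal that does not work is the convention switch. You cannot redefine $|\phi|$ inside this one lemma. The identity $\sum_v\phi_v=2|\phi|$ in \cref{lem:expC} is exactly what fixes $|\phi|=\sum_e\phi_e$, so it contradicts your convention rather than supporting it. A consistent relabeling across the whole section would merely restate the true bound $\varepsilon_5\sum_v\phi_v$ in different notation.

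The correct resolution is your other option: state the lemma with $2\varepsilon_5$ in place of $\varepsilon_5$ and propagate it. The two final displays of the section become $(1+\varepsilon_4)\Exp[|y|]\ge(1-2\varepsilon_5)(|\phi|+\Exp[|\psi|])$ and $\Exp[|C|]\le(2+2\varepsilon_5)(|\phi|+\Exp[|\psi|])$. This gives $\epsilon=\frac{(2+2\varepsilon)(1+2\varepsilon)}{1-2\varepsilon}-2=\frac{10\varepsilon+4\varepsilon^2}{1-2\varepsilon}\le 22\varepsilon$ for $\varepsilon\le 1/4$, so \cref{thm:VCInPolyRrounds} still follows after rescaling $\varepsilon$. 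You are right that Cantelli cannot supply the halving, and no sharper tail bound is needed.
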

\begin{proof}
    The first inequality is immediate from the definition of the two sets.
    We prove the second inequality.
    Consider any vertex $v\in V$.
    Let $d$ be the degree of $v$ in the base graph $G$, and let $e_1,\ldots,e_d$ be its incident edges.
    For $1\leq i\leq d$, 
    \[
        \Exp[\chi_{e_i}]
        =
        \frac{\phi_{e_i}}{p_{e_i}}\cdot p_{e_i} 
        =
        \phi_{e_i}
        ,
    \]
    and
    \[
        \Var[\chi_{e_i}] 
        = 
        \left(\frac{\phi_{e_i}}{p_{e_i}}\right)^2p_{e_i}(1-p_{e_i}) 
        \leq \frac{\phi_{e_i}^2}{p_{e_i}}
        \leq
        \varepsilon_1 \phi_{e_i}
        ,
    \]
    where the last inequality is obtained using \cref{lem:keren}.
    It follows that 
    $
        \Exp[\chi_v] = \phi_v, 
    $    
    and 
    \[
        \Var[\chi_v] = \sum_{i=1}^d \Var[\chi_{e_i}] \leq \sum_{i=1}^d \varepsilon_1 \phi_{e_i}
        = \varepsilon_1 \phi_{v}
        .
    \]
    Then the probability that vertex $v$ is in $B^+$ is 
    \begin{align*}
        \Pr(v\in B^+)
        &=
        \Pr(\chi_v > \phi_v + \varepsilon_2 - \varepsilon_1)
        \\&
        \leq
        \Pr(|\chi_v - \phi_v| > \varepsilon_2 - \varepsilon_1)
        \\&
        \leq
        \frac{\Var[\chi_v]}{(\varepsilon_2 - \varepsilon_1)^2}
        \\&
        \leq
        \frac{\varepsilon_1\phi_v}{(\varepsilon_2 - \varepsilon_1)^2}
        .
    \end{align*}
    Finally,
    \[
        \Exp[|B^+|]
        =
        \sum_{v\in V} \Pr(v\in B^+)
        \leq
        \frac{\varepsilon_1\cdot |\phi|}{(\varepsilon_2 - \varepsilon_1)^2} 
        .
        \qedhere
    \]
\end{proof}

From \cref{lem:expy,lem:expB},
\begin{align*}
    (1+\varepsilon_4)\cdot \Exp[|y|]
    &\geq
    |\phi|
    +
    \Exp[|\psi|]
    -
    \Exp[|B^+|] 
    \\&
    \geq
    (1 - \varepsilon_5)\cdot |\phi|
    +
    \Exp[|\psi|]
    \\&
    \geq
    (1 - \varepsilon_5)\cdot (|\phi|
    +
    \Exp[|\psi|])
    .
\end{align*}    
Similarly, from \cref{lem:expC,lem:expB},
\begin{align*}
    \Exp[|C|]
    &\leq
    2|\phi|
    +
    2\Exp[|\psi|]
    +
    \Exp[|B|]
    \\&
    \leq
    \left(2 + \varepsilon_5\right)\cdot (|\phi|
    +
    \Exp[|\psi|]).
\end{align*}
Combining the two inequalities yields
\[
    \Exp[|C|] \leq \left(2 + \varepsilon_5\right) \cdot \frac{1+\varepsilon_4}{1 - \varepsilon_5}\cdot \Exp[|y|]
    ,
\]
which implies \cref{eq:expCub}, for 
$\epsilon = \frac{(2 + \varepsilon_5)(1+\varepsilon_4)}{1 - \varepsilon_5}-2$.
This completes the proof of \cref{lem:FMy}.
From \cref{lem:FMy}, it follows that $C$ is a $(2+\epsilon)$-approximation of the optimal matching of $G^\ast$ as argued in \cite{BehnezhadBD22}, since any fractional matching is smaller than the minimum vertex cover.

\section{Constant Approximation for Maximum Matching in 2 Rounds}
\label{sec:mmTwoRounds}

We give an algorithm that finds a matching of $G^\ast$ with the following properties.

\constMatchingTwoRounds*

The algorithm takes a matching algorithm proposed in \cite{BehnezhadBD22} for bipartite graphs and extends it to general graphs.
The approximation ratio of that algorithm is $1-1/e=0.63212$. 

Let $\MM$ be any matching algorithm, and let $\MM(H)$ denote the output of the algorithm on input graph $H$. 
For an edge $e=uv$, we use notation $e-u$ to denote $v$. 

\paragraph{Distributed Algorithm} 
\begin{enumerate}
    \item ({\bf Preprocessing Stage}) The following two tasks are performed by each $v\in V$. (The value of $\alpha$ will be picked suitably later.)
    \begin{enumerate}
        \item  Vertex $v$ chooses to be \emph{active} with probability $\alpha$, or \emph{passive} with probability $1-\alpha$. We denote the set of active vertices by $A$ and the set of passive vertices by $P$. 

    \item Each $v\in V$ communicates its choice to its neighbors in $G$, and thus it learns its set of active and passive neighbors in $G$.
    \end{enumerate}
    \item ({\bf First Round}) At this stage, each $v\in V$ is made aware of $N_{G^\ast}(v)$. If $v\in A$, then it may `propose' to at most one neighbor:
    \begin{enumerate}
    
    \item Each vertex $v\in A$ generates a random hallucination 
    
    $G_v^\ast$ of $G$ as follows.
    For every edge $vu\in E(G)$ where $u$ is passive, $vu \in E(G^\ast_v)$ if and only if $vu \in E(G^\ast)$; for every other $e \in E(G)$, edge $e$ is added to $E(G^\ast_v)$ independently with probability $p_{e}$. 
    
    \item \label{item:2b}
    Next each $v\in A$ locally computes $\MM(G^{\ast}_{v})$, and if $vu \in \MM(G^{\ast}_{v})$ for some $u\in N_{G^\ast}(v)$,
    then $v$ sends a proposal to $u$.
    
    \end{enumerate}
    
    \item ({\bf Second Round}) In this round, vertices accept or reject proposals they receive:
    \begin{enumerate}
    \item If $v\in A$, then $v$ rejects all proposals it receives.
    \item If $v\in P$    
    and $v$ received at least one proposal in the first round, then it accepts an arbitrary one among the proposals, and notifies the sender $u$. Edge $vu$ is added to the matching $M$ output by the algorithm (and is reported by $v$ and $u$).   
    \end{enumerate}
\end{enumerate}

The algorithm clearly outputs a matching since each $v\in A$ makes a single proposal and rejects all proposals made to it; and each each $v\in P$ accepts an arbitrary proposal.
Next we analyze the approximation ratio.

Although in the algorithm only active vertices $v\in A$ hallucinate a graph $G^\ast_v$, for the analysis we will assume that passive vertices do that as well. 
Precisely, we assume each $v\in P$ generates a random hallucination $G^{\ast}_{v}$ by adding each edge $e\in E(G)$ to $G_{v}^{\ast}$ independently with probability $p_e$.

We define now some relevant random variables associated with the algorithm.

\begin{definition}[Random Variables] We will use the following random variables:
    \begin{enumerate}
        \item For each $u\in V$ let $S_u$ be the indicator random variable that is 1 if $u\in A$ and 0 if $u\in P$. 
        \item For each $e\in E(G)$ let $R_e$ be the random variable that is 1 if $e$ is realized in $G^\ast$ and 0 otherwise.
        \item For each $e\in E(G)$ and $v\in V$ let $R^v_{e}$ be the random variable that is 1 if $e$ is realized in $G^\ast_v$ and 0 otherwise.
        
    \end{enumerate}
\end{definition}



\begin{lemma}
    \label{lem:indep}
    All hallucinations $G_v^\ast$, for $v\in V$, are mutually independent and also independent of the set $A$ of active vertices. 
\end{lemma}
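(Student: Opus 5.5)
We prove the claim by writing the whole joint distribution as a product measure. The underlying independent sources of randomness are the activity indicators $(S_u)_{u\in V}$, the realization indicators $(R_e)_{e\in E(G)}$ of $G^\ast$, and the fresh coin flips used to hallucinate edges. For each $v$ and each edge $e$, let $C^v_e$ be an independent Bernoulli$(p_e)$ coin. These coins are independent of one another and of $(S_u)$ and $(R_e)$. For passive $v$, and for edges $e$ not of the form $vu$ with $u$ passive, the definition gives $R^v_e = C^v_e$. For active $v$ and $e=vu$ with $u$ passive, $R^v_e = R_e$.

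\textbf{Key structural observation.} Fix an edge $e=vu$. The realization $R_e$ is used by at most one hallucination, namely that of an endpoint which is active while the other endpoint is passive. If both endpoints are active, or both are passive, then $R_e$ is not copied into any hallucination. If exactly one endpoint is active, only that endpoint copies $R_e$; the passive endpoint either does not hallucinate this edge from $R_e$, or, for the analysis, uses its fresh coin $C^{u}_e$. This is the step requiring care. The bipartiteness of the active/passive partition guarantees that no single $R_e$ is shared by two hallucinations.

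\textbf{Conditioning argument.} Condition on an arbitrary assignment $s=(s_u)_{u\in V}$ of the activity indicators. Under this conditioning, each variable $R^v_e$ equals either $C^v_e$ or $R_e$, and which one is chosen is a deterministic function of $s$. By the observation above, the map $(v,e)\mapsto$ source variable is injective across all pairs $(v,e)$. Each coin $C^v_e$ is used only by the pair $(v,e)$ by construction, and each $R_e$ is used by at most one pair. Hence, conditionally on $S=s$, the family $\{R^v_e\}_{v,e}$ is a collection of distinct independent Bernoulli variables, and $R^v_e$ has mean $p_e$ in every case. These source variables are independent of $S$, so the conditional law does not depend on $s$.

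\textbf{Conclusion.} Conditionally on $S=s$, the tuple $(G^\ast_v)_{v\in V}$ is distributed as independent $G(p)$-style random subgraphs, one per vertex. Since this law is the same for every $s$, the tuple is independent of $A$, and the hallucinations are mutually independent, each distributed as a fresh realization of $G$.

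The main obstacle is stating the injectivity cleanly. We must check that the analysis convention for passive vertices (full fresh hallucination), together with the rule that an active vertex copies only edges toward passive neighbors, never lets two distinct hallucinations read the same $R_e$. This reduces to the fact that an edge with one active and one passive endpoint is copied only by its active endpoint.
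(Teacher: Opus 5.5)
Your proof is correct and follows the same idea as the paper: each hallucination draws on disjoint sources of randomness, since the bit $R_e$ of an active--passive edge is read only by its active endpoint, and all these sources are independent of $A$. Your conditioning on $S=s$ is a cleaner way to make this rigorous. It directly gives joint independence of the whole tuple $(G^\ast_v)_{v\in V}$ from $A$, even though which bits each vertex uses depends on $A$, a point the paper's argument passes over.
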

\begin{proof}
    For $v\in A$, consider $G^\ast_{v}$. The edges $vw\in E(G)$ are of two types: $w\in P$ or $w\in A$. 
    In the former case, $vw$ is realized consistently with its status in $G^\ast$. However, $G^\ast$ is independent of $A$ and $P$. 
    Similarly, in the latter case, $vw$ is realized independently with probability $p_{vw}$. 
    Thus, for all such $v\in A$, $G^\ast_{v}$ is independent of $A$. The argument for $v\in P$ is similar. In this case, all edges $vw\in G$ are realized independently with probability $p_{vw}$. Thus for all such $v\in P$, $G_{v}^{\ast}$ is independent of $A$.

    To argue about mutual independence of all $G^\ast_v$ for $v\in V$, we observe the following:
    \begin{itemize}
        \item For $v\in A$, $E(G^\ast_v)$ is determined by $\set{R_e}_{e\in N_{G^\ast}(v),e-v\in P}$ and $\set{R^v_e}_{e\in N_{G^\ast}(v),e-v\in A}$.
        \item For $v\in P$, $E(G^\ast_v)$ is determined by $\set{R^v_{e}}_{e\in N_{G^\ast}(v)}$. 
    \end{itemize}
Thus, each $v\in V$ uses disjoint random bits to sample $G^\ast_v$. Therefore, we can conclude that all $G^\ast_v$ are mutually independent. 
\end{proof}

\begin{lemma}
\label{lem:matching-const}
    The expected size of the matching $M$ computed by the distributed algorithm is 
    \[
        \Exp[|M|] \geq 2(1-\alpha) \cdot(1-e^{-\alpha})\cdot \Exp[|\MM(G^\ast)|]
        .
    \]
\end{lemma}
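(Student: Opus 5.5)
We bound, for each passive vertex $u$, the probability that $u$ receives at least one proposal. We then sum over $u$ and compare with $\Exp[|\MM(G^\ast)|]$. Since every passive vertex receiving a proposal is matched, $|M|$ equals the number of passive vertices that receive at least one proposal.

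For an edge $e=vu$, let $x_e=\Pr(e\in\MM(G^\ast))$, so that $\sum_e x_e=\Exp[|\MM(G^\ast)|]$.

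\textbf{Step 1: the probability of a single proposal.} Fix a passive vertex $u$ and an active neighbor $v$ in $G$. Vertex $v$ proposes to $u$ exactly when $vu\in\MM(G^\ast_v)$ and $vu\in E(G^\ast)$. Because $u\in P$, edge $vu$ is realized in $G^\ast_v$ if and only if it is realized in $G^\ast$, so the second condition is implied by the first. By \cref{lem:indep}, $G^\ast_v$ is independent of $A$ and is distributed exactly like $G^\ast$. Hence
\[
    \Pr(v \text{ proposes to } u \mid S_v=1,S_u=0)=x_{vu}.
\]

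\textbf{Step 2: independence across proposers.} Again by \cref{lem:indep}, the hallucinations $G^\ast_v$ for distinct $v$ are mutually independent. We also need them independent of $A$, since $A$ determines which coordinates of $G^\ast_v$ are tied to the shared $G^\ast$. Conditioned on the partition $(A,P)$, the events ``$v$ proposes to $u$'' for distinct $v\in A$ are therefore independent. Note that different active $v$ use disjoint edges $vu$ of $G^\ast$, so sharing $G^\ast$ introduces no dependence.

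\textbf{Step 3: bounding the probability that $u$ is matched.} Using $1-t\le e^{-t}$ and then averaging over $A$,
\begin{align*}
    \Pr(u \text{ matched})
    &=(1-\alpha)\,\Exp\Bigl[1-\prod_{v\in N_G(u)}(1-S_v x_{vu})\Bigr]\\
    &=(1-\alpha)\Bigl(1-\prod_{v}(1-\alpha x_{vu})\Bigr)\\
    &\ge(1-\alpha)\bigl(1-e^{-\alpha x_u}\bigr),
\end{align*}
where $x_u=\sum_{v}x_{vu}\le 1$.

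\textbf{Step 4: summing over vertices.} The function $t\mapsto 1-e^{-\alpha t}$ is concave and vanishes at $0$, so $1-e^{-\alpha t}\ge t\,(1-e^{-\alpha})$ for $t\in[0,1]$. Therefore
\[
    \Exp[|M|]\ge(1-\alpha)(1-e^{-\alpha})\sum_u x_u=2(1-\alpha)(1-e^{-\alpha})\,\Exp[|\MM(G^\ast)|],
\]
because every matching edge is counted once at each of its two endpoints.

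The main obstacle is Step 2. Rigorously establishing conditional independence of the proposal events given $A$ requires care about which randomness each hallucination uses, and Step 3 depends on it. I would handle this by conditioning on $A$ and appealing directly to the disjointness of the random bits identified in the proof of \cref{lem:indep}.
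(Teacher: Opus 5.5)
Your proof is correct and follows essentially the same route as the paper. The paper also writes $\Exp[|M|]$ as a sum over vertices of the probability of being passive and receiving a proposal. It bounds $\Pr(X_{u,v}=0)=1-\alpha\Pr(uv\in\MM(G^\ast))\le e^{-\alpha\Pr(uv\in\MM(G^\ast))}$ via \cref{lem:indep}, and finishes with the same chord/concavity fact, phrased as $\frac{1-e^{-\alpha x}}{x}$ being decreasing on $(0,1]$.

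The only cosmetic difference is that you condition on $A$ and then average. The paper instead directly treats the indicators $X_{u,v}$, as functions of $(S_u,G^\ast_u)$, as mutually independent and independent of $S_v$.
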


\begin{proof}
    For every vertex $v\in V$, let $\EE_v$ be the event that $v$ is passive and receives at least one proposal.
    Then  
    \[
        \Exp[|M|] = \sum_{v\in V} \Pr(\EE_v).
    \]
   
    For every pair $u,v\in V$, let $X_{u, v}$ be the indicator random variable of the event $u\in A\,\land\, uv \in\MM(G^\ast_u)$. 
    Note that the event $\EE_v$ is the same as the event 
    \[
        v\in P \,\land\, \sum_{u\in N_G(v)} X_{u,v}>0.
    \]

    We have that
    \begin{align*}
        \Pr(X_{u, v} = 0)
        &=
        1 - \Pr(X_{u, v} = 1)
        \\&
        =
        1 - \Pr(u\in A \land uv \in \MM(G_u^\ast))
        \\&
        =
        1 - \Pr(u\in A)\cdot \Pr(uv \in \MM(G_u^\ast))
        \\&
        =
        1 - \alpha\cdot \Pr(uv \in \MM(G^\ast))
        \\&
        \leq
        e^{- \alpha\cdot \Pr(uv \in \MM(G^\ast))}
        ,
    \end{align*}
    where in the third line we used that $G_u^\ast$ is independent of $A$ by \cref{lem:indep},
    and in the fourth line we used that $G^\ast$ and $G_u^\ast$ have the same distribution.
    Moreover,  since $X_{u, v}$ depends only on $S_u$, $G^\ast_u$,  \cref{lem:indep} implies that the random variables $X_{u, v}$, for $u\in N_G(v)$, are mutually independent, and also independent of $S_v$. 
    Then for every $v\in V$,
    \begin{align*}
        \Pr(\EE_v)
        &=
        \Pr\!\left(v\in P \land \sum_{u\in N_G(v)} X_{u, v} > 0\right)
        \\
        &=\Pr\!\left(v\in P\right)\cdot \Pr\!\left(\sum_{u\in N_G(v)}X_{u, v}>0\right)
        \\
        &=
        (1-\alpha)\cdot \left(1-\Pr\!\left(\sum_{u\in N_G(v)} X_{u, v} = 0\right)\right)
        \\
        &=
        (1-\alpha)\cdot \left(1-\prod_{u\in N_G(v)}\Pr(X_{u, v} = 0)\right)
        \\
        &\geq
        (1-\alpha)\cdot \left(1-e^{-\alpha\cdot \sum_{u\in N_G(v)}\Pr(uv \in \MM(G^\ast))}\right)
        ,
    \end{align*}
    where in the third line we used that the $X_{u, v}$ are independent of $S_v$; in the second-last line we used that the $X_{u, v}$, for $u\in N_G(v)$, are mutually independent; and in the last line we applied the previous inequality.
    The sum in the exponent in the last line above equals the probability $c_v$ that $v$ is matched in $\MM(G^\ast)$.Thus,
    \[
        \Pr(\EE_v) \geq
        (1-\alpha)\cdot \left(1-e^{-\alpha c_v}\right)
        =
        (1-\alpha)\cdot c_v \cdot \frac{1-e^{-\alpha c_v}}{c_v}
        \geq
        (1-\alpha)\cdot c_v \cdot (1-e^{-\alpha}),
    \]
    since the function $\frac{1-e^{-\alpha x}}{x}$ is decreasing and $c_v \leq 1$. 
    Thus we have,
    \[
        \Exp[|M|] 
        = 
        \sum_{v\in V} \Pr(\EE_v)
        \geq
        (1-\alpha) \cdot (1-e^{-\alpha})\cdot \sum_{v\in V}c_v
        =
        2(1-\alpha) \cdot (1-e^{-\alpha})\cdot \Exp[|\MM(G^\ast)|]
        .
        \qedhere
    \]
\end{proof}

For $\alpha = 1/2$, \cref{lem:matching-const} gives an approximation ratio of
\[
    1-e^{-1/2} \approx 0.393469.
\]
The optimal choice for $\alpha$ is $\alpha = 0.442854$, which gives a slightly better approximation ratio
\[
    2(1-\alpha)(1-e^{-\alpha}) \approx 0.398693
    ,
\]
which proves \Cref{thm:constMatchingTwoRounds}.

\section{Improved Approximation for Maximum Matching in \texorpdfstring{$\poly(1/\eps)$}{poly(1/eps)} Rounds}
\label{sec:matching-const-approx-polyeps}

We give a $\poly({1}/{\epsilon})$-round algorithm that computes a $(0.68-\epsilon)$-approximation when $G$ is an arbitrary graph. For the case of bipartite graphs, we give a better $\poly({1}/{\epsilon})$-round algorithm that computes a $(0.73-\epsilon)$-approximation.  

\MatchingInPolyRrounds*

We will use an existing result on distributed algorithms for $(1-\delta)$-approximate maximum matching. 

\begin{lemma}[\cite{Bar-YehudaCGS17}]\label{lem:localapproxmatching}
    There is a \congest algorithm $\mathcal{B}$ that computes a $(1-\delta)$-approximation to the maximum matching on any given graph $G$ with maximum degree $\Delta$, in $O(\poly({1}/{\delta})\cdot \frac{\log \Delta}{\log\log \Delta})$ rounds. 
\end{lemma}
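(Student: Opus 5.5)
Since this lemma is quoted from \cite{Bar-YehudaCGS17}, the plan is to reconstruct the argument there. It has two layers: a fast constant-approximation routine, and an augmenting-path framework that boosts the approximation to $(1-\delta)$.

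\textbf{Step 1 (fast fractional matching and rounding).} First I would give an $O(\log\Delta/\log\log\Delta)$-round algorithm for a constant-factor (in fact $(2+\delta)$-) approximate maximum matching, in the weighted setting. The idea is a local-ratio or water-filling fractional matching in which weights are not raised additively. Instead, every active edge multiplies its value by a factor of about $\log\Delta/\log\log\Delta$ per round, starting from $1/\Delta$. Vertices whose fractional load reaches a constant freeze their incident edges. This yields a fractional matching that is within a constant factor of a fractional vertex cover (the frozen vertices). The aggressive growth rate is exactly what makes the round count $\log_{\log\Delta/\log\log\Delta}\Delta = O(\log\Delta/\log\log\Delta)$. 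I would then round the fractional matching to an integral one, losing only a constant factor. Each vertex samples incident edges proportionally to their values, and conflicts are resolved locally in $O(1)$ rounds. Repeating on the residual graph $O(1)$ times gives a constant approximation, or a $(2+\delta)$-approximation at a $\poly(1/\delta)$ overhead.

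\textbf{Step 2 (boosting via short augmenting paths).} Next I would apply the Hopcroft--Karp style principle. If a matching $M$ has no augmenting path of length at most $2\lceil 1/\delta\rceil+1$, then $|M|\geq (1-\delta)|M^{\mathrm{opt}}|$. I would run $\poly(1/\delta)$ phases. In each phase, consider the conflict graph $\mathcal{C}$ whose nodes are the augmenting paths of length $\ell=O(1/\delta)$ with respect to $M$, two paths being adjacent when they share a vertex. Augmenting along a large set of vertex-disjoint paths is then the same as finding a large independent set in $\mathcal{C}$, i.e.\ a hypergraph matching of rank $\ell$. The Step~1 algorithm generalizes to hypergraphs of rank $\ell$, with approximation factor $O(\ell)$. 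So each phase removes a $1/\poly(1/\delta)$ fraction of the ``deficit'' $|M^{\mathrm{opt}}|-|M|$. After $\poly(1/\delta)$ phases the deficit is at most a $\delta$ fraction, for instance via a standard potential argument over the length classes of augmenting paths.

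\textbf{Step 3 (simulation in \congest).} One round on $\mathcal{C}$ must be simulated in $O(\ell)$ rounds of $G$. This is the step I expect to be the main obstacle. The number of augmenting paths through a vertex can be as large as $\Delta^{\ell}$, so paths cannot be enumerated explicitly with $O(\log n)$-bit messages. I would follow the aggregation trick: since all paths through a vertex carry equal multiplicative weights in the hypergraph water-filling, each vertex only needs aggregated sums over path counts and weights. These sums can be computed by dynamic programming along alternating layered paths of length $\ell$, with $O(\log n)$-bit numbers, in $O(\ell)$ rounds. Sampling a path for the rounding step can likewise be done by reverse sampling along these layers.

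The total round count then multiplies out to $\poly(1/\delta)\cdot O(\log\Delta/\log\log\Delta)$. Here $\log$ of the conflict degree $\Delta^{O(\ell)}$ contributes only an $O(\ell)$ factor, which is absorbed into $\poly(1/\delta)$.
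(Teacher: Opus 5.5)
The paper gives no proof of this lemma; it imports it from Bar-Yehuda, Censor-Hillel, Ghaffari and Schwartzman. Judged as a self-contained reconstruction, your outline has two steps that fail as written.

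\textbf{Step 1 does not give a constant approximation.} If every active edge is multiplied by $\beta\approx\log\Delta/\log\log\Delta$ per round, a vertex's load can jump from just below the freezing threshold to $\beta$ times the threshold in one round. Feasibility therefore forces either a threshold of about $1/\beta$ or a final rescaling by about $\beta$, and either way the guarantee is only $O(\beta)$. This is tight. For example, with threshold $1/\beta$, all vertices of a $d$-regular bipartite component with $d=\Delta/\beta^{t+1}$ freeze simultaneously at load $1/\beta$, although the component has a perfect matching. This is just the usual $\Delta^{1/k}$-approximation-in-$k$-rounds tradeoff, so a constant factor this way needs $\beta=O(1)$ and hence $\Theta(\log\Delta)$ rounds. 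Getting a constant factor in $O(\log\Delta/\log\log\Delta)$ rounds, matching the Kuhn--Moscibroda--Wattenhofer lower bound, is exactly the nontrivial local-ratio analysis in the cited line of work. Your sketch offers no substitute for it, and the same defect carries over to the rank-$\ell$ hypergraph version you use in Step 2. In this paper the lemma is applied only with $\Delta\le\epsilon^{-10}$, where an $O(\poly(1/\delta)\log\Delta)$ bound would suffice, so this gap affects the lemma as stated rather than its use here.

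\textbf{Step 3 breaks on non-bipartite graphs, which is more serious.} Layered dynamic programming aggregates over $M$-alternating \emph{walks} between free vertices, not over simple augmenting paths. In a general graph such a walk can go around an odd cycle and contain no augmenting path at all. Take edges $ab,bc,cd,de,ec,bz$ and $M=\{bc,de\}$: the walk $a,b,c,d,e,c,b,z$ alternates between the free vertices $a$ and $z$, yet $M$ is maximum. So your loads, freezing decisions and reverse sampling are all computed for the wrong hypergraph. The ``paths'' chosen in the rounding need not be augmentable, and the progress guarantee of Step 2 is lost. In bipartite graphs every such walk shortcuts to an augmenting path, so your idea can be made to work there. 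The standard remedy for general graphs is a random bipartition or color coding, as in Lotker, Patt-Shamir and Pettie. It keeps a fixed path of length $\ell$ only with probability $2^{-O(\ell)}$, and so costs a $2^{O(1/\delta)}$ factor in rounds. Avoiding that factor is precisely the difficulty behind a $\poly(1/\delta)$ dependence in \congest, and it is the part of the statement this paper actually relies on for its general-graph result. Your deficit argument in Step 2 is fine, but the claimed $\poly(1/\delta)\cdot\log\Delta/\log\log\Delta$ bound does not follow from Steps 1 and 3.
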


We will also use the following statement from \cite{DerakhshanS25}. 

\begin{lemma}[{\cite[Theorem 1]{DerakhshanS25}}]\label{lem:DS25}
 Let $\MM$ be any (potentially randomized) maximum matching algorithm. There exists an algorithm $\mathcal{A}$ that picks a subgraph $Q$ of $G$ with maximum degree $O(\frac{1}{\epsilon^5p})$ such that,
$\Exp[|\MM(Q^\ast)|]\geq (0.68-\epsilon)\cdot \Exp[|\MM(G^\ast)|]$. For the case when $G$ is bipartite, we can get an improved algorithm $\mathcal{A}'$ that picks a subgraph $Q$ of $G$ with maximum degree $O(\frac{1}{\epsilon^5p})$ such that $\mathds{E}[|\MM(Q^*)|]\geq (0.73-\epsilon)\cdot \mathds{E}[|\MM(G^*)|]$.
\end{lemma}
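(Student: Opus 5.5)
This statement is imported from \cite{DerakhshanS25}, so my plan is to invoke \cite[Theorem 1]{DerakhshanS25} directly rather than re-derive the constants $0.68$ and $0.73$. The work is in checking that its hypotheses and conclusion match what we use, and in recalling the construction of $Q$ enough to see why the degree bound holds.

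\textbf{Hypotheses.} In \cite{DerakhshanS25} the parameter $p$ is $\min_{e\in E} p_e$. Our application in \Cref{thm:MatchingInPolyRrounds} assumes $p_e=p$ for all edges, so the two readings of $p$ coincide. The algorithm $\mathcal{A}$ (resp.\ $\mathcal{A}'$) uses only $G$ and the probabilities $(p_e)$, and never the realization. Hence it can be run by every vertex in the preprocessing stage, and all vertices obtain the same $Q$ by fixing shared randomness or a canonical tie-breaking rule in advance.

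\textbf{Construction of $Q$ and the degree bound.} I would recall that $Q$ is the union of two edge sets.
\begin{itemize}
\item The \emph{crucial} edges, i.e.\ edges $e$ with $\Pr(e\in\MM(G^\ast))\ge \tau p$ for a threshold $\tau=\poly(\epsilon)$. For any vertex $v$ we have $\sum_{e\ni v}\Pr(e\in\MM(G^\ast))\le 1$. Hence $v$ has at most $1/(\tau p)$ crucial edges.
\item The non-crucial edges chosen by the algorithm. These are obtained by taking matchings on $R=O(\poly(1/\epsilon)/p)$ independently hallucinated realizations, restricted to non-crucial edges. Each matching adds at most one edge per vertex, so this part contributes degree at most $R$.
\end{itemize}
With $\tau$ and $R$ chosen as in \cite{DerakhshanS25}, the total degree is $O(\epsilon^{-5}p^{-1})$.

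\textbf{Approximation ratio.} The main obstacle is the bound $\Exp[|\MM(Q^\ast)|]\ge(0.68-\epsilon)\Exp[|\MM(G^\ast)|]$, and its bipartite analogue with $0.73$. This requires the delicate analysis of \cite{DerakhshanS25}: one combines the realized crucial edges with the sampled non-crucial edges into a large fractional matching of $Q^\ast$ (or one amenable to rounding), and the constants $0.68$ and $0.73$ arise from a factor-revealing optimization. I would not reproduce this part. I would cite it verbatim, noting only that it holds for an arbitrary, possibly randomized, maximum matching algorithm $\MM$, which is exactly the form used later.
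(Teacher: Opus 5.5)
Your proposal matches the paper: the paper gives no proof and imports the lemma directly from \cite[Theorem 1]{DerakhshanS25}. Your check that $p=\min_e p_e$ coincides with the uniform $p$ assumed in \Cref{thm:MatchingInPolyRrounds} is exactly the paper's footnote. Your sketch of how $Q$ is constructed (crucial edges plus hallucinated matchings) is not needed, since the cited theorem already supplies both the degree bound and the ratios $0.68$ and $0.73$. Nothing therefore depends on whether that recollection matches \cite{DerakhshanS25} exactly.
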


We now describe our algorithm $\alg$ which obtains a $(0.68-\epsilon)$-approximation to $\Exp[|\MM(G^\ast)|]$ in $\poly({1}/{\epsilon})$ rounds when $G$ is a general graph. Additionally, $\alg$ obtains a $(0.73-\eps)$-approximation in $\poly({1}/{\epsilon})$ when $G$ is bipartite. 
\begin{enumerate}
    \item ({\bf Preprocessing Stage}) We use $\mathcal{A}$ from \Cref{lem:DS25} to get a subgraph $Q$ of $G$ satisfying the statement of the lemma if $\mathcal{A}$ is non-bipartite. Otherwise, we use $\mathcal{A}'$ to pick $Q$. 
    \item ({\bf Distributed Stage}) We consider $Q^\ast$, drop all vertices $V_{\text{bad}}=\set{v\in V\mid \text{deg}_{Q^\ast}(v)\geq \frac{1}{\epsilon^{10}}}$ and proceed to run $\mathcal{B}$ (guaranteed by \Cref{lem:localapproxmatching}) with $\delta=\frac{\epsilon}{2}$ on $Q^\ast[V\setminus V_{\text{bad}}]$. Then, we output the matching $M$ computed by $\mathcal{B}$.
\end{enumerate}

Note that $\Exp[|\MM(Q^\ast)|]$ is large compared to $\Exp[|\MM(G^\ast)|]$, but we compute a matching in graph $Q^\ast[V\setminus V_{\text{bad}}]$ to ensure a round complexity that is independent of $p$. It remains to argue that $\Exp[|\MM(Q^\ast[V\setminus V_{\text{bad}}])|]\geq (1-\epsilon)\cdot \Exp[|\MM(Q^\ast)|]$. This will be the focus of the subsequent lemma. Let
$\theta=\frac{1}{\epsilon^{10}}$. 

\begin{lemma}\label{lem:boundingbadvertices1}
     Let $\epsilon<\frac{1}{2}$ and let $Q$ be a subgraph of $G$ such that maximum degree of $Q$ is upper bounded by $O(\frac
   {1}{\epsilon^5p})$. Suppose 
    \[
        V_{\text{bad}}\coloneqq\set{v\in V\colon \textup{deg}_{Q^\ast}(v)\geq \frac{1}{\epsilon^{10}}}.
    \]
    Let $X_v\coloneqq \textup{deg}_{Q^\ast}(v)$, for $v\in V$. Then $\Exp\left[|Q^\ast[V\setminus V_{\textup{bad}}]|\right]\geq \sum_{v\in V}\frac{\Exp[X_v]}{2}\cdot\left(1-\frac{8}{\theta}\right)$.
\end{lemma}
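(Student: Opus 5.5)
Here $|Q^\ast[V\setminus V_{\text{bad}}]|$ is the number of edges of $Q^\ast$ with both endpoints outside $V_{\text{bad}}$, and $\sum_v \Exp[X_v]/2 = \Exp[|Q^\ast|]$. So the plan is to show that the expected number of realized $Q$-edges incident to at least one bad vertex is at most $\frac{8}{\theta}\cdot\Exp[|Q^\ast|]$. By a union bound over endpoints, this number is at most $\sum_v X_v\mathds{1}_{v\in V_{\text{bad}}}$. It therefore suffices to prove, for every vertex $v$,
\[
    \Exp\bigl[X_v\mathds{1}_{X_v\geq\theta}\bigr]\leq \frac{4}{\theta}\,\Exp[X_v],
\]
since summing over $v$ gives $\frac{4}{\theta}\sum_v\Exp[X_v]=\frac{8}{\theta}\cdot\frac{\sum_v\Exp[X_v]}{2}$, which is exactly the claimed bound.

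Fix $v$ and write $\mu=\Exp[X_v]=\sum_{e\ni v,\,e\in Q}p_e$. Here $X_v$ is a sum of independent Bernoulli variables, and $\mu = O(1/\epsilon^5)$ because $\deg_Q(v)=O(\frac{1}{\epsilon^5 p})$ and every $p_e=p$ under the assumption of \Cref{thm:MatchingInPolyRrounds}. Since $\theta=\epsilon^{-10}$ is far larger than $\mu$, the event $X_v\geq\theta$ lies deep in the upper tail. I will use the second moment: $X_v\mathds{1}_{X_v\geq\theta}\leq X_v^2/\theta$, so
\[
    \Exp\bigl[X_v\mathds{1}_{X_v\geq\theta}\bigr]\leq\frac{\Exp[X_v^2]}{\theta}=\frac{\Var[X_v]+\mu^2}{\theta}\leq\frac{\mu+\mu^2}{\theta}.
\]
This gives the per-vertex bound $\frac{1+\mu}{\theta}\mu$. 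I will then check whether $1+\mu\leq 4$ holds, or instead use a version with a slightly larger constant.

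The main obstacle is that $\mu$ is only $O(\epsilon^{-5})$ rather than a constant, so the crude second-moment bound yields $\frac{O(\epsilon^{-5})}{\theta}$ instead of $\frac{8}{\theta}$. To fix this I would condition on a single edge. I write
\[
    \Exp\bigl[X_v\mathds{1}_{X_v\geq\theta}\bigr]=\sum_{e\ni v}p_e\Pr\bigl(X_v\geq\theta\mid e\in Q^\ast\bigr)\leq\sum_{e\ni v}p_e\Pr\bigl(X_v^{(-e)}\geq\theta-1\bigr),
\]
where $X_v^{(-e)}$ is $X_v$ with edge $e$ removed; by independence this is a sum of the remaining Bernoulli variables and has mean at most $\mu$. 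Markov's inequality gives $\Pr(X_v^{(-e)}\geq\theta-1)\leq\frac{\mu}{\theta-1}$, which again costs a factor $\mu$. Using Chebyshev or a Chernoff bound instead gives $\Pr(X_v^{(-e)}\geq\theta-1)\leq\frac{\mu}{(\theta-1-\mu)^2}$, which is far below $\frac{4}{\theta}$ because $\mu=O(\epsilon^{-5})$ while $\theta^2=\epsilon^{-20}$. With $\epsilon<1/2$ and the constant in $O(\cdot)$ absorbed, this yields $\Pr(X_v^{(-e)}\geq\theta-1)\leq\frac{4}{\theta}$. Summing over $e\ni v$ then gives $\frac{4}{\theta}\mu$ per vertex, and summing over all vertices, together with the reduction in the first paragraph, completes the proof.
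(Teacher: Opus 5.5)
Your proof is correct. Its reduction step is the same as the paper's: union bound over endpoints, $\Exp[|Q^\ast|]=\sum_v\Exp[X_v]/2$, so it suffices to show $\Exp[X_v\mathds{1}_{X_v\geq\theta}]\leq\frac{4}{\theta}\Exp[X_v]$. Where you differ is in bounding this truncated mean.

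The paper writes it as a tail sum $\sum_{j\geq\theta}\Pr(X_v\geq j)$. It bounds each term by Chebyshev as $\mu/(j-\mu)^2$, using $\mu=\Exp[X_v]\leq\epsilon^2\theta$, and then sums over dyadic blocks $[2^i\theta,2^{i+1}\theta]$.

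You instead use the size-biasing identity $\Exp[X_v\mathds{1}_{X_v\geq\theta}]=\sum_{e\ni v}p_e\Pr(X_v^{(-e)}\geq\theta-1)$. This is in fact an equality, by independence of the edge indicators. It pulls out the factor $\mu$ up front, so a single Chebyshev bound suffices. With $\mu\leq\epsilon^2\theta<\theta/4$ and $\theta>2^{10}$, that bound gives $\mu/(\theta-1-\mu)^2<0.45/\theta$, well below the needed $4/\theta$.

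Your route is shorter and has slack to spare. The paper's chain, as written, has two bookkeeping slips:
\begin{itemize}
\item Its tail-sum identity omits a term of order $\theta\Pr(X_v\geq\theta)$ (compare Lemma~\ref{lem:expectation}).
\item Its final sum $\sum_{i\geq0}2^i/(2^{i-1})^2$ equals $8$, not $4$.
\end{itemize}
So it does not literally yield the constant $8$ in the statement without a more careful summation, although the discrepancy is harmless downstream.

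The paper's route has one mild advantage: it uses only the bound $\Var[X_v]\leq\Exp[X_v]$, not the explicit decomposition of $X_v$ into independent indicators.

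Both arguments need $\mu$ to be at most a small constant fraction of $\theta$. The hypothesis $\deg_Q(v)=O(1/(\epsilon^5p))$ guarantees this only for $\epsilon$ small relative to the hidden constant. The paper handles this by asserting $\Exp[X_v]\leq\epsilon^{-8}$, which is the same absorption you make.
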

\begin{proof}
     Note that $\Exp[|Q^\ast|]=\sum_{v\in V} \frac{\Exp[X_v]}{2}$ and $\Exp[X_v]=\frac{1}{\epsilon^8}$ (since by \Cref{lem:DS25}, $\text{deg}_{Q}(v)\leq \frac{1}{\epsilon^8p}$).  We can conclude that
    \begin{align*}
        \Pr(X_v\geq \Exp[X_v]+k)\leq \frac{\Var[X_v]}{k^2}\leq \frac{\Exp[X_v]}{k^2},
    \end{align*}
where the last inequality is derived from the fact that $X_v$ is a sum of 0-1 independent random variables. With this in mind, we 
proceed to bound the term $\Exp[X_v\cdot\mathds{1}_{X_v\geq \theta}]$. This term is an upper bound on the expected number of edges incident on $V_{\text{bad}}$. Consequently, upper bounding this term will let us lower bound $\Exp\!\left[|Q^\ast[V\setminus V_{\text{bad}}]|\right]$. We have,
\begin{align*}
    \Exp[X_v\cdot \mathds{1}_{X_v\geq \theta}]&=\sum_{j\geq \theta }\Pr(X_v\geq j)\\
    &\leq \sum_{j\geq \theta} \frac{\Exp[X_v]}{(j-\Exp[X_v])^2}\\
    &\text{(by the discussion above)}\\
    &\leq \sum_{j\geq \theta} \frac{\Exp[X_v]}{(j-\epsilon^2\theta)^2}\\
    &\text{(since $\Exp[X_v]\leq \epsilon^2\theta$)}\\
    &\leq \Exp[X_v]\cdot \sum_{i\geq 0}\sum_{j\in [2^i\theta,2^{i+1}\theta]}\frac{1}{(2^{i}-\epsilon^2)^2\theta^2}\\
    &\leq \Exp[X_v]\cdot \sum_{i\geq 0} \frac{2^{i}\theta}{(2^i-\epsilon^2)^2\theta^2}\\
    &\leq \Exp[X_v]\cdot \sum_{i\geq 0} \frac{2^{i}}{(2^{i-1})^2\theta}\\
    &\text{(as $\eps<\frac{1}{2}$)}\\
    &\leq \frac{4\cdot \Exp[X_v]}{\theta}.
\end{align*}
Thus, we can conclude that
\begin{align*}
    \Exp[|Q^\ast[V\setminus V_{\text{bad}}]|]\geq \Exp[|Q^\ast|]-\sum_{v\in V}\Exp[X_v\cdot \mathds{1}_{v\geq \theta}]\geq \sum_{v\in V}\frac{\Exp[X_v]}{2}\cdot\left(1-\frac{8}{\theta}\right).&\qedhere  
\end{align*}
\end{proof}

Our second lemma gives a lower bound $\Exp[|\MM(Q^\ast)|]$.
\begin{lemma}\label{lem:lowerboundmatchingsize}
Let $\mathcal{M}$ be an maximum matching algorithm. Let $\epsilon<\frac{1}{2}$ and let $Q$ be a subgraph of $G$ such that maximum degree of $Q$ is upper bounded by $O(\frac
   {1}{\epsilon^5p})$. Suppose 
    \[
        V_{\text{bad}}\coloneqq\set{v\in V\colon \textup{deg}_{Q^\ast}(v)\geq \frac{1}{\epsilon^{10}}}.
    \] 
Then we have that $\Exp[|\MM(Q^\ast)|]\geq \Exp[|\MM(Q^\ast[V\setminus V_{\textup{bad}}])|]\geq \sum_{v\in V}\frac{\Exp[X_v]\cdot (1-\frac{8}{\theta})}{2\theta}$.
\end{lemma}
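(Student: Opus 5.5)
The plan is to prove the two inequalities separately.

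The first inequality, $\Exp[|\MM(Q^\ast)|]\geq \Exp[|\MM(Q^\ast[V\setminus V_{\textup{bad}}])|]$, holds pointwise. For every realization, $Q^\ast[V\setminus V_{\textup{bad}}]$ is a subgraph of $Q^\ast$. Any matching of a subgraph is a matching of the whole graph, and $\MM$ returns a maximum matching. Hence the maximum matching size of $Q^\ast$ is at least that of $Q^\ast[V\setminus V_{\textup{bad}}]$, and taking expectations gives the inequality.

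For the second inequality, I will use the standard fact that a graph $H$ with maximum degree at most $D$ has a maximum matching of size at least $|E(H)|/(2D-1)\geq |E(H)|/(2D)$. One way to see this is Vizing's theorem: the edges split into at most $D+1$ matchings, so some matching has at least $|E(H)|/(D+1)$ edges. Alternatively, every edge is adjacent to at most $2D-2$ other edges, so a maximal matching covers all edges using at most $2D-1$ edges per matching edge. Applied to $H=Q^\ast[V\setminus V_{\textup{bad}}]$, every remaining vertex $v\notin V_{\textup{bad}}$ has $\deg_{Q^\ast}(v)<\theta$, so the maximum degree of $H$ is less than $\theta$. Either bound then gives, pointwise,
\[
|\MM(H)|\geq \frac{|E(H)|}{2\theta}.
\]
The Vizing bound gives $|E(H)|/\theta$ for $\theta\geq 1$, but the weaker constant $2\theta$ suffices here.

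Taking expectations and using linearity, $\Exp[|\MM(H)|]\geq \Exp[|E(H)|]/(2\theta)$. I will then substitute the lower bound from \cref{lem:boundingbadvertices1},
\[
\Exp[|E(H)|]\geq \sum_{v\in V}\frac{\Exp[X_v]}{2}\left(1-\frac{8}{\theta}\right).
\]
This yields $\sum_{v}\Exp[X_v](1-8/\theta)/(4\theta)$, which is a factor $2$ short of the claimed bound. So the factor-$2$ accounting is the main obstacle, and I will use the sharper degree bound to close it.

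Concretely, I will use $|\MM(H)|\geq |E(H)|/(\Delta(H)+1)$ from Vizing. Since degrees in $H$ are integers strictly below $\theta$, we have $\Delta(H)+1\leq \lceil\theta\rceil\leq\theta$ when $\theta$ is an integer, so $|\MM(H)|\geq |E(H)|/\theta$. If $\theta$ is not an integer, I will adjust to $\lceil\theta\rceil$, or note that $\Delta(H)\leq\theta-1$ still gives $\Delta(H)+1\leq\theta$. This gives exactly
\[
\Exp[|\MM(H)|]\geq \sum_{v\in V}\frac{\Exp[X_v]\,(1-8/\theta)}{2\theta},
\]
which is the claimed bound.
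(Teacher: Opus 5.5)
Your argument for the first inequality is fine. For the second, your route differs from the paper's. The paper runs the greedy matching on $H=Q^\ast[V\setminus V_{\text{bad}}]$ (repeatedly pick an edge and delete both endpoints) and asserts that it survives at least $\sum_v \Exp[X_v](1-8/\theta)/(2\theta)$ steps. Your factor-of-$2$ diagnosis applies directly to that argument. A step deletes up to $\deg_H(u)+\deg_H(v)-1\le 2\theta-1$ edges, so together with \Cref{lem:boundingbadvertices1} the greedy count gives only $\sum_v \Exp[X_v](1-8/\theta)/(2(2\theta-1))$, which is roughly a $4\theta$ denominator. Replacing the maximal-matching count by the Vizing bound $|\MM(H)|\ge |E(H)|/(\Delta(H)+1)$ is what recovers the missing factor. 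So your route delivers the stated constant where the paper's greedy count does not. The constant is immaterial downstream: even the $4\theta$ version still yields the $1-\epsilon^7$ factor in \Cref{lem:boundingbadvertices} for $\epsilon<1/2$.

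The genuine gap is your final step. From $\deg_H(v)<\theta$ and integrality you only get $\Delta(H)\le\lceil\theta\rceil-1$. When $\theta=\epsilon^{-10}$ is not an integer, this equals $\lfloor\theta\rfloor>\theta-1$. So ``$\Delta(H)\le\theta-1$'' does not follow, since a vertex of degree $\lfloor\theta\rfloor$ is allowed, and Vizing then gives only $|E(H)|/\lceil\theta\rceil$.

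No sharper pointwise bound in terms of $|E(H)|$ and the degree cap can fix this. If $\theta$ is not an integer and $d=\lfloor\theta\rfloor$ is even, then $H=K_{d+1}$ has all degrees below $\theta$, yet $|\MM(H)|=|E(H)|/(d+1)<|E(H)|/\theta$. Hence, as written, you prove the lemma for integral $\theta$. In general you prove the variant with $\lceil\theta\rceil\le\theta+1$ in place of $\theta$ in the denominator, which again suffices for \Cref{lem:upperboundVbad} and \Cref{lem:boundingbadvertices}.

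To obtain the statement exactly, you have two options. You can assume $\epsilon^{-10}$ is an integer, which is harmless since $\epsilon$ may be decreased slightly. Alternatively, you can extract the extra factor $\lceil\theta\rceil/\theta\le 1+1/\theta$ from a sharper estimate of $\Exp[|E(H)|]$ than the one recorded in \Cref{lem:boundingbadvertices1}, whose tail computation has room to spare.
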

\begin{proof}
    We create a matching $M$ of $Q^\ast[V\setminus V_{\text{bad}}]$ with expected size being equal to the RHS as follows. Initially, $M=\emptyset$ and $H=Q^\ast[V\setminus V_{\text{bad}}]$. In each step, we consider an arbitrary edge $e=uv\in H$ and let $M\leftarrow M\cup \{e\}$. Additionally, we update $H$ by removing from $H$ the vertices $u,v$ and all the edges incident on them. We finalize $M$ when $H=\emptyset$. Since $v\in V\setminus V_{\text{bad}}$ has $\text{deg}_{Q^\ast}(v)\leq \theta$, $H=\emptyset$ after at least $\sum_{v\in V}\frac{\Exp[X_v]\cdot (1-\frac{8}{\theta})}{2\theta}$ steps in expectation. Thus, we have, 
    \begin{align*}
        \Exp[|\MM(Q^\ast)|]\geq \Exp[|\MM(Q^\ast[V\setminus V_{\text{bad}}])|]\geq \Exp[|M|]\geq \sum_{v\in V}\frac{\Exp[X_v]\cdot (1-\frac{8}{\theta})}{2\theta}.&\qedhere
    \end{align*}
\end{proof}

Our next lemma bounds $|V_{\text{bad}}|$.
\begin{lemma}\label{lem:upperboundVbad}
    We have $\Exp[|V_{\textup{bad}}|]\leq \frac{16\cdot\sum_{v\in V}\Exp[X_v]}{9\theta^2}\leq \left(\frac{32\cdot \Exp[|\MM(Q^\ast)|]}{9\theta}\right)\cdot\left(\frac{1}{1-\frac{8}{\theta}}\right)$.
\end{lemma}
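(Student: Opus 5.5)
The first inequality follows from a per-vertex bound combined with linearity of expectation. The second follows by comparing with the lower bound on $\Exp[|\MM(Q^\ast)|]$ from \Cref{lem:lowerboundmatchingsize}.

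\textbf{First inequality.} We write
\[
\Exp[|V_{\text{bad}}|]=\sum_{v\in V}\Pr(X_v\geq\theta)
\]
and bound each term by Chebyshev, as in the proof of \Cref{lem:boundingbadvertices1}. Since $X_v$ is a sum of independent indicators, $\Var[X_v]\leq\Exp[X_v]$. The maximum degree of $Q$ is $O(1/(\epsilon^5 p))$, so $\Exp[X_v]\leq \epsilon^2\theta$ (absorbing constants as the paper does). Hence $\theta-\Exp[X_v]\geq(1-\epsilon^2)\theta\geq \tfrac34\theta$ for $\epsilon<\tfrac12$. Chebyshev then gives
\[
\Pr(X_v\geq\theta)\leq\Pr\bigl(X_v-\Exp[X_v]\geq \theta-\Exp[X_v]\bigr)\leq\frac{\Exp[X_v]}{(\tfrac34\theta)^2}=\frac{16\,\Exp[X_v]}{9\theta^2}.
\]
Summing over $v$ yields the first inequality. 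The point to be careful about is using the relative bound $\Exp[X_v]\leq\epsilon^2\theta$ rather than an exact value of $\Exp[X_v]$, so that $(1-\epsilon^2)\geq 3/4$ produces exactly the constant $16/9$.

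\textbf{Second inequality.} \Cref{lem:lowerboundmatchingsize} gives
\[
\sum_{v\in V}\Exp[X_v]\leq \frac{2\theta\,\Exp[|\MM(Q^\ast)|]}{1-\frac8\theta}.
\]
Substituting this into $\frac{16}{9\theta^2}\sum_v\Exp[X_v]$ gives
\[
\frac{16}{9\theta^2}\cdot\frac{2\theta\,\Exp[|\MM(Q^\ast)|]}{1-\frac8\theta}=\frac{32\,\Exp[|\MM(Q^\ast)|]}{9\theta}\cdot\frac{1}{1-\frac8\theta},
\]
which is the claimed bound. We need $\theta>8$ so that the factor $1-8/\theta$ is positive; this holds since $\theta=\epsilon^{-10}$ and $\epsilon<\tfrac12$.

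The argument is routine. The only real point of care is the Chebyshev step, specifically lower-bounding the deviation $\theta-\Exp[X_v]$ uniformly over all $v$.
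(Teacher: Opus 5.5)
Your proposal is correct and follows the paper's proof essentially step for step: Chebyshev with $\Var[X_v]\le\Exp[X_v]$ and $\Exp[X_v]\le\epsilon^2\theta$, the estimate $(1-\epsilon^2)^2\ge 9/16$ for $\epsilon<\frac12$, summation over $v$, and then substitution of the lower bound $\Exp[|\MM(Q^\ast)|]\geq \sum_{v}\Exp[X_v](1-\frac{8}{\theta})/(2\theta)$ from \Cref{lem:lowerboundmatchingsize}. Your explicit check that $\theta>8$, which makes the rearrangement legitimate, is a small detail the paper leaves implicit.
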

\begin{proof}
    We have
    \begin{align*}
        \Pr(X_v\geq \theta)\leq \frac{\Var[X_v]}{(\theta-\Exp[X_v])^2}\leq \frac{\Exp[X_v]}{(\theta-\epsilon^2\theta)^2}\leq \frac{16\cdot \Exp[X_v]}{9\theta^2}.
    \end{align*}
The last inequality follows from the fact that $\eps<\frac{1}{2}$. Thus, we have,
\begin{align*}
    \Exp[|V_{\text{bad}}|]=\sum_{v\in V}\Exp[\mathds{1}_{X_v\geq \theta}]\leq \frac{16\cdot\sum_{v\in V}\Exp[X_v]}{9\theta^2}\leq \left(\frac{32\cdot \Exp[|\MM(Q^\ast)|]}{9\theta}\right)\cdot\left(\frac{1}{1-\frac{8}{\theta}}\right)
    ,
\end{align*}
where the last inequality follows from \Cref{lem:lowerboundmatchingsize}.
\end{proof}

We can conclude the following.

\begin{lemma}\label{lem:boundingbadvertices}
Let $\MM$ be any maximum matching algorithm.   Let $\epsilon<\frac{1}{2}$ and let $Q$ be a subgraph of $G$ such that maximum degree of $Q$ is upper bounded by $O(\frac
   {1}{\eps^5p})$. Suppose 
    \[
        V_{\text{bad}}\coloneqq\set{v\in V\colon \textup{deg}_{Q^\ast}(v)\geq \frac{1}{\eps^{10}}}.
    \]
Then, $\Exp[|\MM(Q^\ast[V\setminus V_{\textup{bad}}])|]\geq (1-\eps^7)\cdot \Exp[|\MM(Q^\ast)|]$
\end{lemma}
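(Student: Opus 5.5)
The plan is to compare $\MM(Q^\ast)$ with $\MM(Q^\ast[V\setminus V_{\text{bad}}])$ pointwise, for every realization. Deleting a single vertex from a graph reduces its maximum matching by at most one. Hence, for every realization,
\[
|\MM(Q^\ast[V\setminus V_{\text{bad}}])| \geq |\MM(Q^\ast)| - |V_{\text{bad}}|.
\]
Taking expectations gives
\[
\Exp[|\MM(Q^\ast[V\setminus V_{\text{bad}}])|] \geq \Exp[|\MM(Q^\ast)|] - \Exp[|V_{\text{bad}}|].
\]
If $\MM$ is randomized or not exactly maximum, we can replace it by a maximum matching on the restricted graph, which only helps the left side. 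We can also use that the statement concerns maximum matchings, so the loss is still at most one per removed vertex.

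Next we invoke \Cref{lem:upperboundVbad}, which already bounds $\Exp[|V_{\text{bad}}|]$ relative to the matching size:
\[
\Exp[|V_{\text{bad}}|] \leq \frac{32}{9\theta}\cdot\frac{1}{1-8/\theta}\cdot \Exp[|\MM(Q^\ast)|].
\]
That lemma in turn relies on \Cref{lem:lowerboundmatchingsize}, which gives the lower bound $\Exp[|\MM(Q^\ast)|]\gtrsim \sum_v \Exp[X_v]/(2\theta)$, and on Chebyshev's inequality. Substituting yields
\[
\Exp[|\MM(Q^\ast[V\setminus V_{\text{bad}}])|]\geq \Bigl(1-\tfrac{32}{9\theta(1-8/\theta)}\Bigr)\Exp[|\MM(Q^\ast)|].
\]

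Finally we check the constants. With $\theta=\epsilon^{-10}$ and $\epsilon<1/2$ we have $\theta\geq 2^{10}$, so $1-8/\theta\geq 1/2$. The loss factor is therefore at most $64/(9\theta)\leq 8\epsilon^{10}$. Since $\epsilon<1/2$, we have $8\epsilon^{10}\leq 8\epsilon^{3}\cdot\epsilon^7\leq \epsilon^7$, which gives the claimed $(1-\epsilon^7)$ factor.

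The main obstacle is ensuring that the bound in \Cref{lem:upperboundVbad} is applicable as stated. Its derivation uses $\Exp[X_v]\le\epsilon^2\theta$, which comes from the degree bound on $Q$ times $p$, and this holds under the hypothesis. The only other delicate point is the one-vertex-deletion argument, which is a standard fact for maximum matchings. The remaining steps are arithmetic.
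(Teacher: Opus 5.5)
Your proposal is correct and is essentially the paper's proof: the paper also invokes \Cref{lem:upperboundVbad} and silently uses the fact that deleting a vertex lowers the maximum matching size by at most one, which you state explicitly, and your cruder estimate $1/(1-8/\theta)\le 2$ still gives $64/(9\theta)\le 8\epsilon^{10}\le\epsilon^7$. Your aside about a non-maximum $\MM$ has backwards logic, since enlarging the left side does not help you lower-bound it, but this does not matter because the lemma assumes $\MM$ always returns a maximum matching.
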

\begin{proof}
As a consequence of \Cref{lem:upperboundVbad}, we have that 
\begin{align*}
    \Exp[|\MM(Q^\ast[V\setminus V_{\textup{bad}}])|]&\geq \left(1-\frac{32}{9\theta}\cdot\left(\frac{1}{1-\frac{8}{\theta}}\right)\right)\cdot \Exp[|\MM(Q^\ast)|]\\
    &\geq \left(1-3.56\eps^{10}\cdot\left(1+16\eps^{10}\right)\right)\cdot \Exp[|\MM(Q^\ast)|]\\
    &\text{(By definition of $\theta$)}\\
    &\geq (1-8\eps^{10})\cdot\Exp[|\MM(Q^\ast)|]\\
    &\geq (1-\eps^{7})\cdot \Exp[|\MM(Q^\ast)|]\\
    &\text{(As $\eps<\frac{1}{2}$).}\qedhere
\end{align*}
\end{proof}
The above completes the proof of our algorithm, as follows.
\begin{proof}[Proof of \Cref{thm:MatchingInPolyRrounds} (General Graphs)]
Let $\alg$ be the candidate algorithm. If $G$ is a non-bipartite graph $\alg$ uses $\mathcal{A}$ to pick $Q$, which satisfies the premise of \Cref{lem:boundingbadvertices}. Thus, we have that 
\begin{align*}
    \Exp[|\MM(Q^\ast[V\setminus V_{\textup{bad}}])|]
    &\geq (1-\eps^{7})\cdot \Exp[|\MM(Q^\ast)|]
\end{align*}
Since we run $\mathcal{B}$ on $Q^{\ast}[V\setminus V_{\text{bad}}]$, with the goal of obtaining a $(1-\eps/2)$-approximate matching, we have,
\begin{align*}
    \Exp[|M|]\geq (1-\frac{\eps}{2})\cdot\Exp[|\MM(Q^\ast[V\setminus V_{\textup{bad}}])|]\geq (1-\frac{\eps}{2})\cdot (1-\eps^7)\cdot \Exp[|\MM(Q^\ast)|]\geq (1-\eps)\cdot \Exp[|\MM(Q^\ast)|].
\end{align*}
By the guarantees on $\Exp[|\MM(Q^{\ast})|]$ (from \Cref{lem:DS25}), the approximation ratio follows. The round complexity of the algorithm is $O(\poly({1}/{\epsilon}))$. This is implied by the fact that we run algorithm $\mathcal{B}$ (from \Cref{lem:localapproxmatching}) on $Q^\ast[V\setminus V_{\text{bad}}]$ with $\delta=\frac{\eps}{2}$ and $Q^\ast[V\setminus V_{\text{bad}}]$ has maximum degree $\frac{1}{\epsilon^{10}}$.

\end{proof}

\begin{proof}[Proof of \Cref{thm:MatchingInPolyRrounds} (Bipartite Graphs)]
Let $\alg$ be the candidate algorithm. If $G$ is bipartite graph, then $\alg$ runs $\mathcal{A}'$ to pick $Q$, which satisfies the premise of \Cref{lem:boundingbadvertices} and has an approximation guarantee of $0.73-\eps$. Thus, we have that 
\begin{align*}
    \Exp[|\MM(Q^\ast[V\setminus V_{\textup{bad}}])|]
    &\geq (1-\eps^{7})\cdot \Exp[|\MM(Q^\ast)|]
\end{align*}
Since we run $\mathcal{B}$ on $Q^{\ast}[V\setminus V_{\text{bad}}]$, with the goal of obtaining a $(1-\eps/2)$-approximate matching, we have,
\begin{align*}
    \Exp[|M|]\geq (1-\frac{\eps}{2})\cdot\Exp[|\MM(Q^\ast[V\setminus V_{\textup{bad}}])|]\geq (1-\frac{\eps}{2})\cdot (1-\eps^7)\cdot \Exp[|\MM(Q^\ast)|]\geq (1-\eps)\cdot \Exp[|\MM(Q^\ast)|].
\end{align*}
By the guarantees on $\Exp[|\MM(Q^{\ast})|]$ (from \Cref{lem:DS25} and using the fact that $G$ is bipartite), the approximation ratio follows. The round complexity of the algorithm is $O(\poly({1}/{\epsilon}))$. This is implied by the fact that we run algorithm $\mathcal{B}$ (from \Cref{lem:localapproxmatching}) on $Q^\ast[V\setminus V_{\text{bad}}]$ with $\delta=\frac{\eps}{2}$ and $Q^\ast[V\setminus V_{\text{bad}}]$ has maximum degree $\frac{1}{\epsilon^{10}}$.
\end{proof}

\section{Logarithmic Approximation for Minimum Dominating Set}
\label{sec:mds}

Let $\Delta$ be the maximum degree of base graph $G$ and let $\maxExpDeg$ be the maximum expected degree of $G^\ast$,
\[
    \maxExpDeg = \max_v \Exp[\deg_{G^\ast}(v)]
    .
\]
Note that $\maxExpDeg$
is {smaller} than the expected maximum degree  $\Exp[\max_v \deg_{G^\ast}(v)]$.

We give an algorithm that finds a dominating set $S$ of $G^\ast$, whose expected size is 
an $O(\log{\maxExpDeg})$-approximation to that of the optimal solution $S^\ast$ for $G^\ast$.

\MDSlogdelta*

At a high level, our aim is to mimic the standard greedy algorithm (e.g.~\cite{Johnson73,Lovasz75}) for computing a logarithmic approximation to the minimum dominating set. 
The preprocessing stage computes a ranking of the vertices, such that the $i$th vertex in the ranking, denoted $v_i$, is the one that \emph{in expectation} covers the largest number of vertices in $G^\ast$ not already covered by the previous $i-1$ vertices, $v_1,\ldots,v_{i-1}$. 
Then, after $G^\ast$ is realized, each vertex chooses its neighbor in $G^\ast$ (or itself) with the highest rank, and the set of all chosen vertices is the output of the algorithm. 

Recall that the textbook analysis of the standard greedy algorithm for a deterministic graph assigns to every vertex covered by the $i$th vertex $u$ in the greedy dominating set a cost of $1/d_i$, where $d_i$ is the total number of vertices that $u$ covers (and were not previously covered).   
Then it is shown that for each vertex $v$ in the optimal dominating set $S^\ast$, the sum of costs of all vertices adjacent to $v$ and including $v$ is at most $\sum_{j=1}^{\deg(v)+1} = \ln\deg(v) + O(1)$.

In our algorithm, vertices are ranked according to the expected number of vertices they cover (rather than the actual number).
This presents two challenges. 
On the one hand, there are vertices for which the actual number of vertices they cover is much smaller than expected (thus they have a higher rank than they should); we call such vertices \emph{bad}.
On the other hand, there are vertices for which the actual number of vertices they cover is much larger than expected (thus they have a lower rank than they should); we call such vertices \emph{costly}.
We deal with bad vertices by assigning a cost of zero to the vertices they cover, and showing that the total number of such bad vertices is small (compared to the optimal dominating set size).
We deal with costly vertices by assigning a cost of 1 to the vertices they cover, and showing that the total number of vertices covered by costly vertices is small.

Below we formally present the algorithm and its analysis.

\paragraph{Distributed Algorithm.}
\begin{enumerate}
    \item (\textbf{Preprocessing stage}) Order the vertices of $V$ inductively, as follows. Let $V_0=\emptyset$ and for every $i=1,\dots,n$ let $V_{i}=\{v_1,\dots,v_{i}\}$. Let $\tilde{G}$ be a random hallucination of $G$. For every $i=1,\dots,n$ and vertex $v$, let $\tilde{W}_{i}(v)$ be a random variable that represents the set of vertices in $\tilde{G}$ that are covered by $v$ which are not already covered by vertices in $V_{i-1}$. 
    The probability of a vertex $u\in N_G(v)$ to be in $\tilde{W}_{i}(v)$ is 
    $p_{uv} \cdot \prod_{w\in N_G(u)\cap V_{i-1}}{(1-p_{uw})}$, 

    since we need its edge to $v$ to get sampled in $\tilde{G}$ and its edges to any other vertex in $V_i$ to not get sampled. Similarly, the probability of $v$ to be in $\tilde{W}_{i}(v)$ is 
    $\prod_{w\in N_G(v)\cap V_{i-1}}(1-p_{vw})$.
    Let $\tilde{w}_{i}(v) = \Exp[|\tilde{W}_{i}(v)|]$.
    Then 
    \[
        \tilde{w}_{i}(v) = \prod_{w\in N_G(v)\cap V_{i-1}}(1-p_{vw})+\sum_{u\in N_G(v)\setminus V_{i-1}} p_{uv}
        \cdot
        \prod_{w\in N_G(u)\cap V_{i-1}}{(1-p_{uw})}.
    \]
    Let $v_i$ be a vertex $v$ that maximizes $\tilde{w}_{i}(v)$.\footnote{Throughout the algorithm and analysis, we break ties in any arbitrary consistent manner, say, by vertex identifiers.} We say that $i$ is the rank of vertex $v_i$. Note that if $i<i'$ then $\tilde{w}_{i}(v_i)\geq \tilde{w}_{i}(v_{i'})\geq \tilde{w}_{i'}(v_{i'})$.
    
    \item (\textbf{Communication round}) Each vertex $v$ selects the vertex $u\in N_{G^\ast}(v)\cup \{v\}$ whose rank is minimal and informs $u$ that it is selected.
    All selected vertices join the dominated set $S$ that the algorithm outputs.
\end{enumerate}

It is immediate that $S$ is indeed a dominating set of $G^\ast$, and that the algorithm has the round and message complexity stated in \cref{thm:MDSlogdelta}. 
Next we bound the approximation ratio.

For every vertex $v\in V$, let $W_i(v)$ be the set of vertices in $G^\ast$ that are covered by $v$ which are not already covered by vertices in 
$V_{i-1} = \{v_1,\ldots,v_{i-1}\}$, and let $w_i(v) = |W_i(v)|$.

In the following we assume without loss of generality that $\bar\Delta$ is larger than a sufficiently large constant (as we can artificially increase $\bar\Delta$ by adding to $G$ a disjoint star graphs with $\bar\Delta$ leaves and edge realization probabilities that are equal to 1).  

\begin{definition}[Bad Vertices and Costly Vertices]
    We say that vertex $v_i$ is \emph{bad} if 
    \[
        w_{i}(v_i) < (\tilde w_{i}(v_{i}) - 8\ln \maxExpDeg)/4.
    \]
    We say that $v_i$ is \emph{costly} if there is $i'< i$ such that 
    \begin{align}
    \label{eq:ugly-condition}
        w_{i'}(v_i) > 6\cdot(\lceil\tilde w_{i'}(v_{i'})\rceil + \log \maxExpDeg ),.
    \end{align}
    and denote $\nu_i$ the smallest such $i'$.

    Let $B$ and $\Ugly$ denote the set of bad vertices and the set of costly vertices, respectively.
\end{definition}

First we  show that the expected number of bad vertices is small.

\begin{lemma}[There are not Many Bad Vertices]
    \label{lem:bad}
    \[
        \Exp[|B|]
        \leq
        n / \maxExpDeg^2
        .
    \]
\end{lemma}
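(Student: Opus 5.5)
We bound $\Pr(v_i\in B)$ separately for each rank $i$ by $\maxExpDeg^{-2}$ and then sum over the $n$ vertices using linearity of expectation.

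\textbf{Structure of $w_i(v_i)$.} Fix $i$ and write $v=v_i$. The ordering $v_1,\dots,v_n$ is deterministic, since it is computed in preprocessing. Hence $W_i(v)$ has the same distribution as $\tilde W_i(v)$, and $\Exp[w_i(v)]=\tilde w_i(v)$. We write
\[
w_i(v)=\mathds{1}_{v\in W_i(v)}+\sum_{u\in N_G(v)\setminus V_{i-1}} Z_u ,
\]
where $Z_u$ is the indicator that $uv\in E^\ast$ and that no edge $uw$ with $w\in V_{i-1}$ is realized. Each $Z_u$ is determined by the edges incident to $u$ and going to $\{v\}\cup V_{i-1}$. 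For distinct $u,u'\notin V_{i-1}\cup\{v\}$ these edge sets are disjoint, because an edge $uu'$ between two such vertices is never examined. So the $Z_u$ are mutually independent. The term $\mathds{1}_{v\in W_i(v)}$ depends only on edges $vw$ with $w\in V_{i-1}$. These are disjoint from the edges examined by the $Z_u$, because $Z_u$ uses only edges from $u$ to $v$ and from $u$ to $V_{i-1}$, and $u\ne v$. Therefore $w_i(v)$ is a sum of independent Bernoulli variables with mean $\mu=\tilde w_i(v)$.

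\textbf{Chernoff bound.} We need $\Pr(w_i(v)<(\mu-8\ln\maxExpDeg)/4)\le \maxExpDeg^{-2}$.

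If $\mu\le 8\ln\maxExpDeg$, the threshold is nonpositive, so the probability is $0$.

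Otherwise, set $L=\ln\maxExpDeg$ and $\mu'=(\mu-8L)/4\le \mu/4$. Use the lower-tail Chernoff bound $\Pr(X\le \mu-t)\le e^{-t^2/(2\mu)}$ with $t=\mu-\mu'\ge 3\mu/4$. This gives $\Pr(X<\mu')\le e^{-9\mu/32}$. Since $\mu>8L$, we get $e^{-9\mu/32}\le e^{-9L/4}\le \maxExpDeg^{-2}$.

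A cleaner route is the multiplicative form $\Pr(X<\mu/4)\le e^{-(3/4)^2\mu/2}$. Either route leaves a slack factor, and this is where the constants $8$ and $4$ are used. I expect the main care here to be checking that the constants line up; if they are tight, I would switch to the relative-entropy form of Chernoff, $\Pr(X\le a)\le e^{-\mu}(e\mu/a)^a$.

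\textbf{Conclusion.} Summing gives $\Exp[|B|]=\sum_i\Pr(v_i\in B)\le n/\maxExpDeg^2$.

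The one subtle point is the independence claim. It relies on the ordering being fixed in advance, not chosen adaptively from $G^\ast$. It also relies on each realized edge being used by at most one summand. An edge $uv$ with $u\in V_{i-1}$ is excluded because such a $u$ does not appear in the sum. An edge between two vertices of $V_{i-1}$ does not affect $w_i(v)$ at all.
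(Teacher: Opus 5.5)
Your proof is correct and follows the paper's argument. You split on whether $\tilde w_i(v_i)\ge 8\ln\maxExpDeg$, apply the same Chernoff bound $\Pr(w_i(v_i)<\tilde w_i(v_i)/4)\le e^{-(3/4)^2\tilde w_i(v_i)/2}\le \maxExpDeg^{-9/4}$, and sum over the $n$ ranks; your explicit check that $w_i(v_i)$ is a sum of independent indicators (the ordering is fixed in preprocessing, and distinct summands examine disjoint edge sets) fills in a step the paper leaves implicit.
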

\begin{proof}
    Recall that vertex $v_i$ is bad if 
    $w_{i}(v_i) < (\tilde w_{i}(v_{i}) - 8\ln \maxExpDeg)/4$.
    Suppose first that $\tilde w_{i}(v_{i}) \geq 8\ln \maxExpDeg$.
    By a Chernoff bound
    \begin{align*}
        \Pr(w_{i}(v_i) < (\tilde w_{i}(v_{i}) - 8\ln \maxExpDeg)/4)    
        &\leq
        \Pr(w_{i}(v_i) < \tilde w_{i}(v_{i})/4) 
        \\&
        \leq
        e^{-(3/4)^2\tilde w_{i}(v_{i})/2}
        \\&
        <
        1/\bar\Delta^2
        ,
    \end{align*}
    since $\tilde w_{i}(v_{i}) \geq 8\ln \maxExpDeg$.
    If $\tilde w_{i}(v_{i}) < 8\ln \maxExpDeg$, the above inequality holds trivially.
    Thus, it follows that $\Exp[|B|]\leq n/ \maxExpDeg^2$.
\end{proof}

Next we show that costly vertices do not cover many vertices in expectation.
\begin{lemma}[Costly Vertices do not Cover Many Vertices]
    \label{lem:costly} Let $d_v = \deg_{G^\ast}(v)$ for any $v\in V$. 
    \[
        \Exp\left[\sum_{v \in \Ugly} d_v\right]
        \leq
        n/ \maxExpDeg^3
        .
    \]
\end{lemma}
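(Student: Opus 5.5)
The plan is to use linearity of expectation over vertices,
\[
\Exp\Big[\sum_{v\in \Ugly} d_v\Big]=\sum_{i=1}^n \Exp\big[d_{v_i}\cdot \mathds{1}_{v_i\in \Ugly}\big],
\]
and to show that each term is at most $1/\maxExpDeg^3$. For this it suffices to prove $\Pr(v_i\in\Ugly)=O(\maxExpDeg^{-6})$, and then to decouple $d_{v_i}$ from the indicator with a moment inequality.

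\textbf{Step 1: a tail bound for a single $i'$.} Fix $i$ and some $i'<i$, and write $\mu_{i'}=\lceil\tilde w_{i'}(v_{i'})\rceil+\log\maxExpDeg$.
\begin{itemize}
\item The quantity $w_{i'}(v_i)=|W_{i'}(v_i)|$ is a sum of indicators, one for $v_i$ and one for each $u\in N_G(v_i)\setminus V_{i'-1}$.
\item The indicator for $u$ depends only on the edges $uv_i$ and $uw$ with $w\in V_{i'-1}$. These edge sets are disjoint for distinct $u$, so the indicators are independent.
\item Its mean is $\tilde w_{i'}(v_i)$, which is at most $\tilde w_{i'}(v_{i'})$ by the greedy choice of $v_{i'}$.
\end{itemize}
The Chernoff bound $\Pr(X\ge R)\le 2^{-R}$, valid for $R\ge 6\Exp[X]$, then gives $\Pr\big(w_{i'}(v_i)>6\mu_{i'}\big)\le 2^{-6\mu_{i'}}$. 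Taking $\log$ to base $2$, this equals $2^{-6\lceil\tilde w_{i'}(v_{i'})\rceil}\,\maxExpDeg^{-6}$.

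\textbf{Step 2: avoid a union bound over all $i'<i$.} This is the main obstacle, since a naive union over up to $n$ indices loses a factor of $n$. The fix uses monotonicity.
\begin{itemize}
\item As noted in the algorithm, $\tilde w_{i'}(v_{i'})$ is non-increasing in $i'$, so the threshold $6\mu_{i'}$ is a non-increasing step function of $i'$.
\item $W_{i'}(v_i)$ only shrinks as $V_{i'-1}$ grows, so $w_{i'}(v_i)$ is also non-increasing in $i'$.
\end{itemize}
Group the indices $i'$ by the integer $k=\lceil\tilde w_{i'}(v_{i'})\rceil$. Within a group the threshold is constant, so condition \eqref{eq:ugly-condition} holds for some $i'$ in the group iff it holds at the smallest $i'$ of the group. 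Hence it is enough to union bound over one representative per value of $k\ge 0$:
\[
\Pr(v_i\in \Ugly)\le \sum_{k\ge 0}2^{-6k}\maxExpDeg^{-6}\le 2\maxExpDeg^{-6}.
\]

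\textbf{Step 3: decouple $d_{v_i}$ from the event.} Apply Hölder's inequality:
\[
\Exp\big[d_{v_i}\mathds{1}_{v_i\in\Ugly}\big]\le \Exp[d_{v_i}^4]^{1/4}\,\Pr(v_i\in\Ugly)^{3/4}.
\]
Here $d_{v_i}$ is a sum of independent Bernoulli variables with mean at most $\maxExpDeg$, so $\Exp[d_{v_i}^4]=O(\maxExpDeg^4)$, using $\maxExpDeg\ge 1$ (which holds by the WLOG assumption that $\maxExpDeg$ is large). The product is therefore $O(\maxExpDeg)\cdot O(\maxExpDeg^{-4.5})=O(\maxExpDeg^{-3.5})$. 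This is at most $\maxExpDeg^{-3}$ once $\maxExpDeg$ exceeds a sufficiently large constant, which is allowed by the WLOG assumption. Summing over the $n$ vertices gives the claimed bound $n/\maxExpDeg^3$.

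The step most likely to need care is Step 2, namely checking that the monotonicity of both sides of \eqref{eq:ugly-condition} really reduces the event to one check per integer level. If that reduction fails, the fallback is to restrict attention to the $O(\maxExpDeg)$ indices $i'$ at which $v_i$ still has neighbours. I would also double-check the constants in the Chernoff form used in Step 1.
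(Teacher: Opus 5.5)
Your proof is correct. Steps 1 and 2 are the same reduction the paper uses. The paper also shows that the first violating index $\nu_i$ must be the first index $j_k$ of some level $k=\lceil\tilde w_{i'}(v_{i'})\rceil$: the right side of \eqref{eq:ugly-condition} is constant on a level, while $w_{i'}(v_i)$ can only shrink as $i'$ grows. It then applies the same $2^{-R}$ Chernoff bound, using $\Exp[w_{j_k}(v_i)]\le k$.

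Where you genuinely differ is in how the factor $d_{v_i}$ is separated from the costly event.
\begin{itemize}
\item The paper partitions on the events $\nu_i=j_k$. For every level except the top one, the fact that index $1$ did not trigger the condition gives the deterministic bound $d_{v_i}\le 6(\lceil\maxExpDeg\rceil+\log\maxExpDeg)$. For the top level ($j_k=1$), the event is a tail event of $d_{v_i}$ itself, handled with the tail-sum formula of \cref{lem:expectation}. The paper then bounds each level's contribution by $1/\maxExpDeg^4$ and multiplies by the number of levels.
\item You instead bound $\Pr(v_i\in\Ugly)\le 2\maxExpDeg^{-6}$ as a whole. Summing the geometric factors $2^{-6k}$ over levels means you lose no factor for the number of levels, and the degenerate level $k=0$ is covered too. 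You then decouple via H\"older with a fourth-moment bound for a Bernoulli sum.
\end{itemize}

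What each route buys: yours is shorter and avoids both the case split and the appendix lemma. The cost is that the probability enters only to the power $3/4$, so you need the strong $\maxExpDeg^{-6}$ tail, which is exactly where reading $\log$ as $\log_2$ is used. The paper's conditioning pays only a linear factor $O(\maxExpDeg)$ for $d_{v_i}$ and needs no moment estimate.

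The fallback you mention at the end of Step 2 is unnecessary, since the level-representative argument already goes through.
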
    
\begin{proof}
    We define the set 
    \[
        K = \{\lceil\tilde w_{i}(v_{i})\rceil\colon 1\leq i\leq n\}
        \subseteq\{1,\ldots,\lceil\bar\Delta\rceil\},
    \]
    and for each $k\in K$ we let $j_k$ be the smallest $i$ with $\lceil\tilde w_{i}(v_{i})\rceil = k$. Note that $\lceil\bar\Delta\rceil\in K$ and $j_{\lceil\bar\Delta\rceil} = 1$.
    Observe also that if $v_i\in \Ugly$ then $\nu_i = j_k$ for some $k\in K$.
    Indeed, $\nu_i$ cannot take a value $i'\notin \{j_k\colon k\in K\}$, since otherwise for the $k$ such that $\lceil\tilde w_{i'}(v_{i'})\rceil = k$, we have that $j_k < i'$ and $j_k$ also satisfies \Cref{eq:ugly-condition}, as replacing $i'$ by $j_k$ leaves the right side of inequality \cref{eq:ugly-condition} unchanged and can only increase the left side.

    To simplify notation, we define $\nu_i = 0$ if $v_i \notin \Ugly$, so that now each $v_i\in V$ is associated with a value $\nu_i$.
    We have 
    \[
        \Exp\left[\sum_{v_i \in \Ugly} d_{v_i}\right]
        =
        \sum_{v_i \in V} \Exp\left[d_{v_i}  \cdot \mathds{1}_{v_i \in \Ugly} \right].
    \]
    For each $v_i\in V$,
    \[
        \Exp\left[d_{v_i}\cdot \mathds{1}_{v_i \in \Ugly}\right]
        =
        \sum_{k \in K\colon k\geq \tilde w_{i}(v_i)}
        \Exp[d_{v_i} \cdot \mathds{1}_{\nu_i = j_k}].
    \] 
    For any $k \in K\setminus\{\lceil\bar\Delta\rceil\}$ with $k\geq \tilde w_{i}(v_i)$,
    \begin{align*}
        \Exp[d_{v_i} \cdot \mathds{1}_{\nu_i = j_k}]
        &
        =
        \Exp[d_{v_i} \mid \nu_i = j_k]\cdot\Pr(\nu_i = j_k)
        \\&
        \leq
        6(\lceil\bar\Delta\rceil + \log\bar\Delta)\cdot\Pr(\nu_i = j_k)
        \\&
        \leq
        6(\lceil\bar\Delta\rceil + \log\bar\Delta)\cdot\Pr(w_{j_k}(v_i) 
        > 6(k+\log\bar\Delta)),
    \end{align*}
    where the first inequality holds because 
    $\nu_i = j_k$ and $k\neq\lceil\bar\Delta\rceil$ imply $d_{v_i} = w_1(v_i) \leq 6(\lceil\bar\Delta\rceil + \log\bar\Delta)$, and the second inequality holds because 
    $\nu_i = j_k$ implies $w_{j_k}(v_i) > 6(k+\log\bar\Delta)$.
    Similarly, for $k = \lceil\bar\Delta\rceil$, we have $j_k = 1$ and
    \begin{align*}
        \Exp[d_{v_i} \cdot \mathds{1}_{\nu_i = j_k}]
        &
        =
        \Exp[d_{v_i} \cdot \mathds{1}_{d_{v_i} > 6(\lceil\bar\Delta\rceil + \log\bar\Delta)}]
        \\&
        \leq
        \sum_{j \geq 6(\lceil\bar\Delta\rceil+\log\bar\Delta)}\Pr(d_{v_i} > j)
        \ +\ 6(\lceil\bar\Delta\rceil+\log\bar\Delta)\cdot \Pr(d_{v_i} 
        \geq 6(\lceil\bar\Delta\rceil+\log\bar\Delta))
        ,
    \end{align*}
    where the first equation holds because $\nu_i = j_k = 1$ implies $d_{v_1} = w_1(v_i) > 6(\lceil\bar\Delta\rceil+\log\bar\Delta)$, and the second equation is obtained using \cref{lem:expectation}.

    Next we bound the probabilities quantities in the last two equation above.
    We have $\Exp[w_{j_k}(v_i)] \leq \tilde w_{j_k}(v_{j_k}) \leq k$, thus for any $j\geq 6(k+\log\bar\Delta) > 6\cdot \Exp[w_{j_k}(v_i)]$, a Chernoff bound gives
    \[
        \Pr(w_{j_k}(v_i) \geq j) \leq 2^{-j}
        .
    \]
    In particular, for $k=\lceil\bar\Delta\rceil$ the above yields a bound for $\Pr(d_{v_i} \geq j)$.
    Substituting to the previous two equations (and performing some simple calculations) gives
    \[
        \Exp[d_{v_i} \cdot \mathds{1}_{\nu_i = j_k}]
        < 
        1/\bar\Delta^4,
    \]
    for both cases, for large enough $\bar\Delta$.
    
    Finally, combining the above we obtain
    \begin{align*}
        \Exp\left[\sum_{v_i \in \Ugly} d_{v_i}\right]
        &=
        \sum_{v_i \in V} \Exp\left[d_{v_i}  \cdot \mathds{1}_{v_i \in \Ugly} \right].
        \\&
        =
        \sum_{v_i \in V}         \sum_{k \in K\colon k\geq \tilde w_{i}(v_i)}
        \Exp[d_{v_i} \cdot \mathds{1}_{\nu_i = j_k}]
        \\&
        \leq
        n\cdot \lceil\bar\Delta\rceil \cdot 1/\bar\Delta^4\\
        &<
        n/\bar \Delta^3
        .
    \end{align*}
    This completes the proof of \cref{lem:costly}.
\end{proof}

We can now prove the approximation guarantee stated in  \cref{thm:MDSlogdelta}.
\begin{proof}[Proof of \Cref{thm:MDSlogdelta}]
    As mentioned earlier, $S$ is a dominating set of $G^\ast$ by construction, and the algorithm has the desired round and message complexity.
    It remains to analyze the size of $S$.

    Let $S^\ast$ be an optimal solution for $G^\ast$. 
    We will show that 
    \[
        \Exp[|S|]
        =
        O(\log\maxExpDeg) \cdot \Exp[|S^\ast|]
        .
    \]
    As in the textbook analysis of the greedy algorithm, we assign to each vertex a \emph{cost} for covering it by $S$, but we use a slightly different cost function.
    We denote by $\cost(u)$ the cost for covering a given vertex $u$. 
    Let $v_i$ be the vertex in $S$ that covers $u$, i.e., $v_i$ has the minimum rank $i$ among all vertices in $N_{G^\ast}(u)\cup\{u\}$.
    Then
    \[
        \cost(u)
        =
        \begin{cases}
            \frac{1}{w_i(v_i)}, & \text{ if } v_i\notin B \\
            0, & \text{ if } v_i\in B
            .
        \end{cases}
    \]
    Then
    \begin{equation} \label{eq:bound_on_s}
        |S| 
        \leq |B| + \sum_{v\in V}\cost(v).
    \end{equation}
    Next we lower bound $|S^\ast|$ in terms of the costs, by distinguishing between non-costly and costly vertices. 
    For each vertex $v\in S^\ast \setminus \Ugly$, by similar reasoning as in the standard proof, we get
    \begin{align*}
        \sum_{u\in N_{G^\ast}(v)\cup\{v\}}\cost(u)
        &\leq
        \sum_{1\leq j \leq |N_{G^\ast}(v)| + 1}\frac1{\max\{1,(j - 8\ln\bar\Delta)/2\}}
        \\&
        =
        O(\log(|N_{G^\ast}(v)|+1) + \log\bar\Delta)
        \\&
        =
        O(\log\bar \Delta),
    \end{align*}
    where for the last equation we used that $|N_{G^\ast}(v)| = O(\bar\Delta)$ since $v\notin \Ugly$.
    For any vertex $v \in S^\ast \cap \Ugly$, we use the trivial bound
    \begin{align*}
        \sum_{u\in N_{G^\ast}(v)\cup\{v\}}\cost(u)
        &\leq
        \deg_{G^\ast}(v) + 1
        .
    \end{align*}
    Combining the above two cases yields
    \[
        \sum_{v\in S^\ast}
        \sum_{u\in N_{G^\ast}(v)\cup\{v\}}\cost(u)
        \leq
        \sum_{v \in S^\ast\cap\Ugly}(\deg_{G^\ast}(v)+1) 
        +
        |S^\ast|\cdot O(\log\bar\Delta).
    \]
    From that and the bound on $|S|$ we computed earlier in \cref{eq:bound_on_s}, we have
    \begin{align*}
        |S| &\leq |B| + \sum_{v\in V}\cost(v)
        \\&
        \leq
        |B| + \sum_{v\in S^\ast}
        \sum_{u\in N_{G^\ast}(v)\cup\{v\}}\cost(u)
        \\&
        \leq
        |B| + \sum_{v \in S^\ast\cap\Ugly}(\deg_{G^\ast}(v) + 1)
        +
        |S^\ast|\cdot O(\log\bar\Delta),
    \end{align*}
    where the second inequality holds because $S^\ast$ is a dominating set and thus must cover all vertices.
    Taking the expectation and applying \cref{lem:costly,lem:bad} gives
    \[
        \Exp[|S|]
        \leq
        n/ \maxExpDeg^2
        +
        2n/ \maxExpDeg^3
        +
        \Exp[|S^\ast|]\cdot O(\log\bar\Delta)
        .
    \]
    To complete the proof we observe that 
    \[
        \Exp[|S^\ast|]
        \geq
        \frac{n - \sum_{v \in S^\ast\cap\Ugly}(\deg_{G^\ast}(v)+1)}
        {6(\lceil\bar\Delta\rceil + \log\Delta)}
        =
        \Omega\left(\frac n{\bar\Delta}\right)
        ,
    \]
    where the first inequality holds because all vertices not covered by costly vertices in $S^\ast$ must be covered by vertices of degree at most $6(\lceil\bar\Delta\rceil + \log\Delta)$; and to obtain the second equation we apply again \cref{lem:costly}.
    Combining the last two equations gives 
    $\Exp[|S|]\leq \Exp[|S^\ast|]\cdot O(\log\bar\Delta)$.
\end{proof}

\subsection*{Acknowledgments}

The authors are grateful to Dimitrios Los for several valuable discussions during the course of this work. 
The authors also thank the organizers of \href{https://www.dagstuhl.de/de/seminars/seminar-calendar/seminar-details/24471}{Dagstuhl Seminar 24471 ``Graph Algorithms: Distributed Meets Dynamic"}, where preliminary discussions regarding this work took place. 

Keren Censor-Hillel was supported in part by the Israel Science Foundation, grant No. 529/23. 
Aditi Dudeja was supported by the Austrian Science Fund (FWF): P 32863-N. 
This project has received funding from the European Research Council (ERC) under the European Union's Horizon 2020 research and innovation programme (grant agreement No 947702).
The project has also received funding from the Inria Associate Team DAME.

{\small
\bibliographystyle{alphaurl}
\bibliography{bibliography}
}

\appendix

\section{Appendix}

The following is a simple generalization of a standard formula for the expectation on an integer non-negative random variable.
\begin{lemma}
    \label{lem:expectation}
    If $X$ is a non-negative integer random variable and $\ell$ a non-negative integer then
    \[
        \Exp[X\cdot \mathds{1}_{X\geq \ell}]
        =
        \sum_{\ell'\geq \ell}\Pr(X>\ell')
        \ +\ \ell\cdot\Pr(X\geq \ell) 
        .
    \]
\end{lemma}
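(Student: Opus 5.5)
The identity in \cref{lem:expectation} follows by splitting $X\cdot\mathds{1}_{X\geq \ell}$ into a constant part $\ell$ and an excess part $X-\ell$, both restricted to the event $X\geq \ell$. Since $X$ is a non-negative integer, we can write pointwise
\[
    X\cdot \mathds{1}_{X\geq \ell}
    =
    \ell\cdot \mathds{1}_{X\geq \ell}
    +
    (X-\ell)\cdot \mathds{1}_{X\geq \ell}.
\]
Taking expectations, the first term contributes exactly $\ell\cdot\Pr(X\geq \ell)$.

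For the second term, we write the excess as a telescoping count of indicators. On the event $X\geq \ell$, the quantity $X-\ell$ equals the number of integers $\ell'\geq \ell$ with $\ell' < X$, that is,
\[
    (X-\ell)\cdot \mathds{1}_{X\geq \ell}
    =
    \sum_{\ell'\geq \ell} \mathds{1}_{X>\ell'}.
\]
This also holds off the event: if $X<\ell$, both sides vanish, because every $\ell'\geq \ell$ satisfies $\ell' \geq X$. All summands are non-negative, so by monotone convergence (Tonelli) we may exchange expectation and summation, which gives $\sum_{\ell'\geq \ell}\Pr(X>\ell')$. Adding the two contributions yields the claimed formula. This holds even when $\Exp[X]$ is infinite, since then both sides are $+\infty$.

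No step here is a real obstacle. The only points that need care are the boundary cases. When $\ell=0$, the second term is zero and the identity reduces to the standard tail-sum formula $\Exp[X]=\sum_{\ell'\geq 0}\Pr(X>\ell')$. We also need the strict inequality $X>\ell'$ in the indicator, so that the count equals exactly $X-\ell$ and is not off by one.
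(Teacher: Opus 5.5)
Your proof is correct: the pointwise identity $X\cdot\mathds{1}_{X\geq \ell} = \ell\cdot\mathds{1}_{X\geq \ell} + \sum_{\ell'\geq \ell}\mathds{1}_{X>\ell'}$ holds for integer $X,\ell\geq 0$, and together with Tonelli it gives the formula, including the case of infinite expectation. The paper gives no proof, only remarking that the lemma is a simple generalization of the standard tail-sum formula for the expectation, and your argument is exactly that generalization.
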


\end{document}